\spnewtheorem{assumption}[theorem]{Assumption}{\bfseries}{\itshape}
\newcommand{\word}[1]{\ensuremath{\boldsymbol{#1}}}
\newcommand{\bfa}{\word{a}}
\newcommand{\bfb}{\word{b}}
\newcommand{\bfc}{\word{c}}
\newcommand{\bfe}{\word{e}}
\newcommand{\bfg}{\word{g}}
\newcommand{\bfh}{\word{h}}
\newcommand{\bfm}{\word{m}}
\newcommand{\bft}{\word{t}}
\newcommand{\bfv}{\word{v}}
\newcommand{\bfx}{\word{x}}
\newcommand{\bfy}{\word{y}}
\newcommand{\bfeta}{\word{\eta}}
\newcommand{\bfzero}{\word{0}}
\newcommand{\mat}[1]{\ensuremath{\boldsymbol{#1}}}
\newcommand{\bfA}{\mat{A}}
\newcommand{\bfB}{\mat{B}}
\newcommand{\bfC}{\mat{C}}
\newcommand{\bfD}{\mat{D}}
\newcommand{\bfE}{\mat{E}}
\newcommand{\bfF}{\mat{F}}
\newcommand{\bfG}{\mat{G}}
\newcommand{\bfGt}{\bfG_{\ensuremath{\text{\bf T}}}}
\newcommand{\Gpub}{\bfG_{\ensuremath{\text{pub}}}}
\newcommand{\Gsec}{\bfG_{\ensuremath{\text{sec}}}}
\newcommand{\Gisec}{\tclosure{\bfG}{i}_{\ensuremath{\text{sec}}}}
\newcommand{\Gtpub}{\bfG_{\ensuremath{\text{Tpub}}}}
\newcommand{\bfH}{\mat{H}}
\newcommand{\Hpub}{\bfH_{\ensuremath{\text{pub}}}}
\newcommand{\bfI}{\mat{I}}
\newcommand{\bfM}{\mat{M}}
\newcommand{\bfP}{\mat{P}}
\newcommand{\bfQ}{\mat{Q}}
\newcommand{\bfR}{\mat{R}}
\newcommand{\bfS}{\mat{S}}
\newcommand{\bfT}{\mat{T}}
\newcommand{\bfU}{\mat{U}}
\newcommand{\bfV}{\mat{V}}
\newcommand{\bfX}{\mat{X}}
\newcommand{\bfY}{\mat{Y}}
\newcommand{\bfZ}{\mat{Z}}
\newcommand{\Moore}[2]{\mathbf{M}_{#1}(#2)}
\renewcommand{\ker}{{\rm Ker}}
\newcommand{\K}{\mathbb{K}}
\newcommand{\Fq}{\mathbb{F}_q}
\newcommand{\Fqm}{\mathbb{F}_{q^m}}
\newcommand{\Fqmn}{\mathbb{F}_{q^m}^n}
\newcommand{\code}[1]{\mathscr{#1}}
\newcommand{\CC}{\code{C}}
\newcommand{\DC}{\code{D}}
\newcommand{\EE}{\code{E}}
\newcommand{\GG}{\code{G}}
\newcommand{\Cpub}{\CC_{\rm pub}}
\newcommand{\Cipub}{\tclosure{\CC}{i}_{\ensuremath{\text{pub}}}}
\newcommand{\Csec}{\CC_{\rm sec}}
\newcommand{\Cisec}{\tclosure{\CC}{i}_{\ensuremath{\text{sec}}}}
\newcommand{\Ctsec}{\tclosure{\CC}{t}_{\ensuremath{\text{sec}}}}
\newcommand{\Gab}[2]{\code{G}_{#1}(#2)}
\newcommand{\TGab}[4]{\code{C}_{#1,#2,#3,#4}}
\newcommand{\Supp}{\textrm{Supp}}
\newcommand{\Rsupp}{\textrm{RowSupp}}
\newcommand{\dual}[1]{{#1}^{\perp}}
\newcommand{\RS}[3]{\ensuremath{\mathbf{RS}}_{#1}(#2)}
\newcommand{\dmin}[1]{\ensuremath{d_{\textrm{min}}}\left(#1\right)}
\newcommand{\Mspace}[2]{\mathcal{M}_{#1}(#2)}
\newcommand{\GL}[2]{\ensuremath{\mathbf{GL}}_{#1}(#2)}
\newcommand{\rk}[1][q]{\mathbf{rank}_{#1}}
\newcommand{\rank}{\mathbf{rank}}
\newcommand{\RowSp}{\mathbf{RowSp}}
\newcommand{\qpoly}{\mathcal{L}}
\newcommand{\qpolyd}[1]{\mathcal{L}^{<#1}}
\newcommand{\qpolydd}[1]{\mathcal{L}^{\leq#1}}
\newcommand{\qtwistedpoly}[5]{\mathcal{L}^{#1,#2}_{#3,#4,#5}}
\newcommand\smvee{\hbox{$\scriptscriptstyle\vee$}}
\newcommand{\adj}[1]{{#1}^{\raise0.9ex\smvee}}
\newcommand{\qdeg}{\deg_q}
\newcommand{\eqdef}{\stackrel{\textrm{def}}{=}}
\renewcommand{\span}{\mathbf{Span}}
\renewcommand{\leq}{\leqslant}
\renewcommand{\geq}{\geqslant}
\newcommand{\map}[4]{\left\{
    \begin{array}{ccc}
      #1 & \longrightarrow & #2\\
      #3 & \longmapsto & #4
    \end{array}\right.
}
\newcommand{\ie}{{\em i.e. }}
\newcommand{\prob}{\mathbf{Prob}}
\newcommand{\stabr}[1]{\textrm{Stab}_{\textrm{right}}(#1)}
\newcommand{\cond}[2]{\textrm{Cond}(#1, #2)}
\newcommand{\tclosure}[2]{\overline{#1}^{#2}}
\author{Alain Couvreur\inst{1,2} \and Ilaria
  Zappatore\inst{3}\thanks{The authors are funded by the French Agence
    Nationale de la Recherche through the France 2023 ANR project
    ANR-22-PETQ-0008 PQ-TLS and the ANR-21-CE39-0009-BARRACUDA.}}
\institute{Inria \and
  LIX, CNRS UMR 7161, \'Ecole Polytechnique,\\
  Institut Polytechnique de Paris,\\
  1 rue Honor\'e d'Estienne d'Orves\\
  91120 {\sc Palaiseau Cedex}, {\sc (France)}\\
  \email{alain.couvreur@inria.fr }\and
  XLIM, CNRS UMR 7252, Universit\'e de Limoges\\
  123, avenue Albert Thomas\\
  87060 {\sc Limoges Cedex}, {\sc (France)}\\
  \email{ilaria.zappatore@unilim.fr} }
\begin{document}
\title{An extension of Overbeck's attack with an application to
  cryptanalysis of Twisted Gabidulin-based schemes}
\maketitle
\begin{abstract}
  In this article, we discuss the decoding of Gabidulin and
  related codes from a cryptographic point of view, and we observe that
  these codes can be decoded solely from the knowledge of a generator
  matrix.  We then extend and revisit Gibson and Overbeck attacks
  on the generalized GPT encryption scheme (instantiated with the
  Gabidulin code) for different ranks of the distortion matrix.  We apply our attack to the case of an instantiation with twisted Gabidulin
  codes.
\end{abstract}

\keywords{Code-based cryptography \and rank metric codes \and
Gabidulin codes \and Overbeck's attack \and twisted Gabidulin codes}

%%% Local Variables:
%%% mode: latex
%%% TeX-master: "../Article"
%%% End:

\section*{Introduction}
The most promising post-quantum alternatives to RSA and elliptic
curve cryptography are based on error--correction based paradigms. The
metric which quantifies the amount of noise, can be either Euclidean
(lattice--based cryptography), Hamming (code--based cryptography) or
the rank metric.  The latter has been much less investigated than the
first two.  However, it offers an interesting range of
primitives with rather short keys
\cite{ABDGHRTZABBBO19,AABBBBCDGHZ20,ABGHZ19}. In addition, the
Gabidulin code family benefits from a decoder that corrects any error
up to a fixed threshold. This makes it possible to design schemes with
a zero failure rate, such as RQC \cite{AABBBBCDGHZ20}.  Although no
rank--based submission was selected for standardization, NIST
encouraged the community to continue the research efforts in the design and security of rank--metric based
primitives.

Historically, the first primitive based on rank metric was proposed by
Gabidu\-lin, Paramonov and Tretjakov \cite{GPT91}. It was a
McEliece--like scheme where the structure of a Gabidulin code is
hidden. This scheme was first attacked in exponential time by Gibson
\cite{G95,G96}. Then, Gabidulin and Ourivski proposed an improvement
of the system that was resistant to Gibson's attack \cite{GO01,OG03}. Later, Overbeck
\cite{O05,O08} proposed a polynomial time attack which breaks both GPT and
its improvements. Gabidulin {\em et. al.} then introduced several
variants of the GPT based on a different column scrambler $\bfP$, so
that some entries of $\bfP^{-1}$ can be in $\Fqm$
\cite{G08,GRH09,RGH11}. However, in \cite{OTN16} the authors proved
that for all of the aforementioned versions, the shape of the public key
is in fact unchanged and remains subject to Overbeck--type attacks.

The natural approach to circumvent Overbeck's attack is to replace the
Gabidulin codes with another family equipped with an efficient decoding
algorithm. However, only a few such families exist. On the one
hand, there are the LRPC codes \cite{GMRZ13} which lead to the ROLLO
scheme \cite{ABDGHRTZABBBO19}.  On the other hand, one can in a way
deteriorate the structure of the Gabidulin codes, at the cost of a
loss of efficiency of the decoder. Loidreau \cite{L10} proposed to
encrypt with a Gabidulin code perturbed by some $\Fqm$--linear
operation. This proposal was subject to polynomial--time
attacks for the smallest parameters \cite{CC20,G22,BL22}, while,  for
larger parameters, it remains secure so far. In another direction,
Puchinger {\em et. al.} \cite{PRW18} proposed to replace Gabidulin
codes with twisted Gabidulin codes. However their proposal was only
partial, since they could not provide an efficient decoder correcting up
to half the minimum distance.

\subsubsection*{Our contributions.}
The contribution we make in this article is threefold.

\smallskip

First, we discuss the decoding of Gabidulin codes and twisted Gabidulin
codes. Using the result of \cite{BC21}, we explain how to correct
errors for such codes without always being able to correct up to half of
the minimum distance.
From a cryptographic point of view, we highlight an important
observation: if in Hamming metric, decoding Reed--Solomon
codes requires the knowledge of the evaluation sequence, in the rank metric,
Gabidulin codes can be decoded solely from the knowledge of a
generator matrix. This observation extends to twisted
Gabidulin codes as soon as the decoding radius is below a certain
threshold.

\smallskip

Second, we revisit the Overbeck's attack and propose an extension.  Specifically, from a public code $\Cpub$, the original Overbeck's attack
is based on the computation of
$\Lambda_i (\Cpub) = \Cpub + \Cpub^q + \cdots + \Cpub^{q^i}$.  For the
attack to succeed, a trade--off on the parameter $i$ must be
satisfied. On the one hand, $i$ must be large enough to rule out the
random part (called {\em distortion matrix}) in $\Cpub$ used to mask
the hidden code. On the other hand, $i$ must be small enough so that
$\Lambda_i (\Cpub)$ does not to fill in the ambient space. In the
this article, we propose an extension of the Overbeck's attack that
limits our goal to the smallest possible $i$, namely $i = 1$. This
relaxation is based on calculations on a certain automorphism algebra
of the code $\Lambda_1(\Cpub)$ and extends the range of the attack.

\smallskip

Third, we investigate in depth the behavior of twisted Gabidulin codes
with respect to the $\Lambda_i$ operator.

\smallskip

The aforementioned contributions lead to an attack on a variant of GPT
proposed by Puchinger, Renner and Wachter--Zeh \cite{PRW18}.  In this
variant, the authors used two techniques to resist Overbeck's
attack. First, they mask the code with a {\em distortion matrix} of
very low rank. Second, they replace Gabidulin codes with twisted
Gabidulin codes. The authors chose twisted Gabidulin codes $\CC$ so
that for any positive $i$, the code $\Lambda_i(\CC)$ may never have
co-dimension $1$ (see \cite[Theorem 6]{PRW18}).
%The latter property was used to finish Overbeck's
%attack and reconstruct the hidden Gabidulin code.
In this article, we prove that the latter property is not a strong enough security assessment for twisted
Gabidulin codes and that
the aforementioned contributions lead directly to an attack on the
Puchinger {\em et. al.}'s variant of GPT.

\subsubsection*{Outline of the article.}
The article is organized as follows. Section~\ref{sec:preliminaries}
introduces some basic notations used in this paper, as well as
Gabidulin codes, their twisted version and the GPT cryptosystem. In
Section~\ref{sec:decoding} we first discuss the decoding of
Gabidulin codes and propose an algorithm
(Algorithm~\ref{algo:gabi_dec}), which does not need to know the
evaluation sequence. We then explain how to decode twisted
Gabidulin codes, under a certain decoding radius.  In
Section~\ref{sec: overbeck}, we revisit the Overbeck's attack on the
GPT scheme instantiated with Gabidulin codes and we make some remarks
on the structure of the generator matrix of the code obtained by
applying the $q$-sum operator to the public key.  In
Section~\ref{sec:extension}, we propose an extension of the Overbeck's
attack to the GPT scheme instantiated with either Gabidulin or
twisted Gabidulin codes. Finally, in Section~\ref{sec:dontTwist}
we examine the behavior of the $q$-sum operator applied to the public
key of the GPT system instantiated with twisted Gabidulin codes. We
then show that we can exploit the structure of its generator matrix to
break the corresponding scheme using either the Overbeck's attack, or
more generally, its previously proposed extension.

%%% Local Variables:
%%% mode: latex
%%% TeX-master: "../Article"
%%% End:

\section{Prerequisites}\label{sec:preliminaries}
In this section we introduce the basic notions we will use throughout the
paper, starting with the notations used. Then, we briefly
introduce the Gabidulin codes and their twisted version, and finally the
GPT cryptosystem.

\subsection{Notation}

Let $q$ be a prime power, $\Fq$ be a finite field of order $q$, and
$\Fqm$ be the extension field of $\Fq$ of degree $m$.  In this article, vectors are represented by lowercase bold letters: $\bfa, \bfb, \bfx$,
and matrices by uppercase bold letters $\bfM, \bfG, \bfH$. We also
denote the space of $m \times n$ matrices with entries in a general
field $\K$, by $\Mspace{m,n}{\K}$. In the square case, \ie $m=n$, we
simplify the notation by writing $\Mspace{n}\K$, and we denote by
$\GL{n}{\K}$ the group of $n \times n$ invertible matrices.

\subsection{Rank metric codes}
Given $\bfx=(x_1, \ldots, x_n)$ a vector in $\Fqmn$, we can define its
{\em support} as,
\[
\Supp(\bfx)\eqdef\span_{\Fq}\{x_1, \ldots, x_n\}
\]
and 
\[
\rk(\bfx)\eqdef\dim(\Supp(\bfx)).
\]
The \emph{rank distance} (briefly \emph{distance}) of two vectors
$\bfx, \bfy\in \Fqmn$ is
\[
	\textrm{d}(\bfx, \bfy)\eqdef\rk(\bfx-\bfy).
  \]  
A \emph{rank metric code} $\CC$ of \emph{length} $n$ and
\emph{dimension} $k$ is an $\Fqm$-vector subspace of $\Fqmn$. Its
\emph{minimum distance} is defined as,
\[
\textrm{d}_{min}(\CC)\eqdef\min_{\bfx \in \CC\setminus\{0\}}\{\rk(\bfx)\}.
\]

By choosing an $\Fq$-basis $\mathcal B$ of $\Fqm$, any codeword
$\bfc\in \CC$ can be written as a matrix
$\bfM_{\mathcal{B}}(\bfc) \in \Mspace{m,n}{\Fq}$ by representing any
element $c_i\in \Fqm$ as a column vector whose entries are its
coefficients in the basis $\mathcal B$.  With this point of view, one
can introduce a second notion of support which is less considered in
the literature but will be useful in the sequel.

\begin{definition}
  The {\em row support} $\Rsupp (\bfc)$ of a vector $\bfc \in \Fqm^n$
  is the row span of the $m \times n$ matrix
  $\bfM_{\mathcal B}(\bfc)$.
\end{definition}

Note that the row support of a vector is an intrinsic notion that does
not depend on the choice of the basis $\mathcal B$.  Moreover, as for
the support, the rank of a vector equals its row support.

\begin{remark}
One could have defined rank metric codes as spaces of
  matrices endowed with the same rank metric. Such a framework is more
  general than ours since a matrix subspace of $\Mspace{m,n}{\Fq}$
  is not $\Fqm$--linear in general. But considering such
  rank metric codes would be useless in what follows.
\end{remark}

Two codes $\CC, \DC \subseteq \Fqm^n$ are said to be {\em right
  equivalent} if there exists $\bfP \in \GL{n}{\Fq}$ such that for any
$\bfc \in \CC$, $\bfc \bfP \in \DC$. We denote this as
``$\CC \bfP = \DC$''. We emphasize that $\bfP$ should have its entries in
$\Fq$ and {\bf not} in $\Fqm$. In this way, the map
$\bfx \mapsto \bfx \bfP$ is rank--preserving, {\em i.e.} is an
isometry with respect to the rank metric.

Finally, the \emph{dual} $\dual{\CC}$ of a code $\CC\in \Fqmn$ is the
\emph{orthogonal} of $\CC$ with respect to the canonical inner product
in $\Fqm$,
\[
\map{\Fqm \times \Fqm}{\Fqm}{(\bfx, \bfy)}{\sum_{i=1}^n x_iy_i.}
\]
We frequently apply the \emph{component-wise Frobenius
  map} to vectors and codes:, given
$\bfc=(c_1, \ldots, c_n)\in \Fqmn$ and $0\leq i \leq m-1$, we denote
\[
\bfc^{[i]}\eqdef(c_1^{q^i}, \ldots, c_n^{q^i}).
\]
Given an $[n,k]$ code $\CC\subset \Fqmn$, we write
\[
\CC^{[i]}\eqdef\{\bfc^{[i]} \mid \bfc\in \CC\}.
\]
We also define the ($i$\emph{-th}) $q$-\emph{sum} of $\CC$ as,
\[
\Lambda_i(\CC)\eqdef\CC+\CC^{[1]}+\dots + \CC^{[i]}.
\]
We notice that if $\bfG \in \Mspace{k,n}{\Fqm}$ is a generator matrix
of $\CC$, the matrix
\begin{equation}\label{eq:lam(C)}
\begin{pmatrix}
\bfG\\
\bfG^{[1]}\\
\vdots\\
\bfG^{[i]}
\end{pmatrix}\in \Mspace{(i+1)k, n}{\Fqm}
\end{equation}
is a generator of the $q$-sum of $\CC$, \ie $\Lambda_i(\CC)$.  By
abuse of notation we sometimes denote the matrix of \eqref{eq:lam(C)} as
$\Lambda_i(\bfG)$.

\subsection{Gabidulin codes}
$q$-\emph{polynomials} were first introduced in \cite{O33}. They are
defined as $\Fqm$-linear combinations of the monomials
$X, X^q, X^{q^2},\dots, X^{q^i}, \dots$ respectively denoted by
$X, X^{[1]}, X^{[2]}, \ldots, X^{[i]}, \ldots$ Formally, a nonzero
$q$-polynomial $F$ is defined as,
\[
F = \sum_{i=0}^d f_iX^{[i]}
\]
assuming that $f_d\neq 0$. The integer $d$ is called
\emph{$q$--degree} of $F$ and we denote it $\qdeg f$.
We equip the space of $q$--polynomial with a non-commutative
algebra structure, where the multiplication law is the composition of
polynomials. In particular, the product law is given by the following relations
extended by $\Fqm$--linearity:
\[
  \forall i, j \in \mathbb N,\ \forall a \in \Fqm,
  \quad X^{[i]} X^{[j]} = X^{[i+j]}\quad \text{and}\quad
  X^{[i]} a = a^{q^i}X^{[i]}.
\]
Any $q$--polynomial $F$ induces an $\Fq$--endomorphism
$\Fqm \rightarrow \Fqm$ and the {\em rank of $F$} will be defined as
the rank of its induced endomorphism.

Denote by $\qpoly$ the ring of all $q$--polynomial
and by $\qpolyd e$ the $\Fqm$--linear space of $q$--polynomials
of $q$--degree less than $e$, namely:
\[
\qpolyd e\eqdef \{f \in \qpoly \mid \qdeg f<e\}.
\]
Given two positive integers $k,n$, with $k< n\leq m$ and
$\bfg\in \Fqmn$ of $\rk(\bfg)=n$, the \emph{Gabidulin code} of length
$n$ and dimension $k$ is defined as
\[
  \Gab{k}{\bfg}
  \eqdef\{(F(g_1), \ldots, F(g_n)) \mid F\in \qpolyd{k}\}.
\]
% Given and $\bfg\in \Fqmn$ of $\rk(\bfg)=n$, the following
% \emph{evaluation map},
% \[
% ev : \map{\qpolyd{k}}{\Gab{k}{\bfg}}{F}{(F(g_1), \ldots, F(g_n))}
% \]
% is an isomorphism and, since it is \emph{rank preserving}, it is an
% \emph{isometry}.

A \emph{generator matrix of this code} is a \emph{Moore matrix}
(see for instance \cite[\S~1.3]{G96b}), \ie a matrix of the form
\begin{equation}\label{eq:MooreMatrix}
\Moore{k}{\bfg}\eqdef\begin{pmatrix}
\bfg\\
\bfg^{[1]}\\
\vdots\\
\bfg^{[k-1]}\\
\end{pmatrix}=\begin{pmatrix}
g_1 &g_2 &\ldots &g_n\\
g_1^{q} &g_2^{q} &\ldots &g_n^{q}\\
\vdots &\vdots &\ldots &\vdots\\
g_1^{q^{k-1}} &g_2^{q^{k-1}} &\ldots &g_n^{q^{k-1}}\\
\end{pmatrix}.
\end{equation}

%We remark that the vector $\bfg$, which is called \emph{generator
%  vector} of the code is not unique.

%\begin{lemma}\label{lem: genGab}
%  Let $\Gab{k}{\bfg}$ a Gabidulin code of length $n$ over $\Fqm$. Then
%  $\bfg'\in \Gab{k}{\bfg}$ is a generator vector if and only if there
%  exists $\alpha\in \Fqm\setminus\{0\}$ such that $\bfg'=\alpha \bfg$.
%\end{lemma}
%
%\alain{Je ne sais pas si l'énoncé est nécessaire? Où en as-tu besoin?}
%
%
%    \begin{proof}
%      An equivalent formulation is to prove that
%      $\bfg' \in \Gab{1}{\bfg} \setminus \{0\}$. Suppose it is not and
%      consider the least integer $0 < t < k$ such that
%      $\bfg' \in \Gab{t}{\bfg}$.  Then
%      $\bfg' = \alpha_t \bfg^{[t]} + \cdots + \alpha_1 \bfg^{[1]} +
%      \alpha_0 \bfg$ for some tuple
%      $(\alpha_0, \dots, \alpha_t) \in \Fqm^t$ with $\alpha_t \neq 0$.
%      By assumption $\Gab{k}{\bfg} = \Gab{k}{\bfg'}$. Then, since
%      $t > 0$, ${(\bfg')}^{[k-t]} \in \Gab{k}{\bfg'} = \Gab{k}{\bfg}$.
%      On the other hand:
%      \[
%        {(\bfg')}^{[k-t]} = \alpha_t^{q^{k-t}} \bfg^{[k]} + \alpha_{t-1}^{q^{k-t}} \bfg^{[k-1]} +
%        \cdots + \alpha_1^{q^{k-t}} \bfg^{[k-t+1]} + \alpha_0^{q^{k-t}} \bfg^{[k-t]}.
%      \]
%      Since $\alpha_t$ is supposed to be nonzero, then
%      ${(\bfg')}^{[k-t]} \in \Gab{k+1}{\bfg} \setminus
%      \Gab{k}{\bfg}$. A contradiction.
%    \end{proof}
%  }
  
Gabidulin codes are \emph{Maximum Rank Distance} (MRD) codes, \ie
their minimum distance is $\dmin{\Gab{k}{\bfg}}=n-k+1$ and they benefit from a decoding
algorithm correcting up to half the minimum distance (see
\cite{L06a}).

We now recall the following classical lemmas, that will be useful in
the rest of the paper.

\begin{lemma}\label{lem:multGabByInvFq}
  Let $\Gab{k}{\bfg}$ be a Gabidulin code and $\bfT\in
  \GL{n}{\Fq}$. Then $\Moore{k}{\bfg}\bfT$ is a generator matrix of
  the Gabidulin code $\Gab{k}{\bfg\bfT}$.
\end{lemma}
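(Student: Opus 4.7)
The plan is to exploit the $\Fq$-linearity of the Frobenius map together with the fact that $\bfT$ has entries in $\Fq$ to show directly that $\Moore{k}{\bfg} \bfT = \Moore{k}{\bfg \bfT}$, and then verify that $\bfg \bfT$ still satisfies the hypothesis needed for $\Gab{k}{\bfg \bfT}$ to be a Gabidulin code.

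First, I would establish the identity
\[
(\bfg \bfT)^{[i]} = \bfg^{[i]} \bfT \qquad \text{for every } i \geq 0.
\]
Writing $\bfT = (t_{j\ell})$ with $t_{j\ell} \in \Fq$, the $\ell$-th coordinate of $\bfg \bfT$ is $\sum_{j} g_j t_{j\ell}$. Raising to the $q^i$-th power and using $t_{j\ell}^{q^i} = t_{j\ell}$ (since $t_{j\ell} \in \Fq$) yields $\sum_{j} g_j^{q^i} t_{j\ell}$, which is exactly the $\ell$-th coordinate of $\bfg^{[i]} \bfT$. Thus the two vectors coincide. Applying this identity row by row to $\Moore{k}{\bfg} \bfT$ gives the matrix whose $i$-th row is $(\bfg \bfT)^{[i-1]}$, i.e.\ the matrix $\Moore{k}{\bfg \bfT}$.

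Second, I would check that the hypothesis $\rk(\bfg \bfT) = n$ still holds. Since $\bfT \in \GL{n}{\Fq}$, the map $\bfx \mapsto \bfx \bfT$ is $\Fq$-linear and bijective on $\Fqm^n$; in particular, $\Supp(\bfg \bfT) = \Supp(\bfg)$ as $\Fq$-subspaces of $\Fqm$, so $\rk(\bfg \bfT) = \rk(\bfg) = n$. Consequently, $\Gab{k}{\bfg \bfT}$ is a well-defined Gabidulin code and $\Moore{k}{\bfg \bfT}$ is a generator matrix for it (as recalled in the definition via \eqref{eq:MooreMatrix}). Combined with the identity of the first step, this shows that $\Moore{k}{\bfg} \bfT$ is a generator matrix of $\Gab{k}{\bfg \bfT}$, completing the proof.

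There is essentially no obstacle here; the entire content of the lemma is that the Frobenius commutes with right multiplication by an $\Fq$-matrix, which is a one-line verification. The only point worth stating carefully is why we insist on $\bfT$ having entries in $\Fq$ rather than in $\Fqm$: the identity $(\bfg \bfT)^{[i]} = \bfg^{[i]} \bfT$ fails for general $\bfT \in \GL{n}{\Fqm}$, which is precisely the reason for the notion of right equivalence introduced above.
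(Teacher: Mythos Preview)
Your proof is correct and is exactly the standard argument; the paper itself states this lemma as a classical fact and gives no proof at all, so there is nothing to compare against. Your two-step verification (Frobenius commutes with right multiplication by an $\Fq$-matrix, and invertibility over $\Fq$ preserves the rank condition) is precisely what is needed.
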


In short, a right--equivalent code to a Gabidulin code is a Gabidulin code
with another evaluation sequence.

\begin{lemma}[{\cite[Theorem 7]{G85}}]
	The dual of the Gabidulin code $\Gab{k}{\bfg}$ is the Gabidulin code
	$\Gab{n-k}{\bfy^{[-n+k+1]}}$, where $\bfy$ is a nonzero vector
	in $\Gab{n-1}{\bfg}^{\perp}$.
\end{lemma}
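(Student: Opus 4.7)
The plan is to prove that $\Moore{n-k}{\bfy^{[-n+k+1]}}$ is a parity-check matrix of $\Gab{k}{\bfg}$. This boils down to three ingredients: first, $\bfy^{[-n+k+1]}$ must have rank $n$, so that the Moore matrix is a well-defined Gabidulin generator of dimension $n-k$; second, we need the orthogonality $\Moore{k}{\bfg}\cdot\Moore{n-k}{\bfy^{[-n+k+1]}}^T = 0$; third, a matching dimension count.

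The preparatory claim is that $\bfy$ itself has rank $n$ (Frobenius preserves $\Fq$-rank, so $\bfy^{[-n+k+1]}$ then inherits it). I would argue by contradiction. Suppose $\rk(\bfy) = r < n$ and pick $\bfP \in \GL{n}{\Fq}$ such that $\bfy \bfP = (v_1, \dots, v_r, 0, \dots, 0)$ with $v_1, \dots, v_r$ being $\Fq$-linearly independent. A short verification shows $\bfy \bfP \in \dual{\Gab{n-1}{\bfg \bfP^{-T}}}$, and Lemma~\ref{lem:multGabByInvFq} guarantees this is indeed a Gabidulin code of the expected parameters. Because $\bfg$ has rank $n$ and the first $r$ columns of $\bfP^{-T}$ are $\Fq$-linearly independent, the first $r$ entries of $\bfg \bfP^{-T}$ remain $\Fq$-linearly independent. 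The orthogonality then places the column vector $(v_1, \dots, v_r)^T$ in the right kernel of the $(n-1) \times r$ Moore matrix built on these entries; by the standard Moore rank formula, this matrix has full column rank $r$, forcing all $v_i$ to vanish, a contradiction.

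For the orthogonality itself, a direct computation gives the $(i, j)$-entry of the product as $\sum_{\ell=1}^n g_\ell^{q^i}\, y_\ell^{q^{-s}}$ with $s := n-k-1-j \geq 0$. Raising to the $q^s$-th power, which is an $\Fq$-automorphism of $\Fqm$, transforms this into $\sum_\ell g_\ell^{q^{i+s}} y_\ell$. As $(i, j)$ varies in its index range $0 \le i \le k-1$, $0 \le j \le n-k-1$, the exponent $i + s$ traverses $\{0, 1, \dots, n-2\}$, so every such sum vanishes by the hypothesis $\bfy \in \dual{\Gab{n-1}{\bfg}}$.

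Finally, the dimensions match: $\dim \Gab{n-k}{\bfy^{[-n+k+1]}} = n - k = \dim \dual{\Gab{k}{\bfg}}$, so the inclusion delivered by the orthogonality is an equality. In my view the most delicate step is the first one, since one must establish $\rk(\bfy) = n$ without circularly invoking the duality of MRD codes (which is essentially the conclusion of the lemma). The argument sketched above sidesteps this by relying only on the $\GL{n}{\Fq}$-action provided by Lemma~\ref{lem:multGabByInvFq} and on the elementary rank formula for Moore matrices.
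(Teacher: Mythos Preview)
The paper does not supply its own proof of this lemma; it is stated with a citation to \cite[Theorem~7]{G85} and then used as a black box. So there is nothing in the paper to compare against.

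That said, your argument is correct and self-contained. The orthogonality computation in Step~2 is the standard one, and the dimension count in Step~3 is immediate. The only point requiring care is Step~1, and your treatment is sound: after the change of coordinates by $\bfP$, the vanishing of $(v_1,\dots,v_r)$ follows from the invertibility of the square $r\times r$ Moore block sitting inside the $(n-1)\times r$ Moore matrix (the classical Moore determinant), which is legitimate since $r\le n-1$. One small remark: when you assert that ``the first $r$ columns of $\bfP^{-T}$ are $\Fq$-linearly independent'', this is of course true of any invertible matrix, so the phrasing slightly undersells the real reason the first $r$ entries of $\bfg\bfP^{-T}$ are $\Fq$-independent --- namely that $\bfg$ has full rank $n$ and any $r$ columns of an invertible matrix are independent. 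But the logic is fine.
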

 
%  First, remark that $\Gab{n-1}{\bfg}^{\perp}$ is a $1$-dimensional
%  code of minimum distance $n$ (since the dual of an MRD code is still
%  MRD \cite{R15}). Thus, any nonzero vector
%  $\bfy\in \Gab{n-1}{\bfg}^{\perp}$ has $\rk(\bfy)=n$. Now, we fix a
%  nonzero vector $\bfy=(y_1, \ldots, y_n)\in \Gab{n-1}{\bfg}^{\perp}$
%  and we observe that for any $j \in \{0, \ldots, n-2\}$,
%\[
%\sum_{i=1}^ny_ig_i^{[j]}=0 \Longrightarrow \sum_{i=1}^ny_i^{[-1]}g_i^{[j-1]}=0.
%\]
%Therefore, $\bfy^{[-1]}\in \Gab{n-2}{\bfg}^{\perp}$.
%Now, by induction we can prove that for any $t\in \{0,\ldots, n-1\}$
%\begin{equation}{\label{eq:dualProof}}
%\bfy^{[-t]}\in \Gab{n-1-t}{\bfg}^{\perp}
%\end{equation}
%
%
%and so if we consider $t=n-k-1$, we get that
%$\bfy^{[-n+k+1]}\in \Gab{k}{\bfg}^{\perp}$.  We now consider for any
%$i \in \{0, \ldots, n-k-1\}$, the vectors
%\[
%(\bfy^{[-(n-k-1)]})^{[i]}=\bfy^{[-(n-k-1-i)]}.
%\]
%For any $i\in \{0, \ldots, n-k-1\}$, then
%$n-k-1-i\in \{0,\ldots,n-1\}$, and by \eqref{eq:dualProof}
%\[
%\bfy^{[-(n-k-1-i)]}\in \Gab{k+i}{\bfg}^{\perp}.
%\]
%So, since $\Gab{k+i}{\bfg}^{\perp}\subseteq \Gab{k}{\bfg}^{\perp}$ and
%$\rk({\bfy})=n$, we finally get
%\[
%\Gab{n-k}{\bfy^{[-n+k+1]}}=\Gab{k}{\bfg}^{\perp}
%\]

\subsection{Twisted Gabidulin codes}\label{subsec:twisted} Twisted
Gabidulin codes were
first introduced in \cite{S16a} and contain a broad family of MRD codes that are not
equivalent to Gabidulin codes. The construction of these codes was
then generalized in \cite{PRS17a,PRW18}.
We consider the $q$--polynomials of the form
\begin{equation}\label{eq:twisted-q-poly}
  F = \sum_{i=0}^{k-1}
  f_iX^{[i]}+\sum_{j=1}^{\ell}\eta_jf_{h_j}X^{[k-1+t_j]} ,
\end{equation}
where the $f_i$'a are in $\Fqm$, $\ell \leq n-k$,
$\bfh\in \{0, \ldots, k-1\}^{\ell}$,
$\bft\in \{1, \ldots, n-k\}^{\ell}$ (with distinct $t_i$) and
$\bfeta\in (\Fqm^*)^{\ell}$.  We denote by
$\qtwistedpoly{n}{k}{\bft}{\bfh}{\bfeta}$ the space of all
$q$--polynomials of the form \eqref{eq:twisted-q-poly} with parameters
$\bfh,\bft, \bfeta$.  Now, given a vector $\bfg\in \Fqmn$, with
$\rk(\bfg)=n$, the $[\bfg,\bft, \bfh, \bfeta]$-\emph{twisted Gabidulin
  code} of \emph{length} $n$, dimension $k$, ${\ell}$ twists,
\textit{hook vector} $\bfh$, \textit{twist vector} $\bft$ and
evaluation sequence $\bfg$ is defined as
\[
  \TGab{\bfg}{\bft}{\bfh}{\bfeta}[n,k]\eqdef\{(F(g_1),
  \ldots,F(g_n))\mid F \in {\qtwistedpoly{n}{k}{\bft}{\bfh}{\bfeta}}\}.
\]
We observe that in \cite{S16a}, Sheekey introduced a simplified
version of these codes with just one twist, \ie
$n=m, {\ell}=1,\bfh=(0),\bft=(1)$.

\begin{assumption}\label{ass:parameters}
	Throughout this paper, according to \cite{PRW18}, we consider a
	$[\bfg, \bft, \bfh, \bfeta]$-twisted Gabidulin code with ${\ell}$ twists,
	and with the following
	parameters, \begin{itemize}
		\item $t_i\eqdef (i+1)(\delta+1)$, where
		$\delta\eqdef\frac{n-k-{\ell}}{{\ell}+1}$,
		\item $0<h_1<h_2<\ldots<h_{\ell}<k-1$ and $|h_{i}-h_{i-1}|>1$.
	\end{itemize}
	for any $i$, $1\leq i \leq {\ell}$. 
\end{assumption}
% \alain{sait-on pourquoi Puchinger et al. ont choisi ce type de
%   paramètres} \ila{Pour appliquer la prop que je ref aprés, pour eux ça montre que la dim de $\Lambda_i$ appliqueé aux twisted est suffisament grande par rapport aux Gab}
This choice is particularly relevant because it allows
us to quantify the dimension of the $q$-sum operator applied to these
codes  (see
Proposition~\ref{prop:distinguish_twisted}).

\medskip
We now observe that in general, a generator matrix of a $\TGab{\bfg}{\bft}{\bfh}{\bfeta}[n,k]$ is

\begin{equation}\label{eq: twistedGenMatrix}
\begin{pmatrix}
\bfg\\
\bfg^{[1]}\\
\vdots\\
\bfg^{[h_1-1]}\\
\bfg^{[h_1]}+\eta_1\bfg^{[k-1+t_1]}\\
\bfg^{[h_1+1]}\\
\vdots\\
\bfg^{[h_{\ell}-1]}\\
\bfg^{[h_{\ell}]}+\eta_{\ell} \bfg^{[k-1+t_{\ell}]}\\
\bfg^{[h_{\ell}+1]}\\
\vdots\\
\bfg^{[k-1]}\\
\end{pmatrix}.
\end{equation}

The decoding of twisted Gabidulin codes such as their \emph{additive
  variants} has recently been studied in
\cite{RR17,R17,L19,LK19,KL20,KLZ21}. However, in \cite{RR17} there
were proposed some algorithms which allow to decode twisted Gabidulin
codes with only one twist and $\bft = (1)$, for some special choices
of parameters. They manage to correct up to
$\lfloor\frac{n-k-1}{2}\rfloor$ errors. But their decoding up to half of
the minimum distance remains an open problem.

%So far, there is no known polynomial-time decoder for twisted
%Gabidulin codes with multiple twists, or one twist with $t_1>1$
%\cite{BHLPRWZ22}
To the best of our knowledge, the decoding of twisted Gabidulin codes
with multiple twists, or one twist with $t_1>1$ has not been
studied in the literature. We address this point in
\S~\ref{sec:decoding} for decoding radii that remain below half the
minimum distance.

\subsection{GPT system and variants}\label{subsec:classicalGPT}
The GPT cryptosystem was introduced in 1991 by {Gabidulin,
  Paramonov} and {Tretjakov} \cite{GPT91}. This system
is a \emph{rank-metric} variant of the classical
\emph{McEliece} cryptosystem \cite{M78}, in which the Goppa codes are
replaced by Gabidulin codes.  The first version of GPT was first
broken by Gibson in \cite{G95}. Gabidulin proposed a new version in
\cite{G93}, which was later attacked again by Gibson in \cite{G96}.

In this work we present the generalized version of GPT proposed by
Gabidulin and Ourivski in \cite{GO01,OG03}.  
\begin{itemize}
	\item \emph{Key Generation}. Let,
	\begin{itemize}
    \item $\Gab{k}{\bfg}$ an $[n,k]$-Gabidulin code with generator
      matrix $\Gsec$ (as in \eqref{eq:MooreMatrix});
    \item $\bfS$ a random invertible matrix in $\Mspace{k}{\Fqm}$,
      %called \emph{row scrambler},
    \item $\bfX$ a random matrix in $\Mspace{k,\lambda}{\Fqm}$ of
      fixed rank $1\leq s\leq \lambda$, called \emph{distortion
        matrix},
    \item $\bfP$ a random matrix in $\GL{n+\lambda}{\Fq}$, called
      \emph{column scrambler}.
	\end{itemize}
The \emph{secret key} is the triple,
\[
(\bfS,\Gsec,\bfP)
\]
and the \emph{public key} is,
\begin{equation}\label{eq:Gpub}
\bfG_{pub}\eqdef \bfS(\bfX~|~\Gsec)\bfP,
\end{equation}
where $(\bfX~|~\Gsec) \in \Mspace{k,n+\lambda}{\Fqm}$ denotes the matrix
whose columns are the concatenations of those of $\bfX$ and of $\Gsec$.
We denote $\Cpub$ the linear code with $\Gpub$ as generator matrix.
\item \emph{Encryption}. To encode a plaintext $\bfm\in \Fqm^k$,
  choose a random vector $\bfe\in \Fqm^{n+\lambda}$ of $\rk(\bfe)=t$, where
  $t=\lfloor \frac{n-k}{2}\rfloor$ and compute the ciphertext as,
\[
\bfc\eqdef \bfm \Gpub+\bfe.
\]
\item \emph{Decryption}. Apply the chosen decoding algorithm for
  Gabidulin codes to the last $n$ components of the vector,
\[
\bfc \bfP^{-1}=\bfm \bfS[\bfX|\Gsec]+\bfe \bfP^{-1}.
\]
Since $\bfP\in \GL{n+\lambda}{\Fq}$, then $\rk(\bfe \bfP^{-1}) = t$ and in particular, the rank (over $\Fq$) of the last $n$ rows of this matrix is at most $t$.
So, the decoder computes $\bfm \bfS$, and by inverting $\bfS$, the
initial message can be finally encrypted.
\end{itemize}

The description of the secret key as the triple $(\bfS, \Gsec, \bfP)$
is not the most relevant one when it comes to instantiating the scheme
with Gabidulin or twisted Gabidulin codes. In particular, once we know
the secret code $\Csec$ of the generator matrix $\Gsec$ and the
scrambling matrix, we are able to decode. So, the
knowledge of $\bfS$ is not relevant. % In the same way, there is no
% obligation to find the matrix $\bfP$. Indeed, as soon as we are able
% to rule out the distortion matrix and get a generator matrix of
% $\Csec$, then the system is broken.
% See \S \alain{XXX} for further
% details.  
Thus, in the following, we assume that $\Gpub$ as
\begin{equation}\label{eq:alter_Gpub}
  \Gpub = (\bfX ~|~ \Gsec) \bfP.
\end{equation}

\begin{remark}
  The previous scheme is instantiated with Gabidulin codes but can actually
  be instantiated with any code family equipped with a decoder that corrects
  up to $t$ errors.
\end{remark}

\begin{remark}
  The original GPT scheme \cite{GPT91} did not involve the
  distortion matrix $\bfX$ as it is.
  The seminal proposal was to use either a random
  generator matrix $\bfG$ of a Gabidulin code or a matrix
  $\bfG + \bfX_0$, where $\bfX_0$ had low rank. The latter version
  required to reduce the weight of the error term in the encryption
  process. In the following, we no longer consider this masking technique.
  The use of a distortion matrix
  with a column scrambler appeared only ten years later with the works
  of Ourivski and Gabidulin \cite{GO01,OG03}.
  % Note that the distorsion matrix approach has a
  % Hamming metric counterpart proposed by Wieschebrink in
  % \cite{W06}. Indeed, Wieschebrink gives as public code a generalised
  % Reed--Solomon whose generator matrix is concatenated with a random
  % matrix and then permuted. In rank metric, right multiplying by a
  % matrix $\bfP \in \GL{n}{\Fq}$ is an isometry and hence can be
  % regarded as the rank metric analogue of permuting the columns.
\end{remark}

%%% Local Variables:
%%% mode: latex
%%% TeX-master: "../Article"
%%% End:

\section{On the decoding of Gabidulin codes and their twists}
\label{sec:decoding}

In this section, we discuss further the decoding of Gabidulin and twisted
Gabidulin codes. We show that, although decoding twisted Gabidulin
codes up to half the minimum distance remains an open problem, their
decoding up to a smaller radius is possible, using the same decoder as
for Gabidulin codes. This approach was developed in \cite{BC21} and is
related to that of Gaborit, Ruatta and Schrek in
\cite[\S~V--VI]{GRS16}.

We begin by examining the decoding of Gabidulin codes.

\subsection{An important remark on the decoder of Gabidulin
  codes}\label{ss:key_remark}
It is well--known that the Gabidulin codes have a decoder that corrects up to half the minimum distance (see for instance
\cite{L06a}). This algorithm is analogous to the Welch--Berlekamp
algorithm for Reed--Solomon codes. An important fact from a
cryptographic point of view is that, given a Reed--Solomon
code
\[
  \RS{}{k}{\bfx} \eqdef \left\{(f(x_1), \dots, f(x_n)) \mid
  f \in \Fq[X],\ \deg f < k\right\},
\]
where $\bfx = (x_1, \dots, x_n) \in \Fq^n$ has distinct entries, the
knowledge of the vector $\bfx$ is necessary to run the decoding
algorithm.  However, given a Gabidulin code $\Gab{k}{\bfg}$, it is
possible to decode without knowing $\bfg$.  Indeed,
given as input $\bfy = \bfc + \bfe$ where $\bfc \in \Gab{k}{\bfg}$ and
$\rk(\bfe) \leq t \eqdef \frac{n-k}{2}$, the decoding algorithm first
consists in finding a $q$--polynomial $P(x)$ of degree at most $t$
which vanishes at the entries of $\bfe$. This can be done by solving
the $\Fqm$--linear system
\begin{equation}\label{eq:key_equation}
  P(\bfy) \eqdef (P(y_1), \dots, P(y_n)) \in \Gab{k+t}{\bfg}
\end{equation}
whose unknowns are the coefficients of $P \in \qpolydd{t}$. Next, the
code $\Gab{k+t}{\bfg}$ can be computed by simply knowing a
generator matrix of $\Gab{k}{\bfg}$, thanks to the following
well--known statement.
\begin{proposition}[{\cite[Lem.~5.1]{O08}}]\label{prop:lambdai(Gab)}
  Let $\bfg\in \Fqmn$, with $\rk(\bfg)=n$ and $\Gab{k}{\bfg}$ an
  $[n,k]$ Gabidulin code. Then,
\[
\Lambda_i(\Gab{k}{\bfg})=\Gab{k+i}{\bfg}.
\]
In particular,
\[
\dim(\Lambda_i(\Gab{k}{\bfg}))=\min\{n,k+i\}.
\]
\end{proposition}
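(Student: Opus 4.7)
The plan is to prove the two statements together by identifying $\Lambda_i(\Gab{k}{\bfg})$ with the row span of a larger Moore matrix $\Moore{k+i}{\bfg}$, and then invoking the classical fact that Moore matrices over linearly independent evaluation points have full rank.

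First I would unwind what happens to a Moore matrix under the component--wise Frobenius: applying $[j]$ to $\Moore{k}{\bfg}$ sends the row $\bfg^{[\ell]}$ to $\bfg^{[\ell+j]}$, so $\Moore{k}{\bfg}^{[j]}$ has rows $\bfg^{[j]}, \bfg^{[j+1]}, \ldots, \bfg^{[k-1+j]}$. Stacking these blocks for $j=0,1,\ldots,i$ as in \eqref{eq:lam(C)} produces a matrix whose row span is exactly the $\Fqm$--span of $\bfg^{[0]}, \bfg^{[1]}, \ldots, \bfg^{[k-1+i]}$, which is precisely the row span of $\Moore{k+i}{\bfg}$. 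Hence $\Lambda_i(\Gab{k}{\bfg}) = \Gab{k+i}{\bfg}$, which settles the first equality (interpreting $\Gab{k+i}{\bfg}$, when $k+i > n$, as the code generated by the corresponding Moore matrix).

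Next I would compute the dimension. When $k+i \leq n$, the classical determinantal formula for square Moore matrices (e.g. \cite[\S~1.3]{G96b}) shows that any $(k+i)\times(k+i)$ submatrix obtained from $\Moore{k+i}{\bfg}$ by selecting $k+i$ columns is nonsingular if and only if the corresponding entries of $\bfg$ are $\Fq$--linearly independent; since $\rk(\bfg)=n \geq k+i$, such a submatrix exists, so $\Moore{k+i}{\bfg}$ has rank $k+i$. When $k+i > n$, the dimension cannot exceed $n$ since the code sits inside $\Fqm^n$; and it equals $n$ because $\Lambda_i(\Gab{k}{\bfg})$ already contains $\Lambda_{n-k}(\Gab{k}{\bfg}) = \Gab{n}{\bfg} = \Fqm^n$, the last equality following from the square Moore matrix being invertible under the same linear--independence hypothesis. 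In both cases the dimension is $\min\{n, k+i\}$.

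The whole proof is essentially a bookkeeping exercise once the behavior of Moore matrices under Frobenius is observed, so I do not anticipate a genuinely hard step. The only point that deserves care is the invocation of the Moore determinant lemma to guarantee that $\Moore{k+i}{\bfg}$ attains its maximal possible rank; this is the place where the hypothesis $\rk(\bfg)=n$ is really used, and it is what forces the $\min$ in the dimension formula.
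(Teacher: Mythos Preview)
The paper does not actually supply a proof of this proposition; it is stated with a citation to \cite[Lem.~5.1]{O08} and then used. Your argument is correct and is essentially the standard one: the identity $\Lambda_i(\Gab{k}{\bfg})=\Gab{k+i}{\bfg}$ is immediate from the observation that the stacked Frobenius shifts of $\Moore{k}{\bfg}$ have the same row span as $\Moore{k+i}{\bfg}$, and the dimension claim follows from the Moore determinant lemma together with $\rk(\bfg)=n$. Your parenthetical remark about interpreting $\Gab{k+i}{\bfg}$ when $k+i>n$ is appropriate, since the paper only defines Gabidulin codes for $k<n$.
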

 
  % \alain{une ref?}
  % Let $\bfG$ be a generator matrix of $\Gab{k}{\bfg}$ as in \eqref{eq:
  %   MooreMatrix}. We fix $i$ and we remark that for any $j=1,\ldots,i$
  % the only row of $\Lambda_j(\Gab{k}{\bfg})$ which is not contained in
  % $\Lambda_{j-1}(\Gab{k}{\bfg})$ is the last one.  Therefore, for
  % $i<n-k$, the $q$-sum space is
  % $\Lambda_i(\Gab{k}{\bfg})=\span_{\Fqm} \{\bfg, \ldots, \bfg^{[k-1]},
  % \bfg^{[k]}, \ldots, \bfg^{[k+i]}\}. 

Next, for any $P$ satisfying~(\ref{eq:key_equation}), we have $P(\bfy)
= P(\bfc) + P(\bfe)$.  By construction, $P(\bfc) \in
\Lambda_t(\Gab{k}{\bfg}) = \Gab{k+t}{\bfg}$ and hence, $P(\bfe) \in
\Gab{k+t}{\bfg}$. Moreover, we have $\rk (P(\bfe)) \leq \rk (\bfe)
\leq t$, while $\Lambda_t(\Gab{k}{\bfg}) =
\Gab{k+t}{\bfg}$ has minimum distance $n-k-t+1$. Therefore, for $t
\leq \frac{n-k}{2}$, which entails $t <
n-k-t+1$, we should have $P(\bfe) = 0$ for any
$P$ satisfying \eqref{eq:key_equation}.  Thus, the kernel of
$P$ contains the support of
$\bfe$ and the knowledge of the support of the error allows to solve
the decoding problem by solving a linear system.  See for instance
\cite[\S~IV.a]{GRS16}, \cite[\S~III.A]{AGHT18}.

\begin{algorithm}
  \DontPrintSemicolon
  \KwIn{A Gabidulin code $\CC$ represented by a
    generator matrix $\bfG$, an integer $t$ and a vector
    $\bfy\in \Fqm^n$}
  \KwOut{A vector $\bfc \in \CC$ such that
    $\rk{(\bfy - \bfc)} \leq t$ if exists and `?' otherwise.}

  \medskip
  Compute
  $P\in \qpolydd{t}\setminus \{0\}$ such that
  $P(\bfy) \in \Lambda_t(\CC)$\;

  Compute (if exists) $\bfe \in \Fqm$ such that
  $\Supp(\bfe) \subseteq \ker (P)$ and
  $\bfy - \bfe \in \CC$\;
  \If{$\bfe$ exists}{Return
    $\bfy - \bfe$} \Else{Return `?'}
	\caption{Decoding algorithm of Gabidulin codes without knowing the
      evaluation sequence\label{algo:gabi_dec}}
\end{algorithm}

Algorithm~\ref{algo:gabi_dec} summarizes the previous discussion.
Note that, with the knowledge of the evaluation sequence $\bfg$, the
algorithm could be terminated by performing an Euclidean division or
using the Extended Euclidean Algorithm in the non-commutative ring
$\qpoly$ instead of using \cite[\S~IV.a]{GRS16}, \cite[\S~III.A]{AGHT18}.

The key observation here is the following : {\bf decoding a Gabidulin code
  $\Gab{k}{\bfg}$ is possible without knowing the vector
  $\bfg$.}

\begin{remark}
  In GPT original public key encryption scheme \cite{GPT91} the public
  code is a Gabidulin code with no distortion matrix. In this
  situation, the previous discussion shows that this proposal is
  immediately broken without trying to compute a description (\ie an
  evaluation sequence) of the public code.
\end{remark}

% \begin{remark}
%   The principle consisting in, first identifying the support of the error
%   and then to perform decoding is rather usual. In Hamming metric,
%   it is well--known that, as soon as the support is found, then the remaining
%   work is like decoding erasures which boils down to linear algebra.
%   It is the very same situation in rank metric and this technique is used in the
%   decoding of simple codes and LRPC codes \cite{GMRZ13,H17}.
% \end{remark}

\subsection{Decoding twisted Gabidulin codes}\label{sec:decodingTwisted}
If some twisted Gabidulin codes are proven to be MRD without being
equivalent to Gabidulin codes, the question of decoding them up to
half the minimum distance remains open. For {\em twisted Reed--Solomon
  codes}, the Hamming metric analogues introduced in \cite{BPR17}, it
is shown in \cite{BBPR18} how they can be decoded up to half the minimum
distance at the cost of an exhaustive search on the terms associated with
the twists. Thus, the decoding complexity of a twisted Reed--Solomon
code with $\ell$ twists is $O(q^\ell)$ times the complexity of the
decoding of a Reed--Solomon code. This can be transposed to
twisted Gabidulin codes but the cost overhead is $O(q^{m\ell})$ times
the cost of decoding a Gabidulin code, which is exponential in $m$ and
so in the code length $n$ (since $n \leq m$).

Although one does not know how to efficiently decode twisted Gabidulin
codes up to half the minimum distance, one can apply the
Algorithm~\ref{algo:gabi_dec} to them.  Given $\bfy = \bfc + \bfe$, where
$\bfc$ is a codeword of a twisted Gabidulin code $\CC$ and
$\rk (\bfe) \leq t$ for some $t$ we will discuss later,
compute $P \in \qpoly^{\leq t}$ such that
\begin{equation}\label{eq:key_equation2}
  P(\bfy) \eqdef (P(y_1), \dots, P(y_n)) \in \Lambda_t(\CC).
\end{equation}
Such a solution $P$ satisfies $P(\bfe) \in \Lambda_t(\CC)$. The
difference with the Gabidulin case is that we do not have an {\em a priori}  lower
bound on the minimum distance of
$\Lambda_t(\CC)$.
However we have the following result.
\begin{proposition}[{\cite[Theorem~4]{PRW18}}]\label{prop:distinguish_twisted}
	Given a twisted Gabidulin code
	$\TGab{\bfg}{\bft}{\bfh}{\bfeta}[n,k]$ (where parameters are chosen
	according to Assumption~\ref{ass:parameters}), then
	\[
	\forall i \geq 0,\quad
	\dim(\Lambda_i(\TGab{\bfg}{\bft}{\bfh}{\bfeta}[n,k]))
	=\min\{k+i+\ell(i+1),n\}.
	\]
\end{proposition}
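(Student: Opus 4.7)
The plan is to construct an explicit generating set of $\Lambda_i(\CC)$ consisting of pure Frobenius shifts $\bfg^{[e]}$ of the evaluation vector, and then to exploit the $\Fqm$-linear independence of $\bfg^{[0]},\dots,\bfg^{[n-1]}$ (which follows from $\rk(\bfg)=n$) to turn the combinatorial count of distinct exponents $e$ into the claimed dimension formula.

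First, I would stack the generator matrix~\eqref{eq: twistedGenMatrix} with its Frobenius powers $\bfG,\bfG^{[1]},\dots,\bfG^{[i]}$. This produces two families of rows: the untwisted rows $\bfg^{[j+a]}$ for $j\in J\eqdef\{0,\dots,k-1\}\setminus\{h_1,\dots,h_\ell\}$ and $0\leq a\leq i$, and the twisted rows $\bfg^{[h_s+a]}+\eta_s^{q^a}\bfg^{[k-1+t_s+a]}$ for $1\leq s\leq\ell$ and $0\leq a\leq i$. The first combinatorial identity I would prove is that, for $i\geq 1$,
\[
\{j+a : j\in J,\ 0\leq a\leq i\}=\{0,1,\dots,k-1+i\}.
\]
Each hook $h_s$ is reached via $(h_s-1)+1$, with $h_s-1\in J$ by the spacing assumption $|h_s-h_{s-1}|>1$ of Assumption~\ref{ass:parameters} and the constraint $h_1\geq 1$; the tail exponents $k,\dots,k-1+i$ are reached from $j=k-1\in J$ (which belongs to $J$ since $h_\ell<k-1$) shifted by $a\geq 1$. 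Hence every shift $\bfg^{[e]}$ with $e\in\{0,\dots,k-1+i\}$ already lies in $\Lambda_i(\CC)$.

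Once those shifts are in the span, each twisted row lets me isolate, by subtracting $\bfg^{[h_s+a]}$, the pure shift $\bfg^{[k-1+t_s+a]}$ for all $1\leq s\leq\ell$ and $0\leq a\leq i$. The next step is to check, under Assumption~\ref{ass:parameters}, that these $\ell(i+1)$ new exponents are pairwise distinct and disjoint from $\{0,\dots,k-1+i\}$ as long as $k+i+\ell(i+1)\leq n$: the arithmetic progression $t_s=(s+1)(\delta+1)$ gives the spacing $|t_s-t_{s'}|\geq\delta+1$, and $t_1>i$ places all these exponents strictly above $k-1+i$ in the pre-saturation regime. In that regime every exponent in the final list stays below $n$, so the linear independence of $\bfg^{[0]},\dots,\bfg^{[n-1]}$ delivers exactly $k+i+\ell(i+1)$ independent generators.

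Finally, the saturation regime $k+i+\ell(i+1)>n$ follows from the trivial upper bound $\dim\Lambda_i(\CC)\leq n$ together with the monotonicity $\Lambda_{i-1}(\CC)\subseteq\Lambda_i(\CC)$: once the dimension reaches $n$, it stays at $n$. The main technical obstacle I anticipate is the combinatorial bookkeeping---carefully verifying that the untwisted and twisted exponents never collide under Assumption~\ref{ass:parameters} and pinning down the precise threshold on $i$ at which the formula saturates, a threshold dictated by the specific value $\delta=(n-k-\ell)/(\ell+1)$ fixing the spacing of the twist vector.
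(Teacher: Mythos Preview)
The paper does not supply its own proof of this proposition; it is quoted verbatim from \cite[Theorem~4]{PRW18}. Your row--elimination / exponent--counting argument is the natural one, and it is exactly the technique the paper itself deploys in \S\ref{subsec:structureLambdaiGtpub} (see the computation preceding Lemma~\ref{lem:lambdaiGtpub}) when it analyses $\Lambda_i(\bfGt)$ as a block of $\Lambda_i(\Gtpub)$: stack the Frobenius shifts of the generator matrix~\eqref{eq: twistedGenMatrix}, observe that the untwisted rows already fill in the Moore block $\Moore{k+i}{\bfg}$, and then peel off the pure shifts $\bfg^{[k-1+t_s+a]}$ from the twisted rows. So your approach coincides with the paper's implicit treatment.

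Two small remarks. First, your combinatorial check that the $\ell(i+1)$ ``twist'' exponents $k-1+t_s+a$ are pairwise distinct and lie strictly above $k-1+i$ is correct but deserves one more line: distinctness across twists requires $i<\delta+1$, which is precisely the threshold $i<\frac{n-k-\ell}{\ell+1}$ appearing throughout \S\ref{sec:dontTwist}, and this same bound guarantees all exponents stay below $n$. Second, note that the stated formula at $i=0$ reads $\dim\CC=k+\ell$, whereas the code has dimension $k$; your sketch (which explicitly restricts the first identity to $i\geq1$) silently sidesteps this, but the discrepancy is in the statement as quoted, not in your argument.
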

Proposition~\ref{prop:distinguish_twisted} entails that for a twisted
Gabidulin code $\CC$ with $\ell$ twists, we have
\begin{equation}\label{eq:dimLambdaTwist}
	\dim_{\Fqm} \Lambda_t (\CC) \leq k-1 + (t+1)(\ell + 1).
\end{equation}
% Moreover, from the proof of this proposition (see
% \cite[Theorem~4]{PRW18}) we can see that $\Lambda_{t}(\CC)$ is a code
% generated by the evaluations of $q$-monomials.
Now, let us consider the dimension of $\Lambda_t (\EE)$.  Since
$\Lambda_{t}(\EE)$ is the image of $\qpoly^{\leq t}$ by the map
$Q \mapsto Q(\bfe)$, we have
\[
  \dim_{\Fqm} (\Lambda_{t}(\EE)) = \dim_{\Fqm} (\qpoly^{\leq t}) - \dim_{\Fqm} \{
  Q \in \qpoly^{\leq t} ~|~ Q(\bfe) = 0\}.
\]
First, $\dim (\qpoly^{\leq t}) = t+1$. Second, recall that there exists a unique monic $q$--polynomial $P$ of $q$--degree
$\rk{(\bfe)}$ such that $P(\bfe) = 0$. Therefore, 
\[
  \{ Q \in \qpoly^{\leq t} ~|~ Q(\bfe) = 0\} = \{F \circ P ~|~ F \in
  \qpoly^{\leq t-\rk (\bfe)}\}
\]
and the latter space has dimension $t - \rk (\bfe) + 1 \geq 1$.
Putting all together, we deduce that
\[
  \dim_{\Fqm} (\Lambda_t(\EE)) \leq t.
\]

We claim that if
\begin{equation}\label{eq:key_condition}
	\dim_{\Fqm} \Lambda_t (\CC) + t \leq n,
\end{equation} the spaces $\Lambda_t(\CC)$
and $\Lambda_t (\EE)$ are very likely to have a zero intersection.
The validity of this claim are given in \S~\ref{ss:evidence}.
This would entail that for any $P \in \qpoly^{\leq t}$ satisfying
\eqref{eq:key_equation2}, we have $P(\bfe) = 0$. 
Therefore, from \eqref{eq:dimLambdaTwist}  and \eqref{eq:key_condition} we can conclude that if,
\[
t \leq \frac{n-k-\ell}{\ell + 2}\cdot
\]
then we can decode twisted Gabidulin codes as classical Gabidulin
codes : form the kernel of $P$, we get the error support and finally
the error itself is deduced using \cite[\S~IV.a]{GRS16},
\cite[\S~III.A]{AGHT18}.  This decoding radius is rather pessimistic
since the dimension of $\Lambda_t(\CC)$ may be much smaller depending
on the way the twists are chosen. Therefore, the above bound is what
we can expect in the worst case.

\subsection{Discussion about the claim}\label{ss:evidence}
Suppose that the error $\bfe$ is obtained as follows: draw a
uniformly random subspace $V \subseteq \Fq^n$ of dimension $t$ and
then draw a uniformly random vector $\bfe$ among the vector with row
support contained in $V$.  One can easily prove that all the elements
of $\Lambda_t(\bfe)$ have their row support contained in $V$.

Therefore, the intersection $\Lambda_t(\EE) \cap \Lambda_t(\CC)$ consists
in elements of $\Lambda_t(\CC)$ whose row support is in $V$.
So, consider the subcode $\text{Sh}_V(\Lambda_t(\CC))$ called {\em
  shortening of $\Lambda_t(\CC)$} defined as the subcode of
$\Lambda_t(\CC)$ of vectors whose row support is contained in $V$.
This space can be obtained as follows. Consider a basis
$(\bfv_1, \dots, \bfv_{n-t})$ of the dual $V^{\perp} \subseteq \Fq^n$ of $V$
for the canonical inner product. Then, $\text{Sh}_V(\Lambda_t(\CC))$
is the kernel of the map
\[
  \map{\Lambda_t(\CC)}{\Fqm^{n-t}}{\bfc}{(\bfc \cdot \bfv_1^\top,
    \dots , \bfc \cdot \bfv_{n-t}^\top).}
\]
\begin{remark}
Note that in the above equation, $\bfc$ and the $\bfv_i$'s have
different nature, $\bfc$ has entries in $\Fqm$ while the $\bfv_i$'s
have their entries in $\Fq$.
\end{remark}

Finally, since $V$ is uniformly random, and
$\dim \Lambda_t(\CC) \leq n-t$, it is likely that the above map is
injective and hence its kernel $\text{Sh}_V(\Lambda_t (\CC))$ is
likely to be zero. Since the latter kernel contains
$\Lambda_t(\EE) \cap \lambda_t(\CC)$, we conclude that this
intersection is likely to be zero.

\subsection{A remark on the code that is actually decoded}
To conclude, let us notice an important fact for the sections to
follow.  The previously described decoder may decode a slightly larger
code than $\CC$ defined below.

\begin{definition}\label{def:ct_cond}
  Let $\CC \subseteq \Fqm^n$ be a code and $s$ be a positive
  integer. We denote by $\tclosure{\CC}{s}$ the largest code $\CC'$
  containing $\CC$ such that $\Lambda_s(\CC) = \Lambda_s(\CC')$.
\end{definition}

It is easy to check that, the aforementioned decoder actually decodes
$\tclosure{\CC}{t}$ and not only $\CC$.

\begin{remark}
  It can be proved that for a random code $\CC$ with dimension $k < \frac n s$,
  then $\CC = \tclosure{\CC}{s}$ with a high probability.
  It ca also be proved that a Gabidulin code $\CC$ of dimension $k$
  satisfies $\tclosure{\CC}{i} = \CC$ for any $i < n-k$.
\end{remark}

\begin{remark}\label{rem:alter_closure}
  An alternative definition of $\tclosure{\CC}{s}$ is given by.
  \[
    \tclosure{\CC}{s} \eqdef \bigcap_{j=0}^s {\left(\Lambda_s(\CC)\right)}^{[-j]}
  \]
\end{remark}

%%% Local Variables:
%%% mode: latex
%%% TeX-master: "../Article"
%%% End:

\section{Revisiting Overbeck's attack}\label{sec: overbeck}
In this section we revisit the Overbeck's attack of GPT instantiated with
Gabidulin codes to introduce the extension presented in
\S~\ref{sec:extension}, which will allow us to break \cite{PRW18}.

\subsection{A distinguisher}
The core of the Overbeck's attack consists in the application of the
$q$-sum operator, which allows to \emph{distinguish} Gabidulin codes from
random ones. In particular, the following proposition observes the
behavior of random codes \emph{w.r.t.} the $i$-th $q$-sum operator.

\begin{proposition}[{\cite[Prop.~1]{CC19}}]\label{prop:Lambdai(C)}
  If $\CC\subset \Fqmn$ is a $k$-dimensional random code, then for any
  $0<i<k$,
\[
 \dim(\Lambda_i(\CC))\leq \min\{n, (i+1)k\}.
\]
Moreover, for any $a \geq 0$, we have
\[
  \prob (\dim(\Lambda_i(\CC))\leq \min\{n, (i+1)k\} - a) = O(q^{-ma}).
\]  
\end{proposition}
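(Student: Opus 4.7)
I would split the proof into a deterministic upper bound and a probabilistic lower bound. The deterministic inequality is immediate from \eqref{eq:lam(C)}: the block matrix $\Lambda_i(\bfG)$ is a generator matrix of $\Lambda_i(\CC)$ and has $(i+1)k$ rows in $\Fqm^n$, so $\dim \Lambda_i(\CC) \le \min\{n,(i+1)k\}$.

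For the probability estimate, I model a random $k$-dimensional code by a uniformly random $k \times n$ matrix $\bfG$ with entries in $\Fqm$; the event $\rank \bfG < k$ has probability $O(q^{-m(n-k+1)})$, which only affects the conclusion by a multiplicative constant. Setting $r := (i+1)k$ and assuming $r \le n$ (the complementary case is symmetric), a vector $\bfu = (\bfu_0, \ldots, \bfu_i) \in \Fqm^r$ (with $\bfu_s \in \Fqm^k$) lies in the left kernel of $\Lambda_i(\bfG)$ if and only if $L_{\bfu}(\bfG_{\bullet,j}) = 0$ for every column $j$, where
\[
L_{\bfu}(\bfx) := \sum_{s=0}^{i} \sum_{l=1}^{k} u_{s,l}\, x_{l}^{q^s}
\]
defines an $\Fq$-linear map from $\Fqm^k$ to $\Fqm$. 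Since a nonzero $q$-polynomial of $q$-degree less than $m$ cannot vanish identically on $\Fqm$, the assignment $\bfu \mapsto L_{\bfu}$ is an $\Fqm$-linear injection of $\Fqm^r$ into $\mathrm{Hom}_{\Fq}(\Fqm^k, \Fqm)$.

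For any $a$-dimensional $\Fqm$-subspace $U \subseteq \Fqm^r$, the independence of the columns of $\bfG$ yields
\[
\prob\bigl(U \subseteq \ker_{\rm left} \Lambda_i(\bfG)\bigr) = q^{-e(U)\cdot n},
\]
where $e(U)$ is the $\Fq$-codimension in $\Fqm^k$ of $\bigcap_{\bfu \in U} \ker L_{\bfu}$; a generic $U$ satisfies $e(U) = \min(am,mk) = am$ in the regime $a \le k$. Combining this with the Gaussian binomial count $\binom{r}{a}_{q^m} = O(q^{ma(r-a)})$, a union bound over all $a$-dimensional subspaces of $\Fqm^r$ gives $\prob\bigl(\dim \Lambda_i(\CC) \le r - a\bigr) = O\bigl(q^{ma(r-a-n)}\bigr)$, and since $r \le n$ the exponent is at most $-m a^2 \le -m a$. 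The main technical obstacle will be to handle the non-generic $U$'s with $e(U) < am$: for these the per-subspace probability exceeds the generic $q^{-amn}$, but a stratification of the union bound by the $\Fq$-rank of the evaluation map $\bfx \mapsto (L_{\bfu_1}(\bfx), \ldots, L_{\bfu_a}(\bfx))$ (for an $\Fqm$-basis of $U$) should show that their count drops fast enough with $e(U)$ to preserve the final $O(q^{-ma})$ bound.
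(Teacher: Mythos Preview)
The paper does not supply a proof of this proposition; it is quoted verbatim from \cite{CC19} and used as a black box, so there is nothing in the paper to compare your argument against.

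On the substance of your sketch: the deterministic inequality is fine, and the probabilistic strategy via the left kernel of $\Lambda_i(\bfG)$ is the natural one. But the proposal stops precisely where the actual proof begins. You display the bound $O(q^{ma(r-a-n)})$ as if it were established, yet that bound uses $e(U)=am$ for \emph{every} $a$-dimensional $U$, which is false; you then acknowledge that the ``main technical obstacle'' is the non-generic $U$'s and assert that a rank stratification ``should'' handle them. That stratification is the proof, not a detail to be filled in later. Already for $a=1$ one only has the lower bound $\rank_{\Fq} L_{\bfu}\ge m-i$ (each nonzero univariate piece $P_l(x)=\sum_s u_{s,l}x^{[s]}$ has $\Fq$-kernel of dimension at most $i<m$), and the crude union bound $\sum_{\bfu\neq 0} q^{-n\,\rank L_{\bfu}}\le (q^{mr}-1)\,q^{-(m-i)n}$ is not $O(q^{-m})$ in general. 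You genuinely need to count, for each $\rho$, how many $\bfu$ (or how many subspaces $U$) realise $e(\bfu)=\rho$, and to check that this count decays fast enough in $m-\rho$ to beat the larger per-term probability $q^{-\rho n}$. None of this is carried out.

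Two smaller remarks. First, your injectivity of $\bfu\mapsto L_{\bfu}$ relies on $i<m$; in this paper one always has $n\le m$ and $k\le n$, hence $i<k\le m$, but you should make that hypothesis explicit. Second, if you want to lighten the combinatorics, note that Markov's inequality applied to $|\ker_{\mathrm{left}}\Lambda_i(\bfG)|$ gives
\[
\prob\bigl(\dim\Lambda_i(\CC)\le r-a\bigr)\;\le\; q^{-ma}\,\Bigl(1+\sum_{\bfu\neq 0} q^{-n\,\rank_{\Fq}L_{\bfu}}\Bigr),
\]
so it suffices to show the sum on the right is $O(1)$; this reduces the problem to the single-vector ($a=1$) stratification, which is still the missing step but at least avoids the subspace counting for general $a$.
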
 

Gabidulin codes have a significantly different behavior with respect
to the $q$--sum compared to random codes (see
Proposition~\ref{prop:lambdai(Gab)}). In fact, we observe that if
$i<n-k$,
\[
\dim(\Lambda_i(\Gab{k}{\bfg}))=k+i<(i+1)k=\dim(\Lambda_i(\CC)), 
\]
where $\Gab{k}{\bfg}$ is a $n$-Gabidulin code of dimension $k$, and
$\CC$ is a random code, and we know from the previous proposition that
the last equality is true with high probability.

In the Overbeck's attack, the operator $\Lambda_i(\cdot)$ is used for two
related reasons.
\begin{enumerate}
\item It provides a distinguisher on the public key based on the peculiar
  behavior of Gabidulin codes with respect to $\Lambda_i (\cdot)$.
  This permits to rule out the distortion matrix \cite{O08} and to recover a decomposition of the form \eqref{eq:alter_Gpub}, in order to
  decrypt any ciphertext computed with this public key.
\item Once we have discarded the distortion matrix, we have access to the
  secret Gabidulin code and we can recover its hidden structure, \ie
  an evaluation sequence.
\end{enumerate}
We observe that the second step is not necessary since, using
Algorithm~\ref{algo:gabi_dec}, one can directly decode any message, without knowing the evaluation sequence. Thus,
in the sequel, we focus on the first step.

\subsection{The structure of
  $\Lambda_i(\Gpub)$}\label{subsec:structureLambdaiGpub}
Let $i$ be a positive integer and $\Gpub = (\bfX ~|~ \Gsec) \bfP$ a
public key as in \eqref{eq:alter_Gpub}. Recall that, in the present
section, we suppose that $\Gsec$ is a generator matrix of a Gabidulin
code. Observe that, since $\bfP\in \GL{n+\lambda}{\Fq}$, we have
$\bfP^{[i]} = \bfP$ and hence,
\begin{equation}\label{eq:Lambdai(Gpub)}
	\Lambda_i(\Gpub) = (\Lambda_i(\bfX)~|~\Lambda_i(\Gsec))\bfP.
\end{equation}
% where,
% \begin{equation*}
% \bfS'=\begin{pmatrix}
% \bfS & & & \\
% &\bfS^{[1]}& & \\
% & &\ddots & \\
% & & & \bfS^{[i]}\\
% \end{pmatrix}
% \end{equation*}

We now assume that $i<n-k$ and we focus on the matrix
$(\Lambda_i(\bfX)~|~\Lambda_i(\Gsec))$.  If we denote the distortion
matrix $\bfX$ according to its rows, \ie 
% \[\bfX = \begin{pmatrix}
%   \bfx_0\\
%   \bfx_1\\
%   \vdots\\
%   \bfx_{k-1}\\
% \end{pmatrix},
% \
% \textrm{where}\ \bfx_{j}\in \Fqm^{\lambda},\ \textrm{then}\ \ 
% [\Lambda_i(\bfX)|\Lambda_i(\Gsec)]=\left(\begin{array}{c|c}
% 	\bfx_0 &\bfg\\
% 	\bfx_1 &\bfg^{[1]}\\
% 	\vdots &\vdots\\
% 	\bfx_{k-1} &\bfg^{[k-1]}\\
% 	\hline
% 	\vdots & \vdots\\
% 	\hline
% 	\bfx_0^{[i]} &\bfg^{[i]}\\
% 	\bfx_1^{[i]} &\bfg^{[i+1]}\\
% 	\vdots & \vdots \\
% 	\bfx_{k-1}^{[i]} &\bfg^{[k-1+i]}
% \end{array}\right).
% \]
\[\bfX = \begin{pmatrix}
  \bfx_0\\
  \bfx_1\\
  \vdots\\
  \bfx_{k-1}\\
\end{pmatrix},
\]
where $\bfx_{j}\in \Fqm^{\lambda}$ for any $0\leq j \leq k-1$, then
\[
(\Lambda_i(\bfX)~|~\Lambda_i(\Gsec))=\left(\begin{array}{c|c}
	\bfx_0 &\bfg\\
	\bfx_1 &\bfg^{[1]}\\
	\vdots &\vdots\\
	\bfx_{k-1} &\bfg^{[k-1]}\\
	\hline
	\vdots & \vdots\\
	\hline
	\bfx_0^{[i]} &\bfg^{[i]}\\
	\bfx_1^{[i]} &\bfg^{[i+1]}\\
	\vdots & \vdots \\
	\bfx_{k-1}^{[i]} &\bfg^{[k-1+i]}
\end{array}\right).
\]
Now, after performing some row elimination, we finally get
\[
\left(\begin{array}{c|c}
	\bfx_0 &\bfg\\
	\bfx_1 &\bfg^{[1]}\\
	\vdots &\vdots\\
	\bfx_{k-1} &\bfg^{[k-1]}\\
	\vdots &\vdots\\
	\bfx_{k-1}^{[i]} &\bfg^{[k-1+i]}\\
	\hline
	\vdots & \vdots\\
	\hline
	\bfx_0^{[i]}-\bfx_1^{[i-1]} &\bfzero\\
	\bfx_1^{[i]}-\bfx_2^{[i-1]} &\bfzero\\
	\vdots &\vdots\\
	\bfx_{k-2}^{[i]}-\bfx_{k-1}^{[i-1]} &\bfzero\\
\end{array}\right).
\]
Thus, we have the following.

\begin{lemma}\label{lem: rkGpub}
Let $ i<n-k$. Then, up to row elimination,
\begin{equation}\label{eq:LambdaiXLambdaiG}
(\Lambda_i(\bfX)~|~\Lambda_i(\Gsec))=	\begin{pmatrix}	
		\bfX' &\Moore{k+i}{\bfg}\\
		\Lambda_{i-1}(\bfX'') &	\bfzero
	\end{pmatrix},
\end{equation}
where,
\[
\bfX'=\begin{pmatrix}
	\bfx_0\\
	\vdots\\
	\bfx_{k-1}\\
	\bfx_{k-1}^{[1]}\\
	\vdots\\
	\bfx_{k-1}^{[i]}
\end{pmatrix}
\quad
\textrm{and}
\quad
\bfX'' = \bfX^{[1]}_{\{0, \ldots, k-2\}} - \bfX_{\{1,\ldots,k-1\}}.
\]
In detail, $\bfX^{[1]}_{\{0, \ldots, k-2\}}$ is the submatrix of $\bfX^{[1]}$
composed by its first $k-1$ rows and $\bfX_{\{1,\ldots,k-1\}}$ is the
submatrix of $\bfX$ composed by its rows
starting from the second one.
\end{lemma}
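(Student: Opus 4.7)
The plan is to carry out an explicit Gaussian elimination on the matrix $(\Lambda_i(\bfX)~|~\Lambda_i(\Gsec))$. First, I index its rows by pairs $(l, m)$ with $0 \leq l \leq k-1$ and $0 \leq m \leq i$: the row $(l, m)$ is $(\bfx_l^{[m]}, \bfg^{[l+m]})$. Because the right part $\bfg^{[l+m]}$ depends on $l$ and $m$ only through their sum, distinct pairs summing to the same value $j$ produce identical right parts, and this redundancy is the resource driving the elimination.

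Next I isolate a set of pivot pairs whose right parts exhaust the values $\{\bfg^{[j]}\}_{0 \leq j \leq k-1+i}$ without repetition, namely $(l, 0)$ for $0 \leq l \leq k-1$ together with $(k-1, m)$ for $1 \leq m \leq i$. Read in order, the right parts of these pivot rows stack to $\Moore{k+i}{\bfg}$ and their left parts to $\bfX'$, reproducing the top block of the claimed right-hand side. For every remaining pair $(l, m)$ (so $0 \leq l \leq k-2$ and $1 \leq m \leq i$), I replace the corresponding row by its difference with row $(l+1, m-1)$: both rows carry $\bfg^{[l+m]}$ on the right, so the new left part is $\bfx_l^{[m]} - \bfx_{l+1}^{[m-1]}$ and the new right part is $\bfzero$. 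Ordering the rows lexicographically by $(m, l)$ ensures that the subtracted row $(l+1, m-1)$ precedes $(l, m)$, so all these operations form a triangular, hence invertible, sequence of elementary row operations.

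It remains to identify the resulting bottom-left block with $\Lambda_{i-1}(\bfX'')$. For fixed $s = m-1 \in \{0, \dots, i-1\}$, the subfamily of new rows produced at level $m = s+1$ is $\{\bfx_l^{[s+1]} - \bfx_{l+1}^{[s]}\}_{0 \leq l \leq k-2}$; each entry equals $(\bfx_l^{[1]} - \bfx_{l+1})^{[s]}$, which is precisely a row of $(\bfX'')^{[s]}$ by the definition of $\bfX'' = \bfX^{[1]}_{\{0, \dots, k-2\}} - \bfX_{\{1, \dots, k-1\}}$. Stacking these subfamilies for $s = 0, 1, \dots, i-1$ gives exactly $\Lambda_{i-1}(\bfX'')$, and combined with the pivot block this yields the decomposition \eqref{eq:LambdaiXLambdaiG}. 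The only conceptual step is the choice of the pivot set, together with the ordering that makes the simultaneous subtractions a genuine triangular elimination; beyond that, the argument is a bookkeeping exercise.
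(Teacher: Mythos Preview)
Your proof is correct and follows essentially the same approach as the paper: the paper establishes the lemma via the explicit row elimination displayed just before the statement, subtracting rows with matching right-hand part $\bfg^{[j]}$ to zero out the Moore block. Your version is simply a more carefully indexed and justified rendition of that same elimination, including the observation that the transformation matrix is unit lower-triangular in the lex-$(m,l)$ ordering (hence invertible) and the explicit identification $\bfx_l^{[m]}-\bfx_{l+1}^{[m-1]}=(\bfx_l^{[1]}-\bfx_{l+1})^{[m-1]}$ that recovers $\Lambda_{i-1}(\bfX'')$.
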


We now observe that the row space of $\bfX''$, denoted
$\RowSp_{\Fqm}(\bfX'')$, is contained in the sum of the row spaces of
$\bfX$ and $\bfX^{[1]}$, which is $\RowSp_{\Fqm}(\Lambda_1(\bfX))$ and
so $\rank(\bfX'')\leq \min\{2s, \lambda\}$, where we recall that
$s=\rank(\bfX)$.

More generally,
$\RowSp_{\Fqm}(\Lambda_{i-1}(\bfX''))\subseteq
\RowSp_{\Fqm}(\Lambda_i(\bfX))$ for any $i\geq 1$. And
$\rank(\Lambda_{i-1}(\bfX''))\leq \min\{(i+1)s, \lambda\}$.

\subsection{Overbeck's attack}\label{subsec:Overbeck}
The attack consists in finding an $i<n-k$, for which
$\rank(\Lambda_{i-1}(\bfX''))=\lambda$. In this case,
\[
\dim(\Lambda_i(\GG_{pub}))=k+i+\lambda
\]
and the dimension of the dual is 
\[
\dim(\Lambda_i(\GG_{pub})^{\perp})=n-k-i.
\]
So, the code $\Lambda_i(\GG_{pub})$ admits a parity check of this form
\begin{equation}\label{eq:parityOfAttack}
	(\bfzero ~|~ \bfH_i)(\bfP^{-1})^{\top},
\end{equation} 
where $\bfH_i$ is a parity check matrix of
$\Lambda_i(\Gab{k}{\bfg})=\Gab{k+i}{\bfg}$.

After finding such an $i$, we can easily find a \textit{valid} column scrambler $\bfT\in \GL{n+\lambda}{\Fq}$,which will allow us to attack the system (see Theorem~\ref{thm:validColScramb} (\cite[Thm~5.3]{O08})).

\medskip
Therefore, {\bf the crucial part of the Overbeck's attack
  consists in finding (if there exists) a positive integer $i$,
  for which $\dim(\Lambda_{i-1}(\bfX'')) =\lambda$ and $\Lambda_i (\Csec)
  \neq \Fqm^n$ or equivalently
  $\dim(\Lambda_i(\Cpub))=\dim (\Lambda_i(\Csec)) + \lambda$.}

\begin{remark}\label{rem:PRW_assumpt}
  If for $i=n-k-1$, we have
  $\dim(\Lambda_{n-k-1}(\CC_{pub}))^{\perp}=1$, then we can perform
  the attack quite straightforwardly. Indeed, in this case there
  exists $\bfv \in \Fqmn$ which spans the entire dual.  Many papers in
  the literature describe the attack just for this choice $i$,
  claiming that we can perform it only if
  $\dim(\Lambda_{n-k-1}(\CC_{pub}))^{\perp}=1$. We stress out that {\bf this is not the
    only possible choice for $i$:} one only needs an $i<n-k$ for which
   $\Lambda_i(\CC_{pub})^\perp$ has the structure \eqref{eq:parityOfAttack}.
\end{remark}

\subsubsection{Description of the attack.}
We now briefly detail the procedure of
the attack (partially presented in the proof of \cite[Thm.~5.3]{O08}).

We know that $\Lambda_i(\CC_{pub})$ admits a parity check matrix $\Hpub$ (for simplicity, we omit the dependency on $i$) of the form \eqref{eq:parityOfAttack}. Thus, we look for
some $\bfT \in \GL{n+\lambda}{\Fq}$ for which
\begin{equation}\label{eq:T}
  \Hpub \bfT^{\top} = (\bfzero~|~\bfH')
\end{equation}
The matrix $\bfT$ is not unique. Furthermore, the following statement
taken from \cite[Thm~5.3]{O08} asserts that every invertible $\bfT$
satisfying \eqref{eq:T} is suitable to complete the attack.
For the sake of completeness, we give the proof of
this result. 

\begin{theorem}[{\cite[Thm~5.3]{O08}}]\label{thm:validColScramb}
	If there exists a positive integer $i<n-k$ for which the dimension
	of $\Lambda_i(\GG_{pub})^{\perp}$ is $n-k-i$ and if we denote by
	$\Hpub$ a generator matrix of this dual, then any
	$\bfT\in \GL{n+\lambda}{\Fq}$ such that
	\[
	\Hpub \bfT^{\top}=(\bfzero~|~\bfH')
	\]
	for some $\bfH'\in \Mspace{n-k-i, n}{\Fqm}$ is a \emph{valid
		column scrambler}, \ie there exists
	$\bfZ\in \Mspace{k,\lambda}{\Fqm}$ and $\bfg^{\star} \in \Fqm^n$
	of rank $n$, such that
	\[
	\Gpub = \bfS(\bfZ~|~\Moore{k}{\bfg^{\star}})\bfT,
	\]
	where $\Moore{k}{\bfg^\star}$ denotes the Moore matrix with
    generator vector $\bfg^\star$(see~\eqref{eq:MooreMatrix}).
\end{theorem}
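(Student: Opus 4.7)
The plan is to consider the code $\CC' := \Cpub \bfT^{-1}$ and to show it admits a generator matrix of the form $(\bfZ ~|~ \Moore{k}{\bfg^\star})$ for some $\bfZ \in \Mspace{k,\lambda}{\Fqm}$ and some $\bfg^\star \in \Fqm^n$ with $\rk(\bfg^\star) = n$. Since $\bfT$ is invertible and row operations can be absorbed into an invertible matrix $\bfS$, this is equivalent to the announced decomposition $\Gpub = \bfS(\bfZ ~|~ \Moore{k}{\bfg^\star})\bfT$.

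The first step is to describe $\Lambda_i(\CC')$. Because $\bfT \in \GL{n+\lambda}{\Fq}$ has $\Fq$--entries, Frobenius commutes with right multiplication by $\bfT^{-1}$, giving $\Lambda_i(\CC') = \Lambda_i(\Cpub)\bfT^{-1}$. The hypothesis implies that its dual is generated by $\Hpub \bfT^T = (\bfzero ~|~ \bfH')$, whence
\[
  \Lambda_i(\CC') = \Fqm^{\lambda} \oplus \EE, \qquad \EE := \RowSp(\bfH')^{\perp} \subseteq \Fqm^n,
\]
with $\dim_{\Fqm}\EE = k + i$. Setting $\bfN := \bfT\bfP^{-1}$ and decomposing $\bfN^{-1}$ in block form as $\left(\begin{smallmatrix}\bfA' & \bfB' \\ \bfC' & \bfD' \end{smallmatrix}\right)$ with $\bfD' \in \Mspace{n}{\Fq}$, the identity $\Moore{k}{\bfg}\bfD' = \Moore{k}{\bfg\bfD'}$ (which holds because $\bfD'$ has $\Fq$--entries, extending Lemma~\ref{lem:multGabByInvFq}) gives
\[
  \Gpub\bfT^{-1} = (\bfX~|~\Gsec)\bfN^{-1} = \bigl(\bfX\bfA' + \Moore{k}{\bfg}\bfC' ~\big|~ \bfX\bfB' + \Moore{k}{\bfg\bfD'}\bigr),
\]
reducing the theorem to proving that $\bfg^\star := \bfg\bfD'$ has rank $n$ and that the $\Fqm$--row span of $\bfX\bfB' + \Moore{k}{\bfg^\star}$ equals $\Gab{k}{\bfg^\star}$.

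To establish both statements, I would exploit the explicit form of $\Hpub$ derived in \S~\ref{subsec:structureLambdaiGpub}, namely $\Hpub = (\bfzero ~|~ \bfH_i)\bfP^{-T}$ with $\bfH_i$ a parity-check matrix of $\Gab{k+i}{\bfg}$. The hypothesis rewrites as $(\bfzero ~|~ \bfH_i)\bfN^T = (\bfzero ~|~ \bfH')$; writing $\bfN = \left(\begin{smallmatrix}\bfA & \bfB \\ \bfC & \bfD\end{smallmatrix}\right)$, a block decomposition then yields $\bfH_i\bfB^T = 0$ and $\bfH_i\bfD^T = \bfH'$, the latter of full row rank $n-k-i$. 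Combined with the identities imposed by $\bfN\bfN^{-1} = \bfI_{n+\lambda}$, this forces $\bfD'$ to be invertible, so that $\rk(\bfg^\star) = n$ and $\Moore{k+i}{\bfg^\star}$ is a full-rank generator of $\Gab{k+i}{\bfg^\star}$. An analogue of Lemma~\ref{lem: rkGpub} applied to $\Lambda_i(\CC')$ then exhibits $\Gab{k+i}{\bfg^\star}$ as a subcode of $\EE$; equality of dimensions forces $\EE = \Gab{k+i}{\bfg^\star}$, which in turn forces the rows of $\bfX\bfB'$ to lie in $\Gab{k}{\bfg^\star}$. A final row operation absorbed into $\bfS$ completes the decomposition.

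The main obstacle is the invertibility of $\bfD'$. It rests on a careful interplay between the block relations coming from $\bfN\bfN^{-1} = \bfI_{n+\lambda}$ and the full row-rank constraint $\rank(\bfH_i\bfD^T) = n - k - i$. One may alternatively sidestep this computation by arguing directly at the code level: show that the projection $\pi_2(\CC')$ of $\CC'$ onto the last $n$ coordinates has dimension $k$ (injectivity of $\pi_2$ on $\CC'$) and that $\Lambda_i(\pi_2(\CC'))$ has dimension $k+i$, which, by the characteristic minimality of $\dim \Lambda_i$ for Gabidulin codes (\emph{cf.} Propositions~\ref{prop:Lambdai(C)} and~\ref{prop:lambdai(Gab)}), forces $\pi_2(\CC')$ to be a Gabidulin code $\Gab{k}{\bfg^\star}$.
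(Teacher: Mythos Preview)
Your setup is exactly the paper's: you form $\bfN=\bfT\bfP^{-1}$, write $\Hpub=(\bfzero~|~\bfH_i)\bfP^{-T}$, and arrive at the same block equations $\bfH_i\bfB^T=\bfzero$ and $\bfH_i\bfD^T=\bfH'$. Where you diverge is in what you do with $\bfH_i\bfB^T=\bfzero$. You leave it there and try to push through the identities from $\bfN\bfN^{-1}=\bfI$ together with the rank condition on $\bfH'$ to get $\bfD'$ invertible; this is the step you flag as ``the main obstacle'', and indeed as written it is not justified. The paper instead observes that $\bfH_i\bfB^T=\bfzero$ forces $\bfB=\bfzero$ outright: each row of $\bfB$ lies in $\Fq^n$ and hence has rank weight at most~$1$, while $\ker\bfH_i=\Gab{k+i}{\bfg}$ has minimum rank distance $n-k-i+1\geq 2$. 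Once $\bfB=\bfzero$, the matrix $\bfN$ is block lower triangular, so $\bfA$ and $\bfD$ are invertible, and $\bfN^{-1}$ is block lower triangular as well, i.e.\ $\bfB'=\bfzero$ and $\bfD'=\bfD^{-1}$. This single observation simultaneously gives you the invertibility of $\bfD'$ and kills the cross term $\bfX\bfB'$, so there is nothing left to check about rows of $\bfX\bfB'$ lying in $\Gab{k}{\bfg^\star}$.

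Your proposed alternative route---deducing that $\pi_2(\CC')$ is a Gabidulin code from $\dim\Lambda_i(\pi_2(\CC'))=k+i$ via Propositions~\ref{prop:Lambdai(C)} and~\ref{prop:lambdai(Gab)}---does not work as stated. Proposition~\ref{prop:lambdai(Gab)} only says that Gabidulin codes satisfy this dimension equality; it gives no converse, and Proposition~\ref{prop:Lambdai(C)} is a probabilistic statement about random codes, not a characterisation. Many non-Gabidulin subcodes of Gabidulin codes satisfy the same dimension condition, so this cannot replace the missing step.
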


\begin{proof}
	Since $\dim(\Lambda_i(\GG_{pub})^{\perp})=n-k-i$, then this dual
	admits a generator matrix of the form \eqref{eq:parityOfAttack}.
	Now, consider $\bfT\in \GL{n+\lambda}{\Fq}$ such that
	\begin{equation}\label{eq:Overbeckattackproof}
		(\bfzero~|~\bfH_i)(\bfP^{-1})^{\top}\bfT^{\top}=(\bfzero~|~\bfH')
	\end{equation}
	for some $\bfH'\in \Mspace{n-k-i, n}{\Fqm}$.
	Denote,
	\[
	\bfT\bfP^{-1}=\begin{pmatrix}
		\bfA & \bfB\\
		\bfC & \bfD\\
	\end{pmatrix}
	\]
	where $\bfA\in \Mspace{\lambda}{\Fq}$,
	$\bfB\in\Mspace{\lambda,n}{\Fq}$, $\bfC\in\Mspace{n,\lambda}{\Fq}$
	and $\bfD\in \Mspace{n}{\Fq}$.
	From \eqref{eq:Overbeckattackproof},we have that
	\[
	\bfH_i \bfB^{\top}=0 \Longrightarrow \bfB=0.
	\]
	Since $\bfP \bfT^{-1}$ is invertible, this entails in particular that
	$\bfA \in \GL{\lambda}{\Fq}$ and $\bfD \in \GL{n}{\Fq}$.
	Then, we have that
	\[
	(\bfT\bfP^{-1})^{-1} = \bfP\bfT^{-1}=\begin{pmatrix}
		\bfA^{-1} &\bfzero\\
		-\bfD^{-1}\bfC\bfA^{-1} &\bfD^{-1}
	\end{pmatrix}
	\]
	and so we get,
	\[
	\Gpub \bfT^{-1} = \bfS(\bfX~|~\Moore{k}{\bfg})\bfP\bfT^{-1} =
	\bfS(\bfZ~|~\bfG')
  \]
  for some matrix $\bfZ$, where $\bfG'$ is a generator matrix of
  $\Gab{k}{\bfg}\bfD^{-1}$, which also equals
  $\Gab{k}{\bfg \bfD^{-1}}$ since $\bfD$ is nonsingular with entries
  in $\Fq$ (see Lemma~\ref{lem:multGabByInvFq}). \qed
\end{proof}

In order find such a $\bfT$, we compute the space of the matrices
$\bfT \in \Mspace{n+\lambda}{\Fq}$ such that the $\lambda$ leftmost
columns of $\Hpub\bfT^{\top}$ are zero. Then , we need to extract
a nonsingular matrix from this solution space. This last step can
be done by picking random elements in this space until
we find a nonsingular matrix.

Once such a column scrambler $\bfT$ is computed, we can compute
$\bfc \bfT^{-1}$ and remove the leftmost $\lambda$ entries. By
Theorem~\ref{thm:validColScramb}, each of these $\bfT$'s is a valid
column scrambler and it suffices to apply the Gabidulin codes decoder
to the former vector to recover the plaintext. Recall that, from
\S~\ref{ss:key_remark}, the decoder works independently on the
knowledge of $\bfg$.

\subsection{Analyzing the dimension of $\Lambda_i (\Cpub)$ for small
  $i$'s.}\label{subsec:attackForSmallI}
In this section we study
what happens if we apply the $q$-sum operator to the public key for
small $i$'s, namely $i=1$. In particular, we will see that in this
case we can always attack the system by applying either strategies
described in \S~\ref{sec:extension} or the classical Overbeck attack.

% First we observe that, following the notations of Lemma~\ref{lem:
% rkGpub}, we can relate the attack to the rank of
% $\Lambda_{i-1}(\bfX'')$.
%\begin{lemma}\label{lem:dimDualLambdaiGpub}
%  For any $i<n-k$, $\dim(\Lambda_i(\GG_{pub})^{\perp})=n-k-i$ if and
%  only if $\rank(\Lambda_{i-1}(\bfX''))=\lambda$.
%\end{lemma}
%\begin{proof}
%%  We first assume that
%%  $\dim(\Lambda_i(\GG_{pub})^{\perp})=n-k-i$. In particular,
%%  $\Lambda_i(\GG_{pub})$ admits a parity check matrix as in
%%  \eqref{eq:parityOfAttack}. We now suppose that
%%  $\rank(\Lambda_{i-1}(\bfX''))<\lambda$. Therefore, if we denote
%%  $\CC''$ the code which has $\bfX''$ has a generator matrix, we
%%  have that $\Lambda_{i-1}(\CC'')\subsetneqq \Fqm^{\lambda}$.  Let
%%  $\bfv$ be a nonzero vector in $\Lambda_{i-1}(\CC'')^{\perp}$ then
%%  $(\bfv, \bfzero)\in \Lambda_i(\GG_{pub})^{\perp}$ which is a
%%  contradiction.
%%
%%The converse inclusion follows directly from Lemma~\ref{lem: rkGpub}.
%The proof follows straightforwardly from Lemma~\ref{lem: rkGpub}.
%\end{proof}

%Therefore, we can conclude that, in order to apply Overbeck's attack,
%it suffices to look for an integer $i<n-k$ (if exists), such that
%$\rank(\Lambda_{i-1}(\bfX''))=\lambda$. 

First, we recall that, by Lemma~\ref{lem: rkGpub}, the matrix
$(\Lambda_1(\bfX)~|~\Lambda_1(\bfG))$ (see \eqref{eq:LambdaiXLambdaiG})
can be transformed into a matrix
\begin{equation}\label{eq:Lambda1XLambda1G}
\begin{pmatrix}
	\bfX' &\Moore{k+1}{\bfg}\\
	\bfX'' &\bfzero
\end{pmatrix}
\end{equation}
In this case, $\rank(\bfX'')\leq \min\{2s, \lambda\}$, where $s=\rank(\bfX)$. 
We now introduce the following useful lemma.
\begin{lemma}{\label{lem:LambdaiGpubblocks}}
	If $k\geq 4s+1$, then, up to row multiplications, 
		\begin{equation}{\label{eq:LambdaiBlock}}
[\Lambda_1(\bfX)~|~\Lambda_1(\bfG)]=	\begin{pmatrix}
		\bfzero &\Moore{k+1}{\bfg}\\
		\bfX''& \bfzero
	\end{pmatrix}
	\end{equation}
with a high probability.	
\end{lemma}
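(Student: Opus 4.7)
The plan is to start from Lemma~\ref{lem: rkGpub} applied with $i=1$, which already brings $[\Lambda_1(\bfX)~|~\Lambda_1(\bfG)]$, up to $\Fqm$-linear row operations, into the block form
\[
\begin{pmatrix}\bfX' & \Moore{k+1}{\bfg}\\ \bfX'' & \bfzero\end{pmatrix}.
\]
Since the bottom-right block is already zero, the only way to further cancel the upper-left block while preserving $\Moore{k+1}{\bfg}$ is to subtract $\Fqm$-linear combinations of the bottom rows from the top rows. Hence the further reduction is possible if and only if $\RowSp(\bfX') \subseteq \RowSp(\bfX'')$, and I would take this containment as the key statement to establish.

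Both row spaces sit inside $\RowSp(\Lambda_1(\bfX)) = \RowSp(\bfX) + \RowSp(\bfX^{[1]})$, which has $\Fqm$-dimension at most $2s$. Thus it suffices to prove that $\rank(\bfX'') = 2s$ holds with high probability, since then $\RowSp(\bfX'') = \RowSp(\Lambda_1(\bfX))$ automatically contains $\RowSp(\bfX')$. Writing $\bfX = \bfA\bfB$ with $\bfA \in \Mspace{k,s}{\Fqm}$ and $\bfB \in \Mspace{s,\lambda}{\Fqm}$ both of rank $s$, a routine calculation gives
\[
\bfX'' = \bigl(\bfA^{[1]}_{\{0,\dots,k-2\}}~|~-\bfA_{\{1,\dots,k-1\}}\bigr)\begin{pmatrix}\bfB^{[1]}\\ \bfB\end{pmatrix},
\]
so the question reduces to showing that both factors have full rank $2s$. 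The right-hand factor attains rank $2s$ as soon as $\RowSp(\bfB) \cap \RowSp(\bfB^{[1]}) = \{0\}$, the generic situation for random $\bfB$ of rank $s$, with failure probability $O(q^{-m})$.

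The main obstacle will be controlling the rank of the left-hand $(k-1)\times 2s$ factor $\bfA''$, whose rows $(\bfa_i^{[1]},-\bfa_{i+1})$ are correlated through the shared vectors $\bfa_i$. A non-trivial right-kernel element corresponds to some $(\bfv,\bfw) \in \Fqm^s \times \Fqm^s \setminus \{(0,0)\}$ satisfying the $k-1$ equations
\[
\bfa_i^{[1]} \cdot \bfv = \bfa_{i+1} \cdot \bfw, \qquad i = 0,\dots,k-2.
\]
For uniformly random $\bfa_i$'s (a distribution preserved by the $q$-th power Frobenius), each equation cuts the probability by a factor $q^{-m}$, so a fixed non-zero $(\bfv,\bfw)$ lies in the kernel with probability at most $q^{-m(k-1)}$. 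A union bound over the $\approx q^{m(2s-1)}$ projective classes of $(\bfv,\bfw)$ then gives a failure probability of order $q^{-m(k-2s)}$, and the hypothesis $k \geq 4s+1$ is precisely what is needed to make this quantity negligible. Combining the two estimates yields $\rank(\bfX'')=2s$ and hence $\RowSp(\bfX') \subseteq \RowSp(\bfX'')$ with high probability, which completes the elimination of $\bfX'$ and establishes~\eqref{eq:LambdaiBlock}.
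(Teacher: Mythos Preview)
Your proposal is correct, and the overall strategy coincides with the paper's: reduce to showing $\RowSp_{\Fqm}(\bfX') \subseteq \RowSp_{\Fqm}(\bfX'')$, and establish this by proving that with high probability $\RowSp_{\Fqm}(\bfX'') = \RowSp_{\Fqm}(\Lambda_1(\bfX))$, after which the inclusion $\RowSp_{\Fqm}(\bfX') \subseteq \RowSp_{\Fqm}(\Lambda_1(\bfX))$ finishes the job.

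The difference lies in how the rank of $\bfX''$ is controlled. The paper's argument is shorter: it extracts from $\bfX''$ the submatrix formed by every other row, so that the selected rows $\bfx_{2j}^{[1]} - \bfx_{2j+1}$ involve pairwise disjoint sets of rows of $\bfX$ and can be treated as independent uniform vectors; one then invokes Proposition~\ref{prop:Lambdai(C)}. This halving is exactly why the hypothesis $k \geq 4s+1$ appears: one needs $\lfloor (k-1)/2 \rfloor \geq 2s$ rows to survive. Your route via the factorisation $\bfX = \bfA\bfB$ and a union bound over right-kernel vectors of $\bfA''$ is a legitimate alternative and in fact yields the sharper failure probability $O(q^{-m(k-2s)})$, which is already negligible once $k > 2s$. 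So your sentence that ``the hypothesis $k \geq 4s+1$ is precisely what is needed'' is inaccurate for \emph{your} argument; that threshold is an artefact of the paper's alternate-row trick, not of your union bound. A second minor caveat: since $\bfX$ is drawn uniformly among matrices of rank exactly $s$, the rows of $\bfA$ are only approximately i.i.d.\ uniform in $\Fqm^s$ (one is implicitly conditioning on $\rank(\bfA)=s$), but the correction is of order $O(q^{-m(k-s+1)})$ and does not affect the conclusion.
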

\begin{proof}
  We need to prove that
  $\RowSp_{\Fqm}(\bfX')\subseteq\RowSp_{\Fqm}(\bfX'')$.
	We first claim that
    $\RowSp_{\Fqm}(\bfX'')=\RowSp_{\Fqm}(\Lambda_1(\bfX))$ with a high
    probability. We consider the submatrix of $\bfX''$ in
    $\Mspace{\lfloor\frac{k-1}{2}\rfloor, \lambda}{\Fqm}$ obtained by selecting alternate rows of $\bfX''$. This is a 
    uniformly random matrix in
    $\Mspace{\lfloor\frac{k-1}{2}\rfloor, \lambda}{\Fqm}$. By the
    assumption $\frac{k-1}{2}\geq 2s$, it has rank equal to
    $\min\{2s,\lambda\}=\rank(\Lambda_1(\bfX))$ with a high
    probability (by Proposition~\ref{prop:Lambdai(C)}). Thus,
    $\rank(\bfX'')\geq \rank(\Lambda_1(\bfX))$ with a high probability
    and so the claim follows.
	The result derives from remarking that
    $\RowSp_{\Fqm}(\bfX')\subseteq\RowSp_{\Fqm}(\Lambda_1(\bfX))$. \qed
\end{proof}
%First, notice that parameters often satisfy the assumption of this
%lemma (\ila{Find a table of parameters for classical GPT}).
We remark that if $\rank(\bfX)=s\geq \lambda/2$, then
$\rank(\bfX'')=\lambda$ with high probability and so we can apply
straightforwardly the Overbeck's attack (\S~\ref{sec: overbeck}).  One
could then think that it suffices to take a sufficiently small $s$ in
order to repair the system.  In the following section we show that
thanks to the structure of the matrix \eqref{eq:LambdaiBlock}, we can
construct an attack, which is an extension of the Overbeck's one,
which allows us to break the system independently from the rank of the
distortion matrix, even for the twisted Gabidulin GPT scheme.

\begin{remark}
  The condition $k \geq 4s+1$ required in
  Lemma~\ref{lem:LambdaiGpubblocks}, yields a range of parameters for which we can assert the validity of the
  result. Nevertheless, it is probably highly conservative and one could expect
  result to hold for smaller $k$ or equivalently larger $s$.
\end{remark}

\subsection{Puchinger, Renner and Wachter--Zeh variant of
  GPT}\label{subsec:PRW}
In \cite{PRW18}, the authors use simultaneously two distinct
techniques in order to resist to Overbeck's attack:
\begin{enumerate}
\item they impose the distortion matrix to have a low rank ({\em e.g.} $s = 1$ or $2$),
\item they replace Gabidulin codes by twisted ones (with parameters specified in Assumption~\ref{ass:parameters}).
\end{enumerate}

The rationale behind the use of twisted Gabidulin codes is that, one
step of Overbeck's attack consists in obtaining $\Lambda_{n-k-1}(\Csec)$
where $\Csec$ is the hidden Gabidulin code. Then the dual
$\Lambda_{n-k-1}(\CC)$ has dimension $1$ and immediately provides the
evaluation sequence. Based on this observation, the authors select
parameters for twisted Gabidulin codes such that none of the
$\Lambda_i(\Csec)$'s for $i>0$ may have codimension $1$ (see
\cite[Thm.~6]{PRW18}).
% \ila{Finalement j'ai fait la table que pour les twisted, car la table proposeé par Herve etait avec des parameters un peu differents et petits.}

\begin{table}
\caption{Parameters from \cite{PRW18}}
\centering
\begin{tabular}{ | c | c | c | c | c | c | c |}
	\hline
	$\hspace{0.3cm}q \hspace{0.3cm}$ & $\hspace{0.3cm}k\hspace{0.3cm}$ & $\hspace{0.3cm}n\hspace{0.3cm}$ & $\hspace{0.3cm} m \hspace{0.3cm}$ & $\hspace{0.3cm}\ell \hspace{0.3cm}$ &$\hspace{0.3cm}\lambda\hspace{0.3cm}$ &$\hspace{0.3cm}s\hspace{0.3cm}$\\
	\hline
	\hline
	$2$ &$18$ &$26$ &$104$ &$2$ &$6$ &$1$\\
	\hline
	$2$ &$21$ &$33$ &$132$ &$2$ &$8$ &$1$\\
	\hline
	$2$ &$32$ &$48$ &$192$ &$2$ &$12$ &$2$\\
	\hline 

\end{tabular}
\label{tab:twistedPar}
\end{table}

As mentioned in Remark~\ref{rem:PRW_assumpt}, the choice of
computing $\Lambda_{n-k-1}(\Csec)$ is only technical and can be
circumvented in many different ways. In fact, once the distortion
matrix $\bfX$ is discarded, we can access to $\Csec$ and, using the
discussion in \S~\ref{sec:decodingTwisted}, just knowing this
code is generally enough  to decode.  However, their approach
presents another difficulty for the attacker if one wants to apply
Overbeck's attack. Indeed, the proposed parameters consider a
distortion matrix of low rank, {\em e.g.} $s=1$ or $2$ (see
Table~\ref{tab:twistedPar}). Then, to get for $\Lambda_i(\Cpub)$ a
generator matrix of the form (\ref{eq:LambdaiBlock}) with
$\Lambda_{i-1}(\bfX'')$ of full rank, one needs $i$ to be large, while
the dimensions of the $\Lambda_i (\Csec)$ increase faster than for a
Gabidulin code. Thus, for some parameters it is possible that the
computation of the successive $\Lambda_i(\Cpub)$ provide successive
codes with generator matrices of the form (\ref{eq:LambdaiBlock}), so
that $\Lambda_i(\Csec)$ becomes the full code $\Fqm^n$ before
$\Lambda_{i-1}(\bfX'')$ reaches the full rank $\lambda$.  The core of our
extension in \S~\ref{sec:extension} is the observation that there is no need for $\bfX''$ to have full
rank to break the scheme.

\begin{example}\label{ex:param_PRW18}
  According to the Table~\ref{tab:twistedPar}, suppose that
  $n = 26$, $k=18$, $\lambda = 6$ and
  $s = 1$. Then, for $\bfX''$ to have full rank $\lambda = 6$, while
  $\bfX$ has rank 1, we need to compute $\Lambda_6(\Cpub)$.
  But since the secret code has dimension $18$ and it is a twisted Gabidulin
  code, we deduce that $\dim \Lambda_6(\Csec) \geq 26$ and, since $n=26$,
  this code is nothing else than $\Fqm^{26}$. Thus, for such parameters, we cannot apply the
  Overbeck's attack.
  In fact, even if instantiated with a Gabidulin code, the Overbeck's
  attack would fail for such parameters.
\end{example}

\section{An extension of Overbeck's attack}\label{sec:extension}
As explained earlier, Overbeck's technique consists in applying the
$q$--sum operator $\Lambda_i$ to the public code, for an $i$ such that the
public code has a generator matrix of the form
\begin{equation}\label{ex:strong_shape}
  \begin{pmatrix}
    \bfI_{\lambda} & \bfzero \\ \bfzero & \Lambda_i(\Gsec)
  \end{pmatrix} \bfP,
\end{equation}
where $\Lambda_i(\Csec) \neq \Fqm^n$.  This entails that the dual code
has a generator matrix of the form
\begin{equation}\label{eq:zeroblock}
  \left(\bf0 ~|~ \bfH \right) {(\bfP^{-1})}^{\top},
\end{equation}
where $\bfH^{\top}$ is a parity--check matrix of $\Lambda_i(\Csec)$.
Then, a valid column scrambler can be computed by solving a linear system.
The point of this section is to prove that one can relax the
constraint on $i$ and only expect $\Lambda_i(\Gpub)$ to have a
generator matrix ``splitting in two blocks'', \ie{}
\begin{equation}\label{eq:splitting_shape}
  \begin{pmatrix}
     \bfY & \bf0 \\ \bf0 & \Lambda_i(\Gsec)
  \end{pmatrix}\bfP,
\end{equation}
without requiring $\bfY$ to have full rank $\lambda$.

Note that the above-described setting is precisely what happens to
$\Lambda_1(\Gpub)$ when $s=\rank (\bfX) < \lambda/2$, see
\S~\ref{subsec:attackForSmallI}, Example~\ref{ex:param_PRW18} or
\S~\ref{subsec:attackSmalliTw}.

\begin{example}
  Back to Example~\ref{ex:param_PRW18}, for such parameters, even
  instantiated with a Gabidulin code, the Overbeck's attack fails
  because there is not any $i>0$ which gives a matrix of the shape
  (\ref{ex:strong_shape}).  However, under some assumptions on the
  parameters of the code, it is likely that $\Lambda_1 (\Gpub)$ has a
  generator matrix of the shape (\ref{eq:splitting_shape}).See for
  instance Lemmas~\ref{lem:LambdaiGpubblocks}
  and~\ref{lem:lambdaiGtpub}.
\end{example}

\subsection{Sketch of the attack}
Now, let us explain how to find the hidden splitting structure
(\ref{eq:splitting_shape}) without any knowledge of the
scrambling matrix $\bfP$. Assume that $\Lambda_i (\Cpub)$
has a generator matrix of the form
\begin{equation}\label{eq:gm_Lambda_i}
  \begin{pmatrix}
    \bfY & \bf0\\
    \bf0 & \bfG_{i}
  \end{pmatrix}\bfP,
\end{equation}
where $\bfY$ is a matrix with $\lambda$ columns, $\bfG_i$ is a
generator matrix of $\Lambda_i (\Csec)$ and $\Csec$ is the hidden
code of dimension $k$. The code $\Csec$ could be either a Gabidulin code in
the case of classical GPT or a twisted Gabidulin code (see respectively
\S~\ref{subsec:classicalGPT} and the beginning of
\S~\ref{sec:dontTwist}). % The code $\Lambda_i(\Csec)$ has length $n$
% and dimension $k + s_i$ for some integer $s_i$. Namely, $s_i = i$ if
% $\Csec$ is a Gabidulin code (Proposition~\ref{prop:lambdai(Gab)}) and
% $s_i=i+\ell(i+1)$ if it is a twisted Gabidulin code
% (Proposition~\ref{prop:distinguish_twisted}).

% \smallskip

% To explain the attack we need to introduce some notions on rank
% metric codes and finite dimensional algebras.

% Starting from a code $\Lambda_i(\Cpub)$ having a generator matrix of
% the shape~(\ref{eq:splitting_shape})

The idea consists in computing the {\em right stabilizer algebra}
of $\Lambda_i (\Cpub)$:
\[
  \stabr{\Lambda_i(\Cpub)} \eqdef \{\bfM \in \Mspace{n+\lambda}{\Fq}
  ~|~ \Lambda_i(\Cpub) \bfM \subseteq \Lambda_i(\Cpub)\}.
\]
This algebra can be computed by solving a linear system (see
\S~\ref{subsec:algebra}).  It turns out that it contains two
peculiar matrices, namely:
\begin{equation}\label{eq:idempotents}
\bfE_1  = \bfP^{-1}
\begin{pmatrix}
  \bfI_{\lambda} & \bf0 \\
  \bf0 & \bf0
\end{pmatrix} \bfP
\qquad \text{and}
\qquad
\bfE_2 = 
\bfP^{-1}
\begin{pmatrix}
  \bf0 & \bf0 \\ \bf0 & \bfI_n
\end{pmatrix} \bfP.
\end{equation}
The core of the attack consists in computing these two matrices, or
more precisely conjugates of these matrices, and then consider the code
$\Cpub \bfE_2$ which is somehow right equivalent to $\Csec$. In
particular, the right multiplication by $\bfE_2$ will annihilate the
distortion matrix $\bfX$. Let us now present the approach in more detail.

\subsection{Some algebraic preliminaries}\label{subsec:algebra}

\subsubsection{Split and indecomposable codes.}
The first crucial notion is that of {\em split} or {\em decomposable} codes.
\begin{definition}
  A code $\CC \subseteq \Fqm^n$ of dimension $k$ is said to {\em split} if
  it has a generator matrix of the form
  \[
    \begin{pmatrix}
      \bfG_1 & \bf0 \\  \bf0 & \bfG_2
    \end{pmatrix}\bfQ,
  \]
  for some matrices
  $\bfG_1 \in \Mspace{a,b}{\Fqm}, \bfG_2 \in \Mspace{k-a, n-b}{\Fqm}$
  and $\bfQ \in \GL{n}{\Fq}$.
  If no such block--wise decomposition exists, then the code is said to
  be {\em indecomposable}.
\end{definition}

\begin{remark}
Considering the code as a space of matrices, being split means that
  the code is the direct sum of two subcodes whose row supports ({\em
    i.e.} the sum of the row spaces of their elements) are in direct
  sum. This is the rank metric counterpart of Hamming codes which are
  the direct sum of two subcodes with disjoint Hamming supports.
  Note that this property is very rare and corresponds to somehow very
  {\em degenerated} codes.
\end{remark}

\subsubsection{Stabilizer algebras and conductors.}
We now define the notions that we will use throughout this section. Stabilizers are
useful invariants of codes, also called {\em idealizers} in the
literature. Conductors, are used for instance in \cite{CDG20}
and have often been used in cryptanalysis of schemes based on
algebraic Hamming metric codes, for instance \cite{COT17,CMP17,BC18}.

\begin{definition}\label{def:cond}
  Let $\CC\subseteq \Fqm^{n_1}$ and $\DC \subseteq \Fqm^{n_2}$
  be two $\Fqm$--linear codes of respective length $n_1, n_2$.
  The {\em conductor of $\CC$ into $\DC$} is defined as:
  \[
    \cond{\CC}{\DC} \eqdef \left\{ \bfA \in \Mspace{n_1, n_2}{\Fq}
      ~|~ \forall \bfc \in \CC,\ \bfc \bfA \in \DC.
    \right\}
  \]
  It is an $\Fq$--vector subspace of $\Mspace{n_1, n_2}{\Fq}$.
  Moreover, when $\CC = \DC$, then the conductor is an algebra
  which is usually called {\em right stabilizer} or {\em right idealizer}
  of $\CC$ and denoted
  \[
    \stabr{\CC} \eqdef \cond{\CC}{\CC} = \left\{\bfA \in \Mspace{n_1}{\Fq}
    ~|~ \forall \bfc \in \CC,\ \bfc \bfA \in \CC\right\}.
  \]
\end{definition}

\subsubsection{Relation to our problem.}
The first important point is that almost any code of length
$n+\lambda$ has a \textit{trivial right stabilizer}, \ie a stabilizer
of the form $\{\alpha \bfI_{n+\lambda} ~|~ \alpha \in \Fq\}$.
However, the stabilizer of $\Lambda_i(\Cpub)$ is non trivial, since it
contains the matrices \eqref{eq:idempotents}.

The second point is that $\stabr{\Lambda_i(\Cpub)}$ can be computed by
solving a linear system. In general, given a parity--check matrix $\bfH$
for $\CC$, the elements of $\stabr{\CC}$ are nothing but the solutions
$\bfM \in \Mspace{n+\lambda}{\Fq}$ of the system
\begin{equation}\label{eq:compute_stab}
  \bfG \bfM \bfH^{\top} = \bf0.
\end{equation}

\subsubsection{Idempotents and decomposition of the identity.}
The matrices $\bfE_1$ and $\bfE_2$ of (\ref{eq:idempotents}) are {\em
  idempotents} of the right stabilizer algebra of $\Lambda_i(\Cpub)$,
{\ie} elements satisfying $\bfE_1^2 = \bfE_1$ and $\bfE_2^2 = \bfE_2$.
In addition, they provide what is usually called {\em a decomposition
  of the identity with orthogonal idempotents}. The general definition
is given below.
\begin{definition}\label{def:decomp_Id}
  In a matrix algebra $\mathcal A \subseteq \Mspace{n}{\Fq}$, a tuple
  $\bfE_1, \dots, \bfE_r$ of nonzero idempotents are said to be a {\em
    decomposition of the identity into orthogonal idempotents} if they
  satisfy,
\[
  \forall 1 \leq i,j \leq r,\ \bfE_i \bfE_j = \bfzero
  \quad \textrm{and} \quad
  \bfE_1 + \cdots + \bfE_r = \bfI.
\]
Such a decomposition is said to be {\em minimal} if none of the $\bfE_i$'s
can be written as a sum of two nonzero orthogonal idempotents.
\end{definition}

\begin{proposition}
  A code $\CC \subseteq \Fqm^n$ is split if and only if $\stabr{\CC}$
  has a nontrivial decomposition of the identity into orthogonal
  idempotents. 
\end{proposition}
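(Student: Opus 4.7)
The plan is to prove both implications by exhibiting an explicit change of basis that witnesses the splitting and, conversely, by transporting the abstract idempotents into standard block--diagonal projections.

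For the forward direction, assume that $\CC$ admits a generator matrix of the form
\[
  \begin{pmatrix} \bfG_1 & \bf0 \\ \bf0 & \bfG_2 \end{pmatrix}\bfQ,
\]
with $\bfG_1 \in \Mspace{a,b}{\Fqm}$, $\bfG_2 \in \Mspace{k-a, n-b}{\Fqm}$ and $\bfQ \in \GL{n}{\Fq}$. Setting
\[
  \bfE_1 \eqdef \bfQ^{-1}\begin{pmatrix} \bfI_b & \bf0 \\ \bf0 & \bf0 \end{pmatrix}\bfQ,
  \qquad
  \bfE_2 \eqdef \bfQ^{-1}\begin{pmatrix} \bf0 & \bf0 \\ \bf0 & \bfI_{n-b} \end{pmatrix}\bfQ,
\]
I would check directly that $\bfE_1, \bfE_2 \in \Mspace{n}{\Fq}$ (since $\bfQ$ has entries in $\Fq$), that they are nonzero idempotents, that $\bfE_1\bfE_2 = \bfE_2\bfE_1 = \bfzero$ and that $\bfE_1 + \bfE_2 = \bfI_n$. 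Multiplying the generator matrix on the right by $\bfE_1$ (resp. $\bfE_2$) simply kills the second (resp. first) block, so both idempotents preserve $\CC$, proving they lie in $\stabr{\CC}$.

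For the converse, assume $\stabr{\CC}$ admits a nontrivial decomposition $\bfI_n = \bfE_1 + \cdots + \bfE_r$ with $r \geq 2$; by grouping terms we may reduce to $r=2$. The key linear--algebraic step is to show that the row spaces $V_i = \RowSp_{\Fq}(\bfE_i) \subseteq \Fq^n$ satisfy $\Fq^n = V_1 \oplus V_2$. This follows from $\bfE_1+\bfE_2 = \bfI_n$ (giving the sum), together with the orthogonality $\bfE_1\bfE_2 = \bfzero$ (giving that the sum is direct, since any $v \in V_1 \cap V_2$ can be written as $v = u\bfE_1 = w\bfE_2$, whence $v = v\bfE_1 = w\bfE_2\bfE_1 = \bf0$). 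Choosing an $\Fq$--basis of $\Fq^n$ adapted to this decomposition produces a matrix $\bfQ \in \GL{n}{\Fq}$ for which
\[
  \bfQ\bfE_1\bfQ^{-1} = \begin{pmatrix} \bfI_b & \bf0 \\ \bf0 & \bf0 \end{pmatrix},
  \qquad
  \bfQ\bfE_2\bfQ^{-1} = \begin{pmatrix} \bf0 & \bf0 \\ \bf0 & \bfI_{n-b} \end{pmatrix},
\]
where $b = \dim V_1$.

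It then remains to transfer the splitting back to $\CC$. The code $\CC' \eqdef \CC\bfQ^{-1} \subseteq \Fqm^n$ is stabilised on the right by $\begin{pmatrix} \bfI_b & \bf0 \\ \bf0 & \bf0 \end{pmatrix}$ and $\begin{pmatrix} \bf0 & \bf0 \\ \bf0 & \bfI_{n-b} \end{pmatrix}$, so for any codeword $(\bfc_1, \bfc_2) \in \CC'$ with $\bfc_1 \in \Fqm^b$ and $\bfc_2 \in \Fqm^{n-b}$, both $(\bfc_1, \bf0)$ and $(\bf0, \bfc_2)$ belong to $\CC'$. Consequently $\CC' = \CC_1' \oplus \CC_2'$ where $\CC_1' \subseteq \Fqm^b \times \{0\}$ and $\CC_2' \subseteq \{0\} \times \Fqm^{n-b}$, which yields a block--diagonal generator matrix for $\CC'$ and therefore a splitting generator matrix for $\CC = \CC'\bfQ$. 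I expect the only subtle point to be the verification that the idempotents can be simultaneously put into standard form via a single matrix with entries in $\Fq$ (not merely in $\Fqm$), but this reduces to the elementary fact that idempotents in $\Mspace{n}{\Fq}$ are conjugate over $\Fq$ to the standard projections of the same rank.
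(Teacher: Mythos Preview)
Your proof is correct and follows essentially the same approach as the paper. For the direction ``idempotents $\Rightarrow$ split'', the paper invokes the general fact that the $\bfE_i$'s commute and have split minimal polynomials, hence are simultaneously diagonalisable over $\Fq$; you instead reduce to $r=2$ and construct the diagonalising basis by hand from the row spaces $V_i = \RowSp_{\Fq}(\bfE_i)$, which is the same argument made explicit. The forward direction is identical in both treatments.
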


\begin{proof}
  Suppose that $\stabr{\CC}$ contains such a decomposition of the
  identity into orthogonal idempotents
  $\bfI = \bfE_1 + \cdots + \bfE_r$. Since the $\bfE_i$'s
  commute pairwise and are diagonalizable (indeed, being idempotent,
  they all cancelled by the split polynomial $X^2 - X$), they are
  simultaneously diagonalizable. Thus, there exists
  $\bfQ \in \GL{n}{\Fq}$ such that
    \[
     \bfE_1  =
    \bfQ^{-1}\begin{pmatrix}
      \bfI_{n_1}  & & (0)\\
      &  \ddots & \\
      (0) & & (0)
    \end{pmatrix}\bfQ
    , \dots,\ 
    \bfE_r =
    \bfQ^{-1}\begin{pmatrix}
      (0)  & & (0)\\
      &  \ddots & \\
      (0) & & \bfI_{n_r}
    \end{pmatrix}\bfQ,
  \]
  for some positive integers $n_1, \dots, n_r$ such that
  $n_1 + \cdots + n_r = n$.

The code $\CC' = \CC \bfQ$ has the matrices
  \begin{equation}\label{eq:diag_idempotents}
  \bfE'_1 = 
    \begin{pmatrix}
      \bfI_{n_1}  & & (0)\\
      &  \ddots & \\
      (0) & & (0)
    \end{pmatrix}
    , \dots,\ \bfE'_r = 
    \begin{pmatrix}
      (0)  & & (0)\\
      &  \ddots & \\
      (0) & & \bfI_{n_r}
    \end{pmatrix}
  \end{equation}
  in its right stabilizer algebra, and one can easily check that
  $\CC' = \CC' \bfE'_1 \oplus \cdots \oplus \CC' \bfE'_r$, leading to a
  block--wise generator matrix of $\CC'$.
  Thus, $\CC$ has a generator matrix of the form
  \begin{equation}\label{eq:decomp_G}
    \begin{pmatrix}
      \bfG_1 & & (0) \\ & \ddots & \\ (0) & & \bfG_r 
    \end{pmatrix} \bfQ^{-1}.
  \end{equation}
  Conversely, if $\CC$ has a generator matrix as in (\ref{eq:decomp_G}),
  one can easily deduce a decomposition of the identity in $\stabr{\CC}$ into the idempotents (\ref{eq:diag_idempotents}). \qed
\end{proof}

In particular, a code is indecomposable if and only if its right
stabilizer algebra has no nontrivial idempotent. Such an algebra is said
to be {\em local}.

A crucial aspect of minimal decompositions of the identity is the
following, sometimes referred to as the Krull--Schmidt Theorem.

\begin{theorem}[{\cite[Thm.~3.4.1]{DK94}}]\label{thm:Krull-Schmidt}
  Let $\mathcal A \subseteq \Mspace{n}{\Fq}$ be a matrix algebra and
  $\bfE_1, \dots, \bfE_r$ and $\bfF_1, \dots,$ $\bfF_s$ be two minimal
  decompositions of the identity into orthogonal idempotents.  Then,
  $r = s$ and there exists $\bfA \in \mathcal{A}^{\times}$ such that,
  after possibly re-indexing the $\bfF_i$'s, we have
  $\bfF_i = \bfA \bfE_i \bfA^{-1}$, for any $i \in \{1, \dots, s\}$.
\end{theorem}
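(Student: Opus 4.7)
The plan is to reduce the assertion to the classical Krull--Schmidt theorem for modules of finite length. I would view $\mathcal A$ as a right module over itself. Any decomposition of the identity $\bfI = \bfE_1 + \cdots + \bfE_r$ into orthogonal idempotents yields the direct sum decomposition $\mathcal A = \bfE_1 \mathcal A \oplus \cdots \oplus \bfE_r \mathcal A$ into right ideals, and conversely every such direct sum decomposition of $\mathcal A$ into right ideals arises in this way. The first step is to check that minimality of the decomposition is equivalent to the indecomposability of each summand $\bfE_i \mathcal A$ as a right $\mathcal A$-module: a nontrivial splitting $\bfE_i \mathcal A = M_1 \oplus M_2$ gives $\bfE_i = m_1 + m_2$ with $m_\ell \in M_\ell$, and one verifies directly that $m_1, m_2$ form a pair of nonzero orthogonal idempotents summing to $\bfE_i$, contradicting minimality.

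Next, since $\mathcal A$ is a finite-dimensional $\Fq$-algebra, it has finite length as a right module over itself, so Krull--Schmidt applies to the two decompositions $\mathcal A = \bigoplus_{i=1}^r \bfE_i \mathcal A = \bigoplus_{j=1}^s \bfF_j \mathcal A$ into indecomposable right $\mathcal A$-modules. This yields $r = s$ and, after reindexing the $\bfF_j$'s, isomorphisms $\varphi_i \colon \bfE_i \mathcal A \xrightarrow{\sim} \bfF_i \mathcal A$ of right $\mathcal A$-modules. The main obstacle lies here: the proof of this module-theoretic statement relies on each indecomposable summand having a local endomorphism ring, which in our setting is naturally identified with $\bfE_i \mathcal A \bfE_i$ acting by left multiplication. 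Locality would be deduced from Fitting's lemma applied to endomorphisms of a finite-dimensional vector space.

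Finally, I would translate these module isomorphisms into a conjugating unit of $\mathcal A$. Each $\varphi_i$ is entirely determined by the element $\bfA_i := \varphi_i(\bfE_i) \in \bfF_i \mathcal A \bfE_i$ (since $\bfE_i$ generates $\bfE_i \mathcal A$ as a right $\mathcal A$-module), and acts as $\varphi_i(x) = \bfA_i x$; let $\bfB_i \in \bfE_i \mathcal A \bfF_i$ similarly represent $\varphi_i^{-1}$, so that $\bfA_i \bfB_i = \bfF_i$ and $\bfB_i \bfA_i = \bfE_i$. Setting $\bfA := \sum_i \bfA_i$ and $\bfB := \sum_i \bfB_i$, the orthogonality relations $\bfE_i \bfE_j = \bfF_i \bfF_j = \bfzero$ for $i \neq j$, together with the identities $\bfA_i = \bfA_i \bfE_i = \bfF_i \bfA_i$, force $\bfA_i \bfB_j = \bfzero$ for $i \neq j$, hence $\bfA \bfB = \bfB \bfA = \bfI$. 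The same orthogonality gives $\bfA \bfE_i = \bfA_i = \bfF_i \bfA$, so $\bfA \in \mathcal A^{\times}$ and $\bfF_i = \bfA \bfE_i \bfA^{-1}$ for each $i$, which completes the proof.
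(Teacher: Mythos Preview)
The paper does not supply its own proof of this theorem: it is quoted verbatim as \cite[Thm.~3.4.1]{DK94} and used as a black box. Your argument is correct and is essentially the textbook proof (indeed, the one in Drozd--Kirichenko): pass from idempotent decompositions of $\bfI$ to direct-sum decompositions of $\mathcal A_{\mathcal A}$ into right ideals, invoke Krull--Schmidt for finite-length modules (using Fitting's lemma to get local endomorphism rings of the indecomposable pieces), and then assemble the resulting module isomorphisms into a single unit $\bfA \in \mathcal A^{\times}$ via the elements $\bfA_i = \varphi_i(\bfE_i)$. There is nothing to compare against in the paper itself.
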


In short: a minimal decomposition of the identity into idempotents is
unique up to conjugation.

\subsubsection{Algorithmic aspects.} 
Given a matrix algebra, a decomposition of the identity into minimal
idempotents can be efficiently computed using Friedl and Rony\'ai's
algorithms \cite{FR85,R90}. Such a calculation is presented in the
case of stabilizer algebras of codes in \cite{CDG20}.  Further, in
\S~\ref{subsec:discussion}, we present the calculation in a simple case
which turns out to be the generic situation for our cryptanalysis.

\subsection{Description of our extension of Overbeck's
  attack}\label{subsec:descr_attack_long}
The attack summarizes as follows.
Recall that the public code $\Cpub$ has a generator matrix
\[
  \Gpub = (\bfX ~|~ \Gsec) \bfP.
\]

\medskip

\noindent {\bf Step 1.}
Compute $i$ so that the code $\Lambda_i(\Cpub)$ splits as in
(\ref{eq:gm_Lambda_i}), \ie has the shape
\begin{equation}\label{eq:recall_decomp_mat}
  \begin{pmatrix}
    \bfY & \bfzero \\ \bfzero & \bfG_i
  \end{pmatrix}\bfP
\end{equation}
where $\bfG_i$ is a generator matrix of
$\Lambda_i(\Csec)$ and $\bfP$ the column scrambler.  In the sequel,
{\bf we suppose that $\Lambda_i(\Csec)$ is indecomposable}.
This assumption is discussed further in \S~\ref{subsec:discussion}.
% The following lemma will be useful further.

% \begin{lemma}\label{lem:row_space_Y}
%   The row space of $\bfY$ contains that of the distortion matrix $\bfX$.
% \end{lemma}

% \begin{proof}
%   Denote by $\CC_{\bfX}$ the row space of $\bfX$, then the row
%   space of $\bfY$ is nothing but $\Lambda_i(\bfX)$ %\ila{c'est vrai toujour ? j'ai un doute}
%   which contains the
%   row space of $\bfX$. \qed
% \end{proof}

\medskip
  
\noindent {\bf Step 2.} Compute $\stabr{\Lambda_i (\Cpub)}$.
We know that this algebra contains the matrices
\begin{equation}\label{eq:E1_E2}
  \bfE_1 = \bfP^{-1}
  \begin{pmatrix}
    \bfI_{\lambda} & \bfzero \\ \bfzero & \bfzero
  \end{pmatrix}
  \bfP
  \quad
  \textrm{and}
  \quad
  \bfE_2 = \bfP^{-1}
  \begin{pmatrix}
    \bfzero & \bfzero \\ \bfzero & \bfI_n
  \end{pmatrix}
  \bfP.
\end{equation}
Next, using the algorithms described in \cite{FR85,R90,CDG20},
compute a minimal decomposition of the identity of
$\stabr{\Lambda_i (\Cpub)}$ into orthogonal idempotents. The following
statement relates any such minimal decomposition to the matrices $\bfE_1$
and $\bfE_2$ in \eqref{eq:E1_E2}.

\begin{lemma}\label{lem:the_good_idempotent}
  Assume that $\lambda < n$. Under the assumption that
  $\Lambda_i(\Csec)$ is an indecomposable code, any minimal
  decomposition of the identity into orthogonal idempotents in
  $\Lambda_i (\Cpub)$ contains a unique element $\bfF$ of rank $n$.
  Moreover, there exists $\bfA \in \stabr{\Lambda_i(\Cpub)}^{\times}$
  such that $\bfF = \bfA^{-1} \bfE_2 \bfA$ where $\bfE_2$ is the
  matrix introduced in (\ref{eq:E1_E2}).
\end{lemma}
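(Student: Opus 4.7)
The plan is to exhibit one explicit minimal decomposition of the identity containing $\bfE_2$ and then transfer the conclusion to any other such decomposition via Krull--Schmidt (Theorem~\ref{thm:Krull-Schmidt}). Write $\mathcal A \eqdef \stabr{\Lambda_i(\Cpub)}$. Conjugating by $\bfP$, the code $\Lambda_i(\Cpub)\bfP^{-1}$ splits as $\CC_1 \oplus \CC_2$ on two complementary coordinate sets of sizes $\lambda$ and $n$, where $\CC_2$ is right equivalent to $\Lambda_i(\Csec)$. A direct block computation shows that, in this $(\lambda+n)$--block decomposition, $\mathcal A$ consists of matrices $\begin{pmatrix} \bfA & \bfB \\ \bfC & \bfD \end{pmatrix}$ with $\bfA \in \stabr{\CC_1}$, $\bfD \in \stabr{\CC_2}$, $\bfB \in \cond{\CC_1}{\CC_2}$ and $\bfC \in \cond{\CC_2}{\CC_1}$. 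In particular $\bfE_2 \mathcal A \bfE_2 \cong \stabr{\Lambda_i(\Csec)}$, which is local by the indecomposability assumption, so $\bfE_2$ is already a primitive idempotent of $\mathcal A$.

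Next, I would refine $\bfE_1$ into primitive pairwise orthogonal idempotents $\bfF_1, \dots, \bfF_{r-1}$ inside the corner subalgebra $\bfE_1 \mathcal A \bfE_1$; since any such $\bfF_j$ satisfies $\bfF_j = \bfE_1 \bfF_j \bfE_1$, one has $\bfF_j \mathcal A \bfF_j = \bfF_j (\bfE_1 \mathcal A \bfE_1) \bfF_j$, so the $\bfF_j$'s remain primitive in $\mathcal A$. Setting $\bfF_r \eqdef \bfE_2$ then produces a minimal decomposition $\bfI = \bfF_1 + \cdots + \bfF_r$ of $\mathcal A$ containing $\bfE_2$. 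The crucial quantitative ingredient is a rank count: any family of pairwise orthogonal idempotents summing to $\bfI$ induces a direct sum decomposition $\Fq^{n+\lambda} = \bigoplus_j \img(\bfF_j)$, hence their ranks sum to $n + \lambda$. In our decomposition, $\rank(\bfF_1) + \cdots + \rank(\bfF_{r-1}) = \rank(\bfE_1) = \lambda < n$, so each $\bfF_j$ with $j < r$ has rank at most $\lambda$, and $\bfE_2$ is the unique rank--$n$ element of this particular decomposition.

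Finally, by Theorem~\ref{thm:Krull-Schmidt}, any other minimal decomposition $\bfG_1, \dots, \bfG_s$ satisfies $s = r$ and, after reindexing, $\bfG_j = \bfA^{-1} \bfF_j \bfA$ for a common $\bfA \in \mathcal A^{\times}$. Since conjugation preserves rank, exactly one $\bfG_j$ has rank $n$, and it equals $\bfA^{-1} \bfE_2 \bfA$, giving both uniqueness and the claimed conjugacy to $\bfE_2$.

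The step I expect to be the main obstacle is the identification $\bfE_2 \mathcal A \bfE_2 \cong \stabr{\Lambda_i(\Csec)}$ and, more broadly, the block description of $\mathcal A$, which rests on carefully unpacking the stabiliser of a code that is a direct sum on disjoint supports. Once this reduction is in place, the rank accounting and Krull--Schmidt assemble the proof cleanly, with the strict inequality $\lambda < n$ being precisely what separates $\bfE_2$ from any refinement of $\bfE_1$.
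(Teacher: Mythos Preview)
Your proposal is correct and follows essentially the same route as the paper: show $\bfE_2$ is primitive because $\Lambda_i(\Csec)$ is indecomposable, refine $\bfE_1$ to obtain one minimal decomposition containing $\bfE_2$, then invoke Krull--Schmidt to transport the conclusion to any other minimal decomposition. Your version is in fact a bit more careful than the paper's on two points: you make the rank accounting explicit (the refinements of $\bfE_1$ have ranks summing to $\lambda<n$, hence none can have rank $n$), and you justify why idempotents primitive in $\bfE_1\mathcal A\bfE_1$ remain primitive in $\mathcal A$; the paper leaves both implicit. Your anticipated ``main obstacle'' (the block description of $\mathcal A$) is exactly what the paper states separately as Proposition~\ref{prop:stab_split}, though for the lemma itself the paper bypasses this and argues directly that a splitting of $\bfE_2$ would split the code $\Lambda_i(\Csec)$.
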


\begin{proof}
  Consider the pair $\bfE_1, \bfE_2$ introduced in
  (\ref{eq:E1_E2}). The matrix $\bfE_2$ has rank $n$ and projects the
  code $\Lambda_i(\Cpub)$ onto the code with generator matrix
  $(\bfzero\ \bfG_i) \bfP$, where $\bfG_i$ is a generator matrix of
  $\Lambda_i(\Csec)$. Since $\Lambda_i(\Cpub)$ is supposed to be
  indecomposable, $\bfE_2$ cannot split into
  $\bfE_2 = \bfE_{21} + \bfE_{22}$ such that
  $\bfE_{21} \bfE_{22} = \bfE_{22} \bfE_{21} = \bfzero$, since this
  would contradict the indecomposability of $\Lambda_i (\Csec)$.
  Next, either $\bfE_1, \bfE_2$ is a minimal decomposition or,
  $\bfE_1$ splits into a sum of orthogonal idempotents (if the code
  with generator matrix $\bfY$ splits). In the latter situation, one
  deduces a minimal decomposition of the identity of the form
  $\bfE_{11}, \dots, \bfE_{1r}, \bfE_2$. Now,
  Theorem~\ref{thm:Krull-Schmidt}, permits to conclude that any other
  minimal decomposition is conjugate to the previous one and hence
  contains a unique element of rank $n$ which is conjugate with
  $\bfE_2$. \qed
\end{proof}

\medskip

\noindent {\bf Step 3.} Once we have computed a minimal decomposition
of the identity into minimal idempotents, according to
Lemma~\ref{lem:the_good_idempotent} and Theorem~\ref{thm:Krull-Schmidt}, we have computed
$\bfF \in \stabr{\Lambda_i(\Cpub)}$ of rank $n$ satisfying
$\bfF = \bfA^{-1} \bfE_2 \bfA$ for some unknown matrix
$\bfA \in \stabr{\Lambda_i(\Cpub)}^\times$.
% Moreover, since we know $\bfF$ which is idempotent of rank $n$ we can
% diagonalise it and hence compute $\bfZ \in \GL{n+\lambda}{\Fq}$
% such that
% \begin{equation}
%   \label{eq:diag_F}
%   \bfF = \bfZ
%   \begin{pmatrix}
%     \bfzero & \bfzero \\ \bfzero & \bfI_n
%   \end{pmatrix} \bfZ^{-1}.
% \end{equation}
% Next, we have the
% following statement.

\begin{proposition}\label{prop:almost_the_secret}
  The code, $\Cpub \bfF$ is contained in the code with generator
  matrix
  \[ \left(\bfzero ~|~ \Gisec
    \right) \bfP \bfA,
  \]
  where $\Gisec$ is a generator matrix of the code $\Cisec$ introduced
  in Definition~\ref{def:ct_cond}.  % Moreover, if
  % $\cond{\CC_{\bfX}}{\Csec} = 0$ (see Definition~\ref{def:cond}),
  % where $\CC_{\bfX} \subseteq \Fqm^{\lambda}$ denotes the row space of
  % $\bfX$, then $\Cpub \bfF$ equals the code with generator matrix
  % \[
  %   \left(\bfzero ~|~ \Gsec \right) \bfP \bfA.
  % \]
\end{proposition}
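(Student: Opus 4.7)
The plan is to reduce the statement to a coordinate--level computation and then exploit the $\Fq$--rationality of the matrices $\bfA$, $\bfP$ and $\bfE_2$ to pass from membership in $\Lambda_i(\Csec)$ to membership in the closure $\Cisec$. Since $\bfF = \bfA^{-1}\bfE_2\bfA$ and $\bfA$ is invertible, proving the desired inclusion is equivalent to showing that, for every $\bfc\in\Cpub$, the vector $\bfc\bfA^{-1}\bfE_2$ lies in the code with generator matrix $(\bfzero\mid\Gisec)\bfP$.

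First I would establish a weaker statement: $\bfc\bfA^{-1}\bfE_2$ is of the form $(\bfzero\mid\bfb)\bfP$ with $\bfb\in\Lambda_i(\Csec)$. Indeed, since $\Cpub\subseteq\Lambda_i(\Cpub)$ and $\bfA\in\stabr{\Lambda_i(\Cpub)}^{\times}$, one has $\bfc\bfA^{-1}\in\Lambda_i(\Cpub)$. By the assumption of Step~1, $\Lambda_i(\Cpub)$ has a generator matrix of the split shape~(\ref{eq:recall_decomp_mat}); writing $\bfc\bfA^{-1}\bfP^{-1} = (\bfa\mid\bfb)$ therefore forces $\bfb\in\Lambda_i(\Csec)$, and a direct calculation using the explicit form of $\bfE_2$ in~(\ref{eq:E1_E2}) produces $\bfc\bfA^{-1}\bfE_2 = (\bfzero\mid\bfb)\bfP$.

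The main step is then to upgrade $\bfb\in\Lambda_i(\Csec)$ to $\bfb\in\Cisec$. By Remark~\ref{rem:alter_closure}, it suffices to check that $\bfb^{[j]}\in\Lambda_i(\Csec)$ for every $j\in\{0,\ldots,i\}$. Here I would use two facts: first, since $\bfc\in\Cpub$, each Frobenius iterate $\bfc^{[j]}$ with $j\le i$ lies in $\Cpub^{[j]}\subseteq\Lambda_i(\Cpub)$; second, as $\bfA$ and $\bfP$ have entries in $\Fq$, they commute with the Frobenius, so $\bfc^{[j]}\bfA^{-1}\bfP^{-1} = (\bfc\bfA^{-1}\bfP^{-1})^{[j]} = (\bfa^{[j]}\mid\bfb^{[j]})$. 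Rerunning the splitting argument of the previous paragraph on $\bfc^{[j]}\bfA^{-1}$ then yields $\bfb^{[j]}\in\Lambda_i(\Csec)$, whence $\bfb\in\Cisec$ after intersecting over $j$. Right--multiplying the resulting equality by $\bfA$ concludes the proof.

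The main obstacle is conceptual rather than computational: the statement mentions the closure $\Cisec$ and not merely the ambient $\Lambda_i(\Csec)$, and one has to pinpoint where this extra Frobenius--stability comes from. It arises precisely from restricting to $\bfc\in\Cpub$ (rather than to a general element of $\Lambda_i(\Cpub)$) combined with the $\Fq$--rationality of $\bfA$, $\bfP$ and $\bfE_2$; without this subtlety one only recovers the weaker conclusion of the second paragraph, which would not be sufficient for the decoder of \S~\ref{sec:decodingTwisted} to be applied to the output of the attack.
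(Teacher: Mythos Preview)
Your proof is correct and follows essentially the same approach as the paper's: both use that $\bfA^{-1}$ stabilises $\Lambda_i(\Cpub)$ together with the split shape~(\ref{eq:recall_decomp_mat}) to get the zero block, and then exploit the $\Fq$--rationality of $\bfA,\bfP$ (hence of $\bfF$) to commute with Frobenius and descend from $\Lambda_i(\Csec)$ to $\Cisec$ via Remark~\ref{rem:alter_closure}. The only cosmetic difference is that the paper phrases the argument at the level of codes (showing $\Cpub\bfF\subseteq\tclosure{\Lambda_i(\Cpub)}{i}\bfF$ and then identifying the right--hand side), whereas you track an individual codeword $\bfc$ and its Frobenius iterates; the content is the same.
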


Before proving the previous statement, let us discuss it quickly.  The
result may seem disappointing since, even if we discarded the
distortion matrix, we do not recover exactly the secret code.
However,
\begin{enumerate}
\item the approach is relevant for small $i$'s, and if $i\leq t$,
  where $t$ is the rank of the error term in the encryption process,
  then, the algorithm described in \S~\ref{sec:decodingTwisted}
  decodes $\Ctsec$ (and hence $\Cisec$ since it is contained in
  $\Ctsec$) as efficiently as $\Csec$ itself.
\item In \S~\ref{subsec:discussion}, we provide some heuristic
  claiming that, most of the time, $\Cpub \bfF$ is nothing but the
  code with generator matrix
  \[
    \left(\bf0 ~|~ \Gsec \right) \bfP \bfA.
  \]
%\item \alain{si j'ai le temps : structure recovery}
\end{enumerate}

\begin{proof}[of Proposition~\ref{prop:almost_the_secret}]
  Recall that $\bfF = \bfA^{-1} \bfE_2 \bfA$ for some matrix
  $\bfA \in \stabr{\Lambda_i(\Cpub)}$. Then, since $\bfA$ is invertible,
  we deduce that $\Lambda_i(\Cpub) \bfA^{-1} = \Lambda_i(\Cpub)$.
  Therefore,
  \[
    \Lambda_i(\Cpub) \bfF = \Lambda_i(\Cpub) \bfE_2 \bfA.
  \]
  From~(\ref{eq:recall_decomp_mat}) and (\ref{eq:E1_E2}), the code
  $\Lambda_i (\Cpub)\bfE_2$ has a generator matrix of the form
  \( (\bfzero ~|~ \bfG_i)\bfP\) and hence the code
  $\Lambda_i(\Cpub) \bfF$ has a generator matrix
  \begin{equation}\label{eq:gen_mat_LCpubF}
    (\bfzero ~|~ \bfG_i) \bfP \bfA.
  \end{equation}
  Next, the code $\Cpub$ is
  contained in $\Lambda_i(\Cpub)$ but also in
  $\tclosure{\Lambda_i(\Cpub)}{i}$. Moreover, according to
  Remark~\ref{rem:alter_closure}, we have
  \[
    \Cpub \subseteq \Cipub = \bigcap_{j = 0}^i {\left(\Lambda_i(\Cpub)
      \right)}^{[-j]}.
  \]
  Since both $\bfP$ and $\bfA$ have their entries in $\Fq$, they commute with
  the operations of raising to any $q$--th power
  and we deduce that
  \[\Cpub \bfF \subseteq \tclosure{\Lambda_i(\Cpub)}{i}\bfF. \]
  Then, from~(\ref{eq:gen_mat_LCpubF}), we deduce that $\Cpub \bfF$
  is contained in the code with generator matrix
  \[
    \left(\bfzero ~|~ \Gisec\right) \bfP \bfA.
  \]\qed
\end{proof}

\medskip

\noindent {\bf Step 5.} With the previous results at hand, given a
ciphertext $\bfy = \bfm \Gpub + \bfe$ with $\rank(\bfe) \leq t$, we can
compute
\[
  \bfy \bfF= \bfm \Gpub \bfF + \bfe \bfF.
\]
Then, we remove its $\lambda$ leftmost entries.  Since $\bfF$ has its
entries in $\Fq$, $\rank(\bfe \bfF)\leq \rank(\bfe)$.  Next,
$\bfm \Gpub \bfF$ with the $\lambda$ leftmost entries removed is a
codeword in $\tclosure{\Lambda_i(\Csec)}{i}$  which can be decoded using
the algorithm introduced in \ref{sec:decodingTwisted}. This yields the
plaintext $\bfm$.
 
\subsection{Summary of the attack}
According to the description in \S~\ref{subsec:descr_attack_long}, the
attack is now summarized in Algorithm~\ref{algo:attack} below.

\begin{algorithm}
  \DontPrintSemicolon
  \KwIn{$\Gpub$, a ciphertext $\bfy$ and the rank of the error term $t$}
  \KwOut{A pair $(\bfm \bfe) \in \Fqm^{k}\times \Fqm^n$
  such that $\rank(\bfe) = t$ and $\bfy = \bfm \Gpub +\bfe$ or `?' if fails}

\medskip
  Compute a generator matrix of $\Lambda_i (\Cpub)$ for the least $i$
  for which the code splits.  \;
  \If{no such $i$ exists}{Return `?'}

  Compute a minimal decomposition of the identity of
  $\stabr{\Lambda_i(\CC)}$ and extract its unique term $\bfF$ of
  rank $n$.\;
  \If{no such $\bfF$ exists}{Return `?'}

  Compute $\bfy \bfF$ and apply to it the decoder described in
  \ref{sec:decodingTwisted}.\;

  {return the output $\bfm$ of the decoder (possibly `?' if the
    decoder fails)}.\;
	\caption{Summary of the attack}\label{algo:attack}
\end{algorithm}

\subsection{Discussions and simplifications}\label{subsec:discussion}
For the attack presented in Algorithm~\ref{algo:attack} to work,
several assumptions are made. Here we discuss these assumptions and
their rationale. We also point out that in our specific case, the
algebra $\stabr {\Lambda_i(\Cpub)}$ will be very specific. This may
permit to avoid to consider the difficult cases of Friedl  Rony\'ai's
algorithms.

\subsubsection{Indecomposability of $\Lambda_i(\Csec)$.}
An important assumption for the attack to succeed is that
$\Lambda_i(\Csec)$ does not split. Note first that in the classical GPT case,
 $\Csec$ is a Gabidulin code. And so, this always holds as soon as $i < n-k$.
 
 This is a consequence of the following statement and the fact that if
 $\Csec$ is a Gabidulin code, and so for any $i > 0$, also
 $\Lambda_i(\Csec)$ is a Gabidulin code. Thus, according to the
 following statement it is indecomposable.

\begin{proposition}
  An MRD code $\CC \varsubsetneq \Fqm$ never splits.
\end{proposition}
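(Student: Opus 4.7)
The plan is to argue by contradiction using the rank metric Singleton bound applied to the two blocks of a hypothetical splitting. The decisive feature is that the scrambler $\bfQ$ in the definition of splitting lies in $\GL{n}{\Fq}$ and \emph{not} in $\GL{n}{\Fqm}$, so right multiplication by $\bfQ$ is a rank isometry and $\CC\bfQ$ remains MRD with the same parameters $[n,k]$. This means I may assume from the outset that $\CC$ itself has a generator matrix of the block shape $\begin{pmatrix}\bfG_1 & \bfzero \\ \bfzero & \bfG_2\end{pmatrix}$ with $\bfG_1 \in \Mspace{k_1,b}{\Fqm}$, $\bfG_2 \in \Mspace{k_2,n-b}{\Fqm}$, $k_1 + k_2 = k$, and $1 \leq b \leq n-1$, $k_1,k_2 \geq 1$. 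The degenerate cases where $b \in \{0,n\}$ or one of the $k_i$ vanishes are ruled out by the same bound applied to a single block: they would force every codeword to have rank at most $b$ or $n-b$, which combined with MRD-ness already yields $k \geq n$.

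Write $\CC_1 \subseteq \Fqm^b$ and $\CC_2 \subseteq \Fqm^{n-b}$ for the $\Fqm$-linear codes generated by $\bfG_1$ and $\bfG_2$, of respective dimensions $k_1$ and $k_2$. For any nonzero $\bfc_1 \in \CC_1$, the codeword $(\bfc_1, \bfzero) \in \CC$ has rank equal to $\rank(\bfc_1) \leq b$, so the MRD property of $\CC$ forces
\[
d_{\min}(\CC_1) \geq n - k + 1,
\]
and symmetrically $d_{\min}(\CC_2) \geq n - k + 1$. Since $\CC_1$ and $\CC_2$ are $\Fqm$-linear of lengths $b, n-b$, both bounded above by $n \leq m$, the rank metric Singleton bound applies to each and yields $d_{\min}(\CC_1) \leq b - k_1 + 1$ and $d_{\min}(\CC_2) \leq (n-b) - k_2 + 1$. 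Combining both pairs of inequalities gives $k_1 \leq k - (n-b)$ and $k_2 \leq k - b$, hence
\[
k = k_1 + k_2 \leq (k - (n-b)) + (k - b) = 2k - n,
\]
so $n \leq k$, contradicting $\CC \varsubsetneq \Fqm^n$.

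The proof is essentially mechanical once the setup is in place; the only conceptual point to keep in mind is that MRD-ness must be preserved by the change of coordinates $\bfQ$, which is precisely why the definition of splitting restricts $\bfQ$ to have entries in $\Fq$ (a block decomposition through an $\Fqm$-matrix would in general destroy the rank structure and the proof would collapse). Beyond this observation, I do not anticipate any real technical obstacle.
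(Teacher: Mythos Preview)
Your proof is correct and follows essentially the same route as the paper: both arguments apply the Singleton bound to each block $\CC_i$ to obtain $d_{\min}(\CC_i)\leq n_i-k_i+1$, combine it with the MRD lower bound $d_{\min}(\CC_i)\geq n-k+1$ coming from the embedding $(\bfc_i,\bfzero)\in\CC$, and sum the two resulting inequalities to reach the contradiction $n\leq k$. You are simply a bit more explicit about the role of $\bfQ\in\GL{n}{\Fq}$ and the degenerate cases, but the substance is identical.
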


\begin{proof}
  Let $\CC \subseteq \Fqm^n$ be an MRD code of dimension $k$. Suppose
  it splits into a direct sum of two codes $\CC_1, \CC_2$ of
  respective lengths $n_1, n_2$ and dimensions $k_1, k_2$.
  Then, $\CC_1$ has codewords of rank weight $n_1 - k_1 + 1$
  and $\CC_2$ has words of weight $n_2 - k_2 + 1$.
  Such words are also words of $\CC$ and, since $\CC$ is MRD, we have
  \begin{eqnarray*}
    n_1 - k_1 + 1 & \geq & n - k + 1 \\
    n_2 - k_2 + 1 & \geq & n - k + 1 
  \end{eqnarray*}
  Summing up these two inequalities and using the fact that
  $n_1+n_2 = n$ and $k_1+k_2 = k$, we get a contradiction.\qed
\end{proof}

In the general case of twisted Gabidulin codes the situation is more
complicated. However, twisted Gabidulin codes are contained in Gabidulin
codes of larger dimensions, hence so are their images by the
$\Lambda_i$ operator. It seems very unlikely that a Gabidulin code
could contain large subcodes that split.

\subsubsection{On the structure of $\stabr{\Lambda_i(\Cpub)}$.}
A crucial step of the attack is the computation of a decomposition of
the identity of $\stabr{\Lambda_i(\Cpub)}$ into a sum of orthogonal idempotents.
For this, we referred to Friedl Rony\'ai \cite{FR85,R90}. Actually, our setting
is rather specific and the structure of this stabilizer algebra is pretty well
understood. Let us start with a proposition.

% \alain{ne pas lire cette sous-section}
% In \cite{CDG20} is given an algorithm to compute the matrices
% (\ref{eq:idempotents}) and a good candidate for the distortion matrix
% $\bfP$. In the sequel we give a detailed approach inspired from this
% reference but adapted to the current setting.

% \begin{definition}
%   A code $\CC \subseteq \Fqm^n$ is said to {\em split} if
%   it has a generator matrix of the form
%   \[
%     \begin{pmatrix}
%       \bfG_1 & \bf0 \\  \bf0 & \bfG_2
%     \end{pmatrix}\bfQ,
%   \]
%   for some matrices
%   $\bfG_1 \in \Mspace{a,b}{\Fqm}, \bfG_2 \in \Mspace{k-a, n-b}{\Fqm}$
%   and $\bfQ \in \GL{n}{\Fq}$.
% \end{definition}

% \begin{remark}
%   Regarding the code as a space of matrices, being split means that
%   the code is the direct sum of two subcodes whose row supports ({\em
%     i.e.} the sum of the row spaces of their elements) are in direct
%   sum. This is the rank metric counterpart of Hamming codes which are
%   the direct sum of two subcodes with disjoint Hamming supports.
% \end{remark}

\begin{proposition}\label{prop:stab_split}
  Let $\CC$ be an $\Fqm$--linear code of length $n + \lambda$ and
  dimension $K$ with a generator matrix of the
  shape~(\ref{eq:gm_Lambda_i}), {\em i.e.}
  \[
    \begin{pmatrix}
      \bfG_1 & \bf0 \\  \bf0 & \bfG_2
    \end{pmatrix},
  \]
  with $\bfG_1 \in \Mspace{k_1,\lambda}{\Fqm}$ for some integer $k_1$
  and $\bfG_2 \in \Mspace{k_2, n}{\Fqm}$ for some integer $k_2$ so
  that $k_1 + k_2 = K$.  Denote by $\CC_1$ and $\CC_2$ the codes with
  respective generator matrices $\bfG_1$ and $\bfG_2$. Then any
  $\bfM \in \stabr{\CC}$ has the shape
  \[
    \bfM =
    \begin{pmatrix}
      \bfA & \bfB \\ \bfC & \bfD
    \end{pmatrix},
  \]
  where $\bfA \in \stabr{\CC_1}$,
  $\bfB \in \cond{\CC_2}{\CC_1}$,
  $\bfC \in \cond{\CC_1}{\CC_2}$ and
  $\bfD \in \stabr{\CC_2}$. \qed
\end{proposition}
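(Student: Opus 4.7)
The plan is a direct decomposition argument exploiting the block--diagonal shape of $\bfG$. The key observation is that, since $\bfG$ is block--diagonal, a vector $(\bfu,\bfv) \in \Fqm^{\lambda} \times \Fqm^{n}$ lies in $\CC$ if and only if $\bfu \in \CC_1$ and $\bfv \in \CC_2$. In particular $\CC$ contains every vector of the form $(\bfc_1, \bfzero)$ with $\bfc_1 \in \CC_1$ and every vector of the form $(\bfzero, \bfc_2)$ with $\bfc_2 \in \CC_2$, and membership of an arbitrary pair $(\bfu,\bfv)$ in $\CC$ is decided coordinate--block by coordinate--block.

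Given $\bfM \in \stabr{\CC}$, I would write it in the $2\times 2$ block form
\[
  \bfM \;=\; \begin{pmatrix} \bfA & \bfB \\ \bfC & \bfD \end{pmatrix}
\]
compatible with the partition $n+\lambda = \lambda + n$ of rows and columns. Feeding a codeword $(\bfc_1, \bfzero)$ with $\bfc_1 \in \CC_1$ into the stabiliser condition yields $(\bfc_1,\bfzero)\bfM = (\bfc_1 \bfA,\, \bfc_1 \bfB) \in \CC$; by the observation above this forces $\bfc_1 \bfA \in \CC_1$ and $\bfc_1 \bfB \in \CC_2$ simultaneously. Letting $\bfc_1$ range over $\CC_1$ then gives $\bfA \in \stabr{\CC_1}$ for the diagonal block and the corresponding conductor membership for the off--diagonal block. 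The entirely symmetric computation applied to codewords of the form $(\bfzero, \bfc_2)$ with $\bfc_2 \in \CC_2$ produces $(\bfc_2 \bfC,\, \bfc_2 \bfD) \in \CC$, and the same dichotomy yields $\bfD \in \stabr{\CC_2}$ and the conductor condition on the remaining block.

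There is essentially no obstacle: the proof is a mechanical unpacking of the definitions of $\stabr{\cdot}$ and $\cond{\cdot}{\cdot}$, which is presumably why the statement is closed with \qed. The only point deserving care is bookkeeping with the block dimensions, since an element of $\cond{\CC'}{\DC'}$ has its number of rows dictated by the length of $\CC'$ and its number of columns by the length of $\DC'$; this forces the off--diagonal conductor labels and is what guarantees that the four assertions fit together consistently.
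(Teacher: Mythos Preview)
Your argument is correct and essentially identical to the paper's own proof: feed $(\bfc_1,\bfzero)$ and $(\bfzero,\bfc_2)$ through $\bfM$, use the block--diagonal shape of $\CC$ to split membership, and read off the four block conditions. Your closing remark on dimensions is well taken---the paper's proof in fact yields $\bfB \in \cond{\CC_1}{\CC_2}$ and $\bfC \in \cond{\CC_2}{\CC_1}$, so the off--diagonal conductor labels in the statement are swapped, exactly as your bookkeeping suggests.
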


\begin{proof}
  Let $\bfc_1 \in \CC_1$, then $(\bfc_1\ \bf0) \in \CC$ and by
  definition of $\bfM$,
  $(\bfc_1\ \bf0) \bfM = (\bfc_1 \bfA\ \bfc_1 \bfB) \in \CC$.  By definition
  of $\CC$, we have $\bfc_1 \bfA \in \CC_1$ and
  $\bfc_1\bfB \in \CC_2$. Since the previous assertions hold for any
  $\bfc_1 \in \CC_1$, then we deduce that $\bfA \in \stabr{\CC_1}$
  and $\bfB \in \cond{\CC_1}{\CC_2}$.
  
  The result for $\bfC, \bfD$ is obtained in the same way
  by considering $(\bf0\ \bfc_2) \bfM$ for $\bfc_2 \in \CC_2$. \qed
\end{proof}

Consequently considering the generator
matrix~(\ref{eq:recall_decomp_mat}) of $\Lambda_i(\Cpub)$, elements of
$\stabr{\Lambda_i(\Cpub)}$ have the shape
\begin{equation}\label{eq:shape_of_stab}
  \begin{pmatrix}
    \bfA & \bfB \\ \bfC & \bfD
  \end{pmatrix},
\end{equation}
where $\bfA \in \stabr{\CC_{\bfY}}$ ($\CC_{\bfY}$ being the code with
generator matrix $\bfY$),
$\bfB \in \cond{\Lambda_i(\Csec)}{\CC_{\bfY}}$,
$\bfC \in \cond{\CC_{\bfY}}{\Lambda_i(\Csec)}$ and
$\bfD \in \stabr{\Lambda_i(\Csec)}$.

Here again, we claim that  is very likely that the stabilizer algebras
of $\CC_{\bfY}$ and $\Lambda_i(\Cpub)$ are trivial, \ie contain only scalar
multiples of the identity matrix and that the conductors
$\cond{\CC_{\bfY}}{\Lambda_i(\Csec)}$ and $\stabr{\Lambda_i(\Csec)}$
are zero. This claim is discussed further in \S~\ref{ss:ev2}.

In such a situation, we have:
\begin{equation}\label{eq:stabr_struct}
  \stabr{\Lambda_i(\Cpub)} = \left\{
    \bfP^{-1}
    \begin{pmatrix}
      a \bfI_{\lambda} & \bfzero \\ \bfzero & b \bfI_n
    \end{pmatrix}\bfP
    ~\bigg|~ a,b \in \Fq
  \right\}.
\end{equation}
Hence this algebra has dimension $2$ and the calculation of the matrix
\begin{equation}\label{eq:the_very_good_idempotent}
\bfP^{-1}     \begin{pmatrix}
     \bfzero & \bfzero \\ \bfzero & \bfI_n
    \end{pmatrix}\bfP
  \end{equation}
  can be performed as follows.

  \begin{enumerate}
  \item First extract a singular matrix of
    $\stabr{\Lambda_i(\Cpub)}$. For that, take $\bfU, \bfV$ a basis of
    $\stabr{\Lambda_i(\Cpub)}$. If $\bfV$ is singular we are
    done. Otherwise, compute a root of the univariate polynomial
    $\det (\bfU + X \bfV)$. This yields a singular element $\bfR$ of
    $\stabr{\Lambda_i(\Cpub)}$ corresponding either to $a = 0$ or
    $b = 0$ in the description~(\ref{eq:stabr_struct}).

\item  Next, rescale $\bfR$ as $\nu \bfR$ in order to get an idempotent
  element.  If the obtained idempotent has rank $n$ set
  $\bfF = \nu \bfR$, otherwise (it will have rank $\lambda$), set
  $\bfF = \bfI_{n+\lambda} - \nu \bfR$.
\end{enumerate}

The obtained matrix $\bfF$ is nothing but the target
matrix in (\ref{eq:the_very_good_idempotent}). Therefore, one can even
skip the proof of Proposition~\ref{prop:almost_the_secret} and observe
that the code $\Cpub \bfF$ will be {\bf exactly} the code with
generator matrix
\[
  \left( \bfzero ~|~ \Gsec \right) \bfP.
\]

\subsection{Complexity}
Considering the previous simple case which remains very likely, we analyze the
cost of the various computation steps.
\begin{itemize}
\item The computation of $\Lambda_i(\Cpub)$ can be done by iterating
  $i$ successive Gaussian eliminations (we assume that raising an
  element of $\Fqm$ to the $q$--th power can be done for free, for
  instance by representing $\Fqm$ with a normal basis).
  Thus, a cost $O(i n^{\omega})$ operations in $\Fqm$ and hence
  $O(i m^2 n^\omega)$ operations in $\Fq$. Here, $\omega$
  denotes the usual exponent for the cost of the product of two
  $n \times n$ matrices.
\item The computation of $\stabr{\Lambda_i(\Cpub)}$ is done by solving
  the linear system (\ref{eq:compute_stab}). The system has $n^2$
  unknowns in $\Fq$ and $k_i (n-k_i) = O(n^2)$ equations in $\Fqm$ and
  hence $O(mn^2)$ equations in $\Fq$. This yields a cost of
  $O(m n^{2\omega})$ operations in $\Fq$ (see
  \cite[Thm.~8.6]{BCGLLSS17} for the complexity of the resolution of a
  non square linear system).
\end{itemize}
In the aforementioned simple case, the remaining operations are negligible
compared to the calculation of the stabilizer algebra, which turns out
to be the bottleneck of the calculation. This overall cost is hence in
\[
  O(mn^{2\omega})\ \ \textrm{operations\ in\ }\Fq.
\]

\subsection{Discussion about the claims on conductors and
  stabilizers}\label{ss:ev2}
Back to the description (\ref{eq:shape_of_stab}) of the elements of
$\stabr{\Lambda_i(\Cpub)}$. Let us discuss the validity of the claim.

\subsubsection{Conductors are likely to be zero.}
Let $\bfC \in \cond{\CC_{\bfY}}{\Lambda_i(\Csec)}$, then the code
$\CC_{\bfY} \bfC$ is a subcode of $\Lambda_i(\Csec)$ and one proves
easily that any element of $\CC_{\bfY} \bfC$ has a row support
contained in the row space of $\bfC$. Since
$\bfC \in \Mspace{\lambda, n}{\Fq}$, its rank is at most equal to
$\lambda$ and hence the code $\CC_{\bfY} \bfC$ has a row space
contained in a space of dimension $\leq \lambda$.  It seems unlikely
that the code $\Lambda_t(\Csec)$ contains such a space. In particular,
this cannot happen if the minimum distance of $\Lambda_i(\Csec)$
exceeds $\lambda$.

Now, consider $\bfB \in \cond{\Lambda_i(\Csec)}{\CC_{\bfY}}$. Suppose
first that $\bfB$ has full rank.  Since
$\dim (\Lambda_i(\Csec))\gg \lambda$, the code
$\Lambda_i (\Csec) \bfB$ is likely to be equal to $\Fqm^\lambda$ and
hence cannot be contained in $\CC_{\bfY}$, a contradiction.  If $\bfB$
has not full rank, then, the code $\Lambda_i (\Csec) \bfB$ is likely
to be equal to the subspace of $\Fqm^\lambda$ of all the vectors whose
row support is in the row space of $\bfB$ and we can assume that
$\CC_{\bfY}$ has no such subspace. Indeed, if it did, it would entail
that $\CC_{\bfY}^\perp$ (and hence $\Lambda_i(\Cpub)^\perp$ too) would
have a parity-check matrix of the form $(\bf0 ~|~ \bfH')(\bfP^{-1})^{\top}$
as in \eqref{eq:zeroblock}. Details are left to the reader.

\subsubsection{Stabilizers are likely restrict to scalar matrices.}
For $\CC_{\bfY}$, this code is close to be random and random codes
have trivial stabilizer algebras with a high probability.

For $\Lambda_i(\Csec)$ the right stabilizer algebra might be a larger
one.  Indeed, regarding the proof of
Proposition~\ref{prop:distinguish_twisted} (see \cite[Thm.~4]{PRW18})
we can see that $\Lambda_{t}(\CC)$ is a code generated by the
evaluations of $q$-monomials and such a code, when $n=m$ has a right
stabilizer algebra equal to a matrix representation of $\Fqm$. This is
a consequence of the fact that an $\Fqm$--space spanned by
$q$--monomials is $\Fqm$--linear on the left but also on the right.
Thus, $\stabr{\Lambda_i(\Csec)}$ might be such a larger algebra. In
this situation, the calculation of a decomposition of the identity into
orthogonal idempotents is slightly more complicated but remains
definitely possible in polynomial time using Friedl Rony\'ai algorithms.

\section{Don't twist again}{\label{sec:dontTwist}}
In this section we first show that, even for twisted Gabidulin codes,
the application of the $q$-sum operator allows to \emph{distinguish}
them from random codes. It is therefore possible to apply the attack
described in \S~\ref{sec:extension} to the GPT cryptosystem
instantiated with these codes.  In the first part of this section we
discuss the behaviour of raw twisted Gabidulin codes with respect to
the operator $\Lambda_i$ or equivalently, how the use of $\Lambda_i$
allows to distinguish them from random codes.  In the second part, we focus on $q$-operator applied to the corresponding
public key and we will prove that even in this case, we have a
generator matrix with a structure similar to
\eqref{eq:Lambda1XLambda1G} and that the corresponding codes
split. This allows us to apply the results of \S~\ref{sec:extension}.

\subsection{A distinguisher}\label{subsec:dist_twisted}
First, recall Propositions~\ref{prop:distinguish_twisted} and~\ref{prop:Lambdai(C)} about the dimension of the $q$-sum
operator applied respectively to twisted Gabidulin codes and to random
codes. In particular, recall that if $\CC$ is a random code, $\dim(\Lambda_i(\CC))=(i+1)k$ with high probability. Then, we remark that, if $i<\frac{n-k-\ell}{\ell+1}$

\begin{eqnarray}
  \dim(\Lambda_i(\TGab{\bfg}{\bft}{\bfh}{\bfeta}[n,k]))&=&k+i+\ell(i+1)<(i+1)k
  =\dim(\Lambda_i(\CC))\\ &\Longleftrightarrow& i>\ell/(k-\ell-1), \label{eq:distinguisherTwisted}
 \end{eqnarray}
where $\TGab{\bfg}{\bft}{\bfh}{\bfeta}[n,k]$ is a twisted Gabidulin
code (see \S~\ref{subsec:twisted}).

Thus, the inequality $i>\ell/(k-\ell-1)$  is satisfied by any positive $i$, if $k>2\ell+1$. We
notice that this is often the case if we consider a small number of
twists as in Table~\ref{tab:twistedPar}. This means that, even if the dimension of the $q$-sum applied to these codes is
greater than that of the $q$-sum of a Gabidulin code, we can however
still distinguish them for random codes.

Thus, this distinguisher can be exploited to construct an attack
against the GPT cryptosystem instantiated with twisted Gabidulin
codes, instead of classical ones.

% \begin{remark}\label{eval_sequence_recovery}
%   Even if there is no need for recovering the evaluation sequence of a
%   twisted Gabidulin code $\CC$ from its generator matrix. The
%   $\Lambda_i$ or the $q$--th power operator can be used for this
%   sake. As mentioned in \cite{PRW18}, the usual approach for Gabidulin
%   codes consisting in computing $\Lambda_{n-k-1}(\CC)$ and then take
%   an element of the dual fails.  But we can use other approaches. For
%   instance consider the twisted Gabidulin code with evaluation
%   sequence $\bfg$ and one twist $t_1 \geq k$ with hook $0$. This code
%   is spanned by
%   $\bfg, \bfg^{[1]}, \dots, \bfg^{[k-1]}, \bfg + \eta \bfg^{[k]}$.
%   One sees easily that $\Lambda_1 (\CC)$ is nothing but
%   $\Gab{k+2}{\bfg}$ whose evaluation sequence can be recovered by a
%   classical manner.  For more complicated twisted codes, it is
%   frequently possible to extract a (non twisted) Gabidulin code with
%   the same evaluation sequence by applying the operator $\Lambda_i$ of
%   computing intersections of $q$--th powers.
% \end{remark}

\subsection{The structure of $\Lambda_i(\GG_{Tpub})$}\label{subsec:structureLambdaiGtpub}
From now on, we consider the GPT cryptosystem instantiated with a
twisted Gabidulin code $\TGab{\bfg}{\bft}{\bfh}{\bfeta}[n,k]$ with the
parameters defined in Assumption~\ref{ass:parameters}. We denote by $\Gtpub$
the corresponding public key, obtained as \eqref{eq:Gpub} by just
replacing $\Gsec$ with a generator matrix $\bfGt$ (of the form \eqref{eq:
  twistedGenMatrix}) of the code
$\TGab{\bfg}{\bft}{\bfh}{\bfeta}[n,k]$ and by $\GG_{\text{Tpub}}$ the linear
code which has $\Gtpub$ as generator matrix. Again, as for the Gabidulin codes scheme, we can discard the matrix $\bfS$.

We now apply the $q$-sum operator to $\Gtpub$, and as
\eqref{eq:Lambdai(Gpub)}, we get
\[
\Lambda_i(\Gtpub)=[\Lambda_i(\bfX)|\Lambda_i(\bfGt)]\bfP,
\]
where
$\bfP\in \GL{n+\lambda}{\Fq}$ is the column scrambler.

Let $i<\frac{n-\ell-k}{l+1}$ and write $\bfX$ (as in \S~\ref{subsec:structureLambdaiGpub}) according to its rows.

Now, for simplicity we consider  that $\ell=1$, $\eta_1=1$ and $i=1$. Recall that the structure of $\bfGt$
is given in \eqref{eq: twistedGenMatrix}. Then, we have
\[
(\Lambda_1(\bfX)~|~\Lambda_1(\bfGt))=\left(\begin{array}{c|c}
	\bfx_0 &\bfg\\
	\bfx_1 &\bfg^{[1]}\\
	\vdots &\vdots\\
	\bfx_{h_1} &\bfg^{[h_1]}+\bfg^{[k-1+t_1]}\\
	\vdots &\vdots\\
	\bfx_{k-1} &\bfg^{[k-1]}\\
	\hline
	\bfx_0^{[1]} &\bfg^{[1]}\\
	\bfx_1^{[1]} &\bfg^{[2]}\\
	\vdots &\vdots\\
	\bfx_{h_1-1}^{[1]} &\bfg^{[h_1]}\\
	\bfx_{h_1}^{[1]} &\bfg^{[h_1+1]}+\bfg^{[k+t_1]}\\
	\vdots &\vdots\\
	\bfx_{k-2}^{[1]} &\bfg^{[k-1]}\\
	\bfx_{k-1}^{[1]} &\bfg^{[k]}\\
	
\end{array}\right)
\longrightarrow
\left(\begin{array}{c|c}
	\bfx_0 &\bfg\\
	\bfx_1 &\bfg^{[1]}\\
	\vdots &\vdots\\
	\bfx_{h_1-1} &\bfg^{[h_1-1]}\\
	\bfx_{h_1-1}^{[1]} & \bfg^{[h_1]}\\
	\bfx_{h_1+1} & \bfg^{[h_1+1]}\\
	\vdots &\vdots\\
	\bfx_{k-1} &\bfg^{[k-1]}\\
	\bfx_{k-1}^{[1]} &\bfg^{[k]}\\
	\hline
	\bfx_0^{[1]} &\bfg^{[1]}\\
	\bfx_1^{[1]} &\bfg^{[2]}\\
	\vdots &\vdots\\
	\bfx_{h_1-2}^{[1]} &\bfg^{[h_1-1]}\\
	\bfx_{h_1} &\bfg^{[h_1]}+\bfg^{[k-1+t_1]}\\
	\bfx_{h_1}^{[1]} &\bfg^{[h_1+1]}+\bfg^{[k+t_1]}\\
	\bfx_{h_1+1}^{[1]} &\bfg^{[h_1+2]}\\
	\vdots &\vdots\\
	\bfx_{k-2}^{[1]} &\bfg^{[k-1]}\\
\end{array}\right)
\]
where the second matrix is obtained by permuting the rows of the first
one. We now observe that the first block of the second matrix can be
rewritten as $[\tilde{\bfX'}|\Moore{k+1}{\bfg}]$ and
so, after performing row elimination, we get
\[
\left(\begin{array}{c|c}
	\tilde{\bfX'} & \Moore{k+1}{\bfg}\\
	\hline
	\bfx_{h_1}-\bfx_{h_1-1}^{[1]} &\bfg^{[k-1+t_1]}\\
	\bfx_{h_1}^{[1]}-\bfx_{h_1+1} &\bfg^{[k+t_1]}\\
	\hline
	\bfx_0^{[1]}-\bfx_1 &\bfzero\\
	\bfx_1^{[1]}-\bfx_2 &\bfzero\\
	\vdots &\vdots\\
	\bfx_{h_1-2}^{[1]}-\bfx_{h_1-1} &\bfzero\\
	\bfx_{h_1+1}^{[1]}-\bfx_{h_1+2} &\bfzero\\
	\vdots &\vdots\\
	\bfx_{k-2}^{[1]}-\bfx_{k-1} &\bfzero\\
\end{array}\right)
\]
Therefore, we have the following result.
\begin{lemma}\label{lem:lambdaiGtpub}
	Let $i<\frac{n-\ell-k}{\ell+1}$. Then, up to row elimination
			\begin{equation}\label{eq:LambdaiXLambdaiGTw}
	(\Lambda_i(\bfX)~|~\Lambda_i(\bfG_{T}))=	\begin{pmatrix}
			\bfY & \Lambda_i(\bfGt)\\
			\tilde{\bfX} &\bfzero\\
		\end{pmatrix}
	\end{equation}
	where, 
	\[
	\tilde{\bfX}= \left\{
	\begin{array}{ccccl}
		\begin{pmatrix}
			\bfX_T''\\
		\end{pmatrix}&\in& \Mspace{k-1-2\ell}{\Fqm}
		& \text{if} & i = 1\\
		\begin{pmatrix}
			\Lambda_{i-1}(\bfX_T'')\\
			\bfX'''
		\end{pmatrix}&\in& \Mspace{i(k-1-2\ell)+(i-1)\ell}{\Fqm} & \text{if} & i > 1
	\end{array}
	\right.
	\]
	$\bfY \in \Mspace{k+i+\ell(i+1), \lambda}{\Fqm}$ and the matrix $\bfX_T''$ is defined as,
	\begin{equation}\label{eq:Xt}
		\bfX_T''= \bfX^{[1]}_{\{0, \ldots, k-2\}\setminus\{h_i-1, h_i\mid 1\leq i\leq  \ell\}}-\bfX_{\{1, \ldots, k-1\}\setminus \{h_i, h_i+1\mid 1\leq i \leq \ell\}},
	\end{equation}
	where
	$\bfX^{[1]}_{\{0, \ldots, k-2\}\setminus\{h_i-1, h_i\mid 1\leq i\leq
		\ell\}}$ is a submatrix of $\bfX^{[1]}$ composed by the first $k-1$
	rows except all the $(h_i-1)$-th, $h_i$-th rows and
	$\bfX_{\{1, \ldots, k-1\}\setminus \{h_i, h_i+1\mid 1\leq i \leq \ell\}}$
	is a submatrix of  $\bfX$ determined by all the
	rows, starting from the second one, except the $h_i$-th,
	$h_i+1$-th ones. Finally, $\bfX'''\in \Mspace{i-1, \lambda}{\Fqm}$.
\end{lemma}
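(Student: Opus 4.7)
The plan is to generalize the explicit row reduction performed just above the statement for the case $\ell = 1$, $i = 1$, to arbitrary parameters, by a careful bookkeeping of which rows carry which power $\bfg^{[m]}$.

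First, I would index the $(i+1)k$ rows of $[\Lambda_i(\bfX)\mid\Lambda_i(\bfG_T)]$ by pairs $(j,s)$ with $j\in\{0,\dots,k-1\}$, $s\in\{0,\dots,i\}$, so that the right-hand side is $\bfg^{[j+s]}$ when $j\notin\{h_1,\dots,h_\ell\}$ and is $\bfg^{[h_r+s]}+\eta_r^{q^s}\bfg^{[k-1+t_r+s]}$ when $j=h_r$. Combining Assumption~\ref{ass:parameters} with the hypothesis $i<\frac{n-\ell-k}{\ell+1}$, one checks that the $\ell(i+1)$ ``twist exponents'' $k-1+t_r+s$ are pairwise distinct and strictly larger than $k+i-1$, so they do not collide with the ``Moore exponents'' $\{0,\dots,k+i-1\}$.

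Next, I would choose, for each $m\in\{0,\dots,k+i-1\}$, a canonical non-twist row whose right-hand side equals $\bfg^{[m]}$, namely $(m,0)$ if $m\in\{0,\dots,k-1\}\setminus\{h_r\}$, $(h_r-1,1)$ if $m=h_r$, and $(k-1,m-k+1)$ if $m\geq k$. Assumption~\ref{ass:parameters} (in particular $h_r-1,h_r+1\notin\{h_{r'}\}$ and $h_\ell<k-1$) guarantees that these rows are indeed non-twist. They contribute the $\Moore{k+i}{\bfg}$ block; subtracting the canonical representative of $\bfg^{[h_r+s]}$ from each twist row $(h_r,s)$ then produces $\ell(i+1)$ pure-twist rows of the form $[\ast,\eta_r^{q^s}\bfg^{[k-1+t_r+s]}]$. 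Together they form the top block $[\bfY\mid \Lambda_i(\bfG_T)]$ of (\ref{eq:LambdaiXLambdaiGTw}) of height $k+i+\ell(i+1)$, matching the dimension provided by Proposition~\ref{prop:distinguish_twisted}.

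Third, for every remaining non-canonical, non-twist row $(j,s)$ (that is, $s\geq 1$, $j\notin\{h_r\}$, and $(j,s)$ not already used as a canonical representative), the elimination of $\bfg^{[j+s]}$ produces a row of the shape $[\bfx_j^{[s]}-\bfx_{j+s}+(\text{correction terms}),\ \bfzero]$. The telescoping identity $\bfx_j^{[s]}-\bfx_{j+s}=\sum_{u=0}^{s-1}\bigl(\bfx_{j+u}^{[s-u]}-\bfx_{j+u+1}^{[s-u-1]}\bigr)$ expresses such a left-hand side as a combination of rows of $\Lambda_{i-1}(\bfX_T'')$, \emph{provided the telescoping indices $j+u$ avoid $\{h_r\}$}. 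When the telescoping crosses a $h_r$, the corresponding summand is not in $\bfX_T''$ (because those indices were deleted in the definition (\ref{eq:Xt})), and contributes a residual row which is collected into the additional block $\bfX'''$. A counting argument then shows that the total number of eliminated rows equals $i(k-1)-\ell(i+1)=i(k-1-2\ell)+(i-1)\ell$, which matches exactly the row counts given for $\Lambda_{i-1}(\bfX_T'')$ and $\bfX'''$ in the statement.

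The main obstacle is the bookkeeping in this last step: it requires tracking precisely, for each non-canonical pair $(j,s)$, how many of the telescoping summands fall inside $\bfX_T''$ versus how many cross a $h_r$ and land in $\bfX'''$. This is essentially combinatorial but a bit delicate; the cleanest presentation is probably to fix the elimination order $s=1,2,\dots,i$ and, inside each $s$, increasing $j$, and then to describe the image in $\Lambda_{i-1}(\bfX_T'')$ case-by-case depending on whether the interval $\{j,j+1,\dots,j+s-1\}$ meets the hooks. Once this bookkeeping is set up, equation (\ref{eq:LambdaiXLambdaiGTw}) follows directly; the case $i=1$ is a degenerate instance in which no telescoping is required and $\bfX'''$ is empty, recovering verbatim the computation preceding the lemma.
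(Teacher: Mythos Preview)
Your proposal is correct and follows the same approach as the paper: the paper's own proof is a single sentence stating that ``using the same elimination techniques as before, we can extend the proof to the case $\ell>1$, $\bfeta\in(\Fqm\setminus\{0\})^\ell$ and $i>1$'', so you have in fact written out in considerably more detail (the indexing by pairs $(j,s)$, the choice of canonical representatives, the telescoping, and the row count) what the paper leaves entirely implicit. One small remark: your count gives $\bfX'''$ a total of $(i-1)\ell$ rows, which is consistent with the stated size $i(k-1-2\ell)+(i-1)\ell$ of $\tilde{\bfX}$ but not with the line ``$\bfX'''\in\Mspace{i-1,\lambda}{\Fqm}$'' in the lemma; this is a typo in the statement (the two agree only when $\ell=1$), and your counting is the correct one.
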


\begin{proof}
	Using the same elimination techniques as before, we can extend the proof to the case $\ell>1$, $\bfeta\in (\Fqm\setminus\{0\})^\ell$ and $i>1$. \qed
\end{proof}

Even in this case, we show that it suffices to consider $i=1$ to attack the corresponding GPT scheme.
\subsection{Attacking the system for small i's.}\label{subsec:attackSmalliTw}
We now consider $i=1$. Then by Lemma~\ref{lem:lambdaiGtpub}, $(\Lambda_i(\bfX)~|~\Lambda_i(\bfG_{T}))$ can be transformed into
\[
\begin{pmatrix}
		\bfY & \Lambda_1(\bfGt)\\
		\bfX_{T}''&\bfzero\\
	\end{pmatrix}
\]
As in \S~\ref{subsec:attackForSmallI} (see Lemma~\ref{lem:LambdaiGpubblocks}), under some assumptions on the parameters, we can split the previous matrix into two blocks.
\begin{lemma}
	If $k\geq 4s+2\ell+1$, then, with a high probability,
	\begin{equation}\label{eq:lamba1GpubTblocks}
	\begin{pmatrix}
		\bf0 & \Lambda_1(\bfGt)\\
		\bfX_{T}''&\bfzero\\
	\end{pmatrix}
	\end{equation}
up to row eliminations.
\end{lemma}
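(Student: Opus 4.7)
The plan is to imitate the strategy of Lemma~\ref{lem:LambdaiGpubblocks}, but with an extra layer of bookkeeping for the twist hooks. The goal is to show that $\RowSp_{\Fqm}(\bfY)\subseteq \RowSp_{\Fqm}(\bfX_T'')$; once this inclusion is established, the rows of the top block $\bigl(\bfY ~|~ \Lambda_1(\bfGt)\bigr)$ can be cleared of their left portion by adding suitable combinations of rows of the bottom block $\bigl(\bfX_T'' ~|~ \bfzero\bigr)$, yielding the stated shape~(\ref{eq:lamba1GpubTblocks}).

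First I would observe that, by the very derivation preceding Lemma~\ref{lem:lambdaiGtpub}, each row of $\bfY$ is an $\Fqm$--linear combination of rows of $\bfX$ and $\bfX^{[1]}$, and each row of $\bfX_T''$ is a difference $\bfx_j^{[1]}-\bfx_{j+1}$. Consequently both $\RowSp_{\Fqm}(\bfY)$ and $\RowSp_{\Fqm}(\bfX_T'')$ are contained in $\RowSp_{\Fqm}(\Lambda_1(\bfX))$, whose dimension is at most $\min\{2s,\lambda\}$. The whole proof therefore reduces to showing that $\rank(\bfX_T'')=\min\{2s,\lambda\}$ with high probability, so that $\RowSp_{\Fqm}(\bfX_T'')=\RowSp_{\Fqm}(\Lambda_1(\bfX))$ swallows $\RowSp_{\Fqm}(\bfY)$.

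The key step is the extraction, from $\bfX_T''$, of a submatrix of size at least $2s\times\lambda$ that can be treated as uniformly random in $\Mspace{2s,\lambda}{\Fqm}$. Recall from~(\ref{eq:Xt}) that the rows of $\bfX_T''$ are indexed by
\[
j\in\{0,\ldots,k-2\}\setminus\{h_i-1,h_i:1\leq i\leq\ell\},
\]
which gives exactly $k-1-2\ell$ rows. Selecting every other admissible $j$ produces a submatrix in which the rows $\bfx_j^{[1]}-\bfx_{j+1}$ involve pairwise disjoint pairs $\{\bfx_j,\bfx_{j+1}\}$ of rows of $\bfX$, so that the entries behave as independent uniform samples in $\RowSp_{\Fqm}(\Lambda_1(\bfX))$. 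The spacing condition $|h_i-h_{i-1}|>1$ of Assumption~\ref{ass:parameters} is precisely what allows this alternating selection to skip every forbidden pair $\{h_i-1,h_i\}$ without being derailed. The hypothesis $k\geq 4s+2\ell+1$ then guarantees that the selection retains at least $\lfloor(k-1-2\ell)/2\rfloor\geq 2s$ rows. Applying Proposition~\ref{prop:Lambdai(C)} to this submatrix yields $\rank(\bfX_T'')\geq\min\{2s,\lambda\}$ with probability $1-O(q^{-m})$, and the converse inequality is automatic.

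The main obstacle is the combinatorial bookkeeping around the hooks: one must check carefully that the alternating choice of indices both avoids $\{h_i-1,h_i\}$ for every $i$ and leaves the resulting differences involving disjoint pairs of rows of $\bfX$, so that the genericity argument of Proposition~\ref{prop:Lambdai(C)} applies. Everything else is essentially a copy of the argument used in the Gabidulin case (Lemma~\ref{lem:LambdaiGpubblocks}), with the extra $2\ell$ forbidden positions compensated exactly by the extra $2\ell$ in the hypothesis $k\geq 4s+2\ell+1$. As in the untwisted case, the bound $4s+2\ell+1$ is conservative and the conclusion should in fact hold for somewhat smaller $k$.
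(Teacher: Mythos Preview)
Your proposal is correct and follows essentially the same approach as the paper: reduce to showing $\RowSp_{\Fqm}(\bfX_T'')=\RowSp_{\Fqm}(\Lambda_1(\bfX))$ by extracting an alternating submatrix of $\bfX_T''$ with at least $\lfloor (k-1-2\ell)/2\rfloor\geq 2s$ rows, arguing it is generic, and invoking Proposition~\ref{prop:Lambdai(C)}. Your write-up is in fact more explicit than the paper's, notably in checking that the alternating selection interacts correctly with the forbidden indices $\{h_i-1,h_i\}$ (via the spacing condition of Assumption~\ref{ass:parameters}) so that the selected differences involve disjoint pairs of rows of~$\bfX$.
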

\begin{proof}
  The proof is analogous to the proof of
  Lemma~\ref{lem:LambdaiGpubblocks}.  First we prove that
  $\RowSp_{\Fqm}(\bfX_{T}'')=\RowSp_{\Fqm}(\Lambda_1(\bfX))$ with a
  high probability. Again, we consider the submatrix of $\bfX_{T}''$
  in $\Mspace{\lfloor\frac{k-1-2\ell}{2}\rfloor}{\Fqm}$ obtained by
  alternatively selecting rows of $\bfX_{T}''$. This matrix is
  uniformly random and by Proposition~\ref{prop:Lambdai(C)}, if
  $\frac{k-1-2\ell}{2}\geq 2s$ (which is true by assumption), it has
  rank equal to the rank of $\Lambda_1(\bfX)$ with a high
  probability. Thus the equality
  $\RowSp_{\Fqm}(\bfX_{T}'')=\RowSp_{\Fqm}(\Lambda_1(\bfX))$ holds.
  
  The result follows by noting that
  $\RowSp_{\Fqm}(\bfY)\subseteq \RowSp_{\Fqm}(\Lambda_1(\bfX))$. \qed
\end{proof}
Therefore we can apply the attack of \S~\ref{sec:extension} in order to break the corresponding GPT cryptosystem.

\begin{remark}
Notice that, if $\rank(\bfX)=s\geq \lambda/2$, then $\rank(\bfX_{T})=\lambda$ with high probability and we can apply the Overbeck's attack to this scheme.  In fact, in this case (as in \S~\ref{subsec:Overbeck}), $\dim(\Lambda_1(\GG_{Tpub})^{\perp})=n-k-1-2\ell$, and so the code $\Lambda_1(\GG_{Tpub})$  admits a parity check matrix whose first $\lambda$ columns are $\bf0$. We can then compute a valid column scrambler and attack the system.

\smallskip More generally, we can apply this attack to any
$i< \frac{n-\ell-k}{\ell+1}$ for which \[\rank(\tilde{\bfX})=\lambda,\]
where $\tilde{\bfX}$ is defined in Lemma~\ref{lem:lambdaiGtpub}.
\end{remark}

\section*{Conclusion}
In this paper, we present new observations on the decoding of
Gabidulin codes. These allow us to introduce a decoder for twisted
Gabidulin codes up to a certain threshold, which may be less than half of the
minimum distance.

We then propose an extension of the Overbeck's attack on GPT-like
systems instantiated on Gabidulin or related codes such as twisted
Gabidulin codes.  This attack is efficient as soon as the secret code
$\Lambda_i(\Csec)$ has a small dimension compared to the dimension of $\Lambda_i(\CC)$, where $\CC$ is a random code.  One of the interesting things about our approach is that it succeeds
even when the distortion matrix has a low rank, which might cause the
Overbeck's attack fails. Our attack extension allows to break the proposal of
\cite{PRW18}.

%%% Local Variables:
%%% mode: latex
%%% TeX-master: "../Article"
%%% End:

%\input{src/misc}

%\newpage

\bibliographystyle{splncs04}
\bibliography{src/codecrypto,src/local_biblio}

\begin{thebibliography}{10}
\providecommand{\url}[1]{\texttt{#1}}
\providecommand{\urlprefix}{URL }
\providecommand{\doi}[1]{https://doi.org/#1}

\bibitem{AABBBBCDGHZ20}
{Aguilar Melchor}, C., Aragon, N., Bettaieb, S., Bidoux, L., Blazy, O., Bros,
  M., Couvreur, A., Deneuville, J.C., Gaborit, P., Z{\'e}mor, G., Hauteville,
  A.: Rank quasi cyclic {(RQC)}. Second Round submission to NIST Post-Quantum
  Cryptography call (Apr 2020), \url{https://pqc-rqc.org}

\bibitem{ABDGHRTZABBBO19}
Aragon, N., Blazy, O., Deneuville, J.C., Gaborit, P., Hauteville, A., Ruatta,
  O., Tillich, J.P., Z{\'e}mor, G., {Aguilar Melchor}, C., Bettaieb, S.,
  Bidoux, L., Bardet, M., Otmani, A.: {ROLLO} (merger of {Rank-Ouroboros, LAKE
  and LOCKER}). Second round submission to the NIST post-quantum cryptography
  call (Mar 2019), \url{https://pqc-rollo.org}

\bibitem{ABGHZ19}
Aragon, N., Blazy, O., Gaborit, P., Hauteville, A., Z{\'{e}}mor, G.: Durandal:
  a rank metric based signature scheme. In: Advances in Cryptology -
  {EUROCRYPT} 2019 - 38th Annual International Conference on the Theory and
  Applications of Cryptographic Techniques, Darmstadt, Germany, May 19-23,
  2019, Proceedings, Part {III}. LNCS, vol. 11478, pp. 728--758. Springer
  (2019). \doi{10.1007/978-3-030-17659-4\_25},
  \url{https://doi.org/10.1007/978-3-030-17659-4\_25}

\bibitem{AGHT18}
Aragon, N., Gaborit, P., Hauteville, A., Tillich, J.P.: A new algorithm for
  solving the rank syndrome decoding problem. In: 2018 {IEEE} International
  Symposium on Information Theory, {ISIT} 2018, Vail, CO, USA, June 17-22,
  2018. pp. 2421--2425. IEEE (2018). \doi{10.1109/ISIT.2018.8437464}

\bibitem{BC18}
Barelli, E., Couvreur, A.: An efficient structural attack on {NIST} submission
  {DAGS}. In: Peyrin, T., Galbraith, S. (eds.) Advances in Cryptology -
  ASIACRYPT'18. LNCS, vol. 11272, pp. 93--118. Springer (Dec 2018)

\bibitem{BBPR18}
Beelen, P., Bossert, M., Puchinger, S., Rosenkilde, J.: Structural properties
  of twisted {R}eed--{S}olomon codes with applications to cryptography. In:
  2018 IEEE International Symposium on Information Theory (ISIT). pp. 946--950
  (2018). \doi{10.1109/ISIT.2018.8437923}

\bibitem{BPR17}
Beelen, P., Puchinger, S., Rosenkilde~né Nielsen, J.: Twisted
  {R}eed--{S}olomon codes. In: 2017 IEEE International Symposium on Information
  Theory (ISIT). pp. 336--340 (2017). \doi{10.1109/ISIT.2017.8006545}

\bibitem{BC21}
Bombar, M., Couvreur, A.: Decoding supercodes of {G}abidulin codes and
  applications to cryptanalysis. In: Cheon, J.H., Tillich, J.P. (eds.)
  Post-Quantum Cryptography. pp. 3--22. Springer International Publishing, Cham
  (2021)

\bibitem{BCGLLSS17}
Bostan, A., Chyzak, F., Giusti, M., Lebreton, R., Lecerf, G., Salvy, B.,
  Schost, E.: Algorithmes Efficaces en Calcul Formel. Fr\'ed\'eric Chyzak
  (auto-\'edit.), Palaiseau (Sep 2017),
  \url{https://hal.archives-ouvertes.fr/AECF/}

\bibitem{CC19}
Coggia, D., Couvreur, A.: On the security of a {L}oidreau's rank metric code
  based encryption scheme. In: {WCC 2019 - Workshop on Coding Theory and
  Cryptography}. Saint Jacut de la mer, France (2019)

\bibitem{CC20}
Coggia, D., Couvreur, A.: On the security of a {L}oidreau's rank metric code
  based encryption scheme. Des. Codes Cryptogr.  \textbf{88},  1941--1957
  (2020)

\bibitem{CDG20}
Couvreur, A., Debris-Alazard, T., Gaborit, P.: {On the hardness of code
  equivalence problems in rank metric} (Nov 2020),
  \url{https://hal.archives-ouvertes.fr/hal-02997801}, working paper or
  preprint

\bibitem{CMP17}
Couvreur, A., M{\'a}rquez-Corbella, I., Pellikaan, R.: Cryptanalysis of
  {M}c{E}liece cryptosystem based on algebraic geometry codes and their
  subcodes. IEEE Trans. Inform. Theory  \textbf{63}(8),  5404--5418 (8 2017)

\bibitem{COT17}
Couvreur, A., Otmani, A., Tillich, J.P.: Polynomial time attack on wild
  {M}c{E}liece over quadratic extensions. IEEE Trans. Inform. Theory
  \textbf{63}(1),  404--427 (1 2017)

\bibitem{DK94}
Drodz, Y.A., Kirichenko, V.V.: Finite dimensional algebras. Springer--Verlag
  Berlin Heidelberg (1994), original Russian edition published by: Publisher of
  Kiev State University, Kiev 1980, Translated by V.~Dlab

\bibitem{FR85}
Friedl, K., R\'{o}nyai, L.: Polynomial time solutions of some problems of
  computational algebra. In: Proceedings of the Seventeenth Annual ACM
  Symposium on Theory of Computing. pp. 153--162. STOC '85, Association for
  Computing Machinery, New York, NY, USA (1985). \doi{10.1145/22145.22162},
  \url{https://doi.org/10.1145/22145.22162}

\bibitem{GRH09}
Gabidulin, E., Rashwan, H., Honary, B.: On improving security of {GPT}
  cryptosystems. In: Proc. IEEE Int. Symposium Inf. Theory - ISIT. pp.
  1110--1114. IEEE (2009)

\bibitem{G85}
Gabidulin, E.M.: Theory of codes with maximum rank distance. Problemy Peredachi
  Informatsii  \textbf{21}(1),  3--16 (1985)

\bibitem{G93}
Gabidulin, E.M.: Public-key cryptosystems based on linear codes over large
  alphabets : efficiency and weakness. In: Farrell, P.G. (ed.)
  4\textsuperscript{th} IMA conference on cryptography and coding, the
  Institute of Mathematics and its Applications. pp. 17--31 (1993)

\bibitem{G08}
Gabidulin, E.M.: Attacks and counter-attacks on the {GPT} public key
  cryptosystem. Des. Codes Cryptogr.  \textbf{48}(2),  171--177 (2008)

\bibitem{GO01}
Gabidulin, E.M., Ourivski, A.V.: Modified {GPT} {PKC} with right scrambler.
  Electron. Notes Discrete Math.  \textbf{6},  168--177 (2001).
  \doi{10.1016/S1571-0653(04)00168-4},
  \url{http://dx.doi.org/10.1016/S1571-0653(04)00168-4}

\bibitem{GPT91}
Gabidulin, E.M., Paramonov, A.V., Tretjakov, O.V.: Ideals over a
  non-commutative ring and their applications to cryptography. In: Advances in
  Cryptology - EUROCRYPT'91. pp. 482--489. No.~547 in LNCS, Brighton (Apr 1991)

\bibitem{GMRZ13}
Gaborit, P., Murat, G., Ruatta, O., Z{\'e}mor, G.: Low rank parity check codes
  and their application to cryptography. In: Proceedings of the Workshop on
  Coding and Cryptography WCC'2013. Bergen, Norway (2013),
  \url{www.selmer.uib.no/WCC2013/pdfs/Gaborit.pdf}

\bibitem{GRS16}
Gaborit, P., Ruatta, O., Schrek, J.: On the complexity of the rank syndrome
  decoding problem. IEEE Trans. Inform. Theory  \textbf{62}(2),  1006--1019
  (2016)

\bibitem{G22}
Ghatak, A.: Extending {C}oggia-{C}ouvreur attack on {L}oidreau's rank-metric
  cryptosystem. Des. Codes Cryptogr.  \textbf{90},  215--238 (2022)

\bibitem{G95}
Gibson, K.: Severely denting the {Gabidulin} version of the {McEliece} public
  key cryptosystem. Des. Codes Cryptogr.  \textbf{6}(1),  37--45 (1995)

\bibitem{G96}
Gibson, K.: The security of the {Gabidulin} public key cryptosystem. In:
  Maurer, U. (ed.) Advances in Cryptology - EUROCRYPT '96. LNCS, vol.~1070, pp.
  212--223. Springer (1996)

\bibitem{G96b}
Goss, D.: Basic Structures of Function Field arithmetic, Ergebnisse der
  Mathematik und ihrer Grenzgebiete (3) [Results in Mathematics and Related
  Areas (3)], vol.~35. Springer-Verlag, Berlin (1996)

\bibitem{KL20}
Kadir, W.K., Li, C.: On decoding additive generalized twisted {G}abidulin
  codes. Cryptography Commun.  \textbf{12}(5),  987–1009 (2020).
  \doi{10.1007/s12095-020-00449-9}

\bibitem{KLZ21}
Kadir, W.K., Li, C., Zullo, F.: On interpolation-based decoding of a class of
  maximum rank distance codes (2021), \url{https://arxiv.org/abs/2105.03115}

\bibitem{L19}
Li, C.: Interpolation-based decoding of nonlinear maximum rank distance codes.
  In: 2019 IEEE International Symposium on Information Theory (ISIT). pp.
  2054--2058 (2019). \doi{10.1109/ISIT.2019.8849472}

\bibitem{LK19}
Li, C., Kadir, W.K.: On decoding additive generalized twisted gabidulin codes.
  In: Proceedings of the International Workshop on Coding and Cryptography WCC
  2019 (2019)

\bibitem{L06a}
Loidreau, P.: A {W}elch--{B}erlekamp like algorithm for decoding {G}abidulin
  codes. In: Ytrehus, {\O}. (ed.) Coding and Cryptography. pp. 36--45. Springer
  Berlin Heidelberg, Berlin, Heidelberg (2006)

\bibitem{L10}
Loidreau, P.: Designing a rank metric based {McEliece} cryptosystem. In:
  Sendrier, N. (ed.) Post-Quantum Cryptography~2010. LNCS, vol.~6061, pp.
  142--152. Springer (2010)

\bibitem{M78}
McEliece, R.J.: A Public-Key System Based on Algebraic Coding Theory, pp.
  114--116. Jet Propulsion Lab (1978), dSN Progress Report 44

\bibitem{O33}
Ore, {\O}.: On a special class of polynomials. Trans. Amer. Math. Soc.
  \textbf{35}(3),  559--584 (1933)

\bibitem{OTN16}
Otmani, A., Tal{\'{e}}-Kalachi, H., Ndjeya, S.: Improved cryptanalysis of rank
  metric schemes based on {G}abidulin codes. CoRR  \textbf{abs/1602.08549}
  (2016), \url{http://arxiv.org/abs/1602.08549}

\bibitem{OG03}
Ourivski, A.V., Gabidulin, E.M.: Column scrambler for the {GPT} cryptosystem.
  Discrete applied Math.  \textbf{128}(1),  207--221 (2003), international
  Workshop on Coding and Cryptography (WCC2001).

\bibitem{O05}
Overbeck, R.: A new structural attack for {GPT} and variants. In: Mycrypt.
  LNCS, vol.~3715, pp. 50--63 (2005)

\bibitem{O08}
Overbeck, R.: Structural attacks for public key cryptosystems based on
  {Gabidulin} codes. J. Cryptology  \textbf{21}(2),  280--301 (2008)

\bibitem{BL22}
Pham, B., Loidreau, P.: An analysis of {C}oggia-{C}ouvreur attack on
  {L}oidreau's rank-metric public-key encryption scheme in the general case.
  In: {WCC 2022 : Twelth International Workshop on Coding and Cryptography}
  (2022),
  \url{https://www.wcc2022.uni-rostock.de/storages/uni-rostock/Tagungen/WCC2022/Papers/WCC_2022_paper_38.pdf}

\bibitem{PRS17a}
Puchinger, S., Rosenkilde~né Nielsen, J., Sheekey, J.: Further generalisations
  of twisted {G}abidulin codes. In: {WCC 2017 - Workshop on Coding Theory and
  Cryptography} (2017), \url{https://arxiv.org/abs/1703.08093}

\bibitem{PRW18}
Puchinger, S., Renner, J., Wachter{-}Zeh, A.: Twisted {G}abidulin codes in the
  {GPT} cryptosystem (2018), \url{http://arxiv.org/abs/1806.10055},
  arXiv:1806.10055

\bibitem{R17}
Randrianarisoa, T.: A decoding algorithm for rank metric codes. preprint
  (2017), \url{https://arxiv.org/abs/1712.07060}

\bibitem{RR17}
Randrianarisoa, T., Rosenthal, J.: A decoding algorithm for twisted {G}abidulin
  codes. In: 2017 IEEE International Symposium on Information Theory. p.
  2771–2774 (2017). \doi{10.1109/ISIT.2017.8007034}

\bibitem{RGH11}
Rashwan, H., Gabidulin, E., Honary, B.: Security of the {GPT} cryptosystem and
  its applications to cryptography. Security and Communication Networks
  \textbf{4}(8),  937--946 (2011)

\bibitem{R90}
R{\'o}nyai, L.: Computing the structure of finite algebras. J. Symbolic Comput.
   \textbf{9}(3),  355--373 (1990)

\bibitem{S16a}
Sheekey, J.: A new family of linear maximum rank distance codes. Adv. Math.
  Commun.  \textbf{10}(3),  475--488 (2016)

\end{thebibliography}

\end{document}